\newtheorem{theorem}{Theorem}
\newtheorem{proposition}[theorem]{Proposition}
\newtheorem{lemma}{Lemma}
\newtheorem{definition}{Definition}
\newtheorem{example}{Example}
\newtheorem{remark}{Remark}
\DeclareMathOperator*{\argmin}{arg\,min}
\newcommand{\Sp}[1]{Sp}
\newcommand{\D}{\mathcal{D}}
\newcommand{\R}{\mathbb{R}}
\algrenewcommand\algorithmicrequire{\textbf{Input:}}
\algrenewcommand\algorithmicensure{\textbf{Output:}}
\newcommand\Tau{\mathcal{T}}
\newcommand{\ttheta}{\pmb{\theta}}
\title{DUST: A Duality-Based Pruning Method\\ For Exact Multiple Change-Point Detection}
\author[1]{Vincent Runge\thanks{Corresponding author: vincent.runge@univ-evry.fr}}
\author[2]{Charles Truong}
\author[3,4]{Simon Querné}
\affil[1]{Université Paris-Saclay, CNRS, Univ Evry, Laboratoire de Mathématiques et Modélisation d'Evry, 91037, Evry-Courcouronnes, France.}
\affil[2]{Centre Borelli, Universit\'e Paris-Saclay, CNRS, ENS Paris-Saclay, 4 avenue des Sciences,  91190, Gif-sur-Yvette, France}
\affil[3]{Laboratoire de mathématiques de Versailles, Université Paris-Saclay, UVSQ, CNRS, 45 avenue des États-Unis,78000, Versailles, France}
\affil[4]{IFPEN, 1-4 Av. du Bois Pr\'eau,  92852, Rueil-Malmaison, France}
\date{}
\begin{document}

\maketitle

\abstract{We tackle the challenge of detecting multiple change points in large time series by optimising a penalised likelihood derived from exponential family models. Dynamic programming algorithms can solve this task exactly with at most quadratic time complexity. In recent years, the development of pruning strategies has drastically improved their computational efficiency. However, the two existing approaches have notable limitations: PELT struggles with pruning efficiency in sparse-change scenarios, while FPOP's structure is not adapted to multi-parametric settings. To address these issues, we introduce the DUal Simple Test (DUST) framework, which prunes candidate changes by evaluating a dual function against a threshold. This approach is highly flexible and broadly applicable to parametric models of any dimension. Under mild assumptions, we establish strong duality for the underlying non-convex pruning problem.
We demonstrate DUST’s effectiveness across various change-point regimes and models. In particular, for one-parametric models, DUST matches the simplicity of PELT with the efficiency of FPOP. Its use is especially advantageous for non-Gaussian models.
Finally, we apply DUST to mouse monitoring time series under a change-in-variance model, illustrating its ability to recover the optimal change-point structure efficiently.}

\textbf{Keywords:} Multiple change-point detection, dynamic programming, pruning, duality theory, time efficiency

\section{Introduction}
\label{sec:intro}

Single and multiple change-point detection are well-established unsupervised machine learning tasks within the field of time-series analysis, with foundational work dating back to the 1950s \cite{page1954continuous, page1955test}. Over the past decades, the topic has been the subject of extensive research, resulting in numerous monographs \cite{series1994change, brodsky2013nonparametric, csorgo1997limit, chen2000parametric} and comprehensive review articles \cite{jandhyala2013inference, aminikhanghahi2017survey, truong2020selective}. Until recently, the long-standing focus of the scientific community has mainly been the statistical modelling and the calibration challenge for detecting and localising change points. With the rise of big data, the demand for computationally efficient algorithms has become increasingly pressing. Time efficiency is particularly crucial in many application domains, including genomics \cite{jia2022efficient, liehrmann2023diffsegr}, econometrics \cite{bai2003computation, gu2013fast}, climatology \cite{reeves2007review, shi2022changepoint}, speech processing \cite{harchaoui2009regularized, chang2010fast}, and network analysis \cite{zhou2022asymptotic, banerjee2020change}, to name just a few.

This work addresses the computational challenge of recovering multiple change points in a time series of fixed length. We consider change-point problems based on optimizing a penalised likelihood whose penalty is proportional to the number of changes. Although algorithms with quadratic time complexity have been available for some time \cite{auger1989algorithms,jackson2005algorithm}, the central objective is to approach quasi-linear execution time as closely as possible, while preserving the exact resolution of the underlying optimisation problem.

Detecting multiple change points requires carefully designed algorithmic strategies often balancing exactness with computational efficiency. Although the quasi-linear Binary Segmentation (BS) algorithm \cite{scott1974cluster, sen1975tests, zbMATH03766903} has historically been the dominant (approximate) method, recent advances in (exact) sub-quadratic dynamic programming (DP) techniques are beginning to challenge its prominence. We illustrate this with a simple example. With a time budget of 10 seconds, we estimate the maximum data length that an algorithm can segment when the signal consists of $10$ segments of equal length. The observations follow $y_t \sim \mathcal{N}(\mu_t, 1)$, with piecewise constant means $\mu_t \in \{0, 1\}$. BS stopping at $10$ segments can process up to approximately $n = 75 \times 10^6$ data points. In contrast, the classical exact quadratic-time Optimal Partitioning (OP) algorithm \cite{jackson2005algorithm} with BIC penalty \cite{yao1988estimating} is limited to around $n = 120 \times 10^3$. However, improved exact algorithms such as FPOP can handle up to $n = 30 \times 10^6$ data points within the same time budget, while the DUST algorithm can analyse around $n = 42 \times 10^6$ data points\footnote{Simulations were conducted on a MacBook Pro equipped with an Apple M1 chip (8-core CPU: 4 performance cores and four efficiency cores), 16GB of unified memory, running macOS Sequoia $15.5$. The code was implemented in R and executed using our dust Rcpp package (for DUST and OP) and the fpop packages provided in~\cite{maidstone2017optimal} (for FPOP and BS).}.

DP methods have experienced a period of renaissance with the development of accelerating pruning strategies, making their execution time competitive with BS on simulations (as just illustrated) and on real data sets (see Figure 6 in \cite{maidstone2017optimal}). Among them, inequality-based pruning (PELT) \cite{killick2012optimal} and functional-based pruning (FPOP) \cite{maidstone2017optimal} are the two extreme pruning strategies available on the ``pruning scale''. In the latest developments, DP with functional pruning has made possible the inference of complex structured models that constrain the successive segment parameter values (throughout a graph of constraints) \cite{Hocking2020, JSSv106i06}. One-parametric model with exponential decays \cite{jewell2020fast} or data modelled with auto-correlation and random drift \cite{romano2022detecting} can also be considered, among others.

PELT pruning is efficient for time series with many changes and is simple to code, but does not prune well when there are only a few changes. FPOP pruning is well-suited for problems with a one-parametric functional description. However, extensions to multivariate problems are uneasy as the parameter space description made of intervals in a one-parametric cost function is no longer possible. In this case, the partition of the parameter space is made of potential non-convex and unconnected elements \cite{runge2020finite}. Approximations of the parameter space with sphere-like or rectangle-like sets have been tested and significantly improve the time complexity, but only for small dimensions in the independent Gaussian multivariate setting \cite{pishchagina2024geometricbased}.\\

This work proposes a new pruning method called DUST that goes beyond these limitations. To do so, we recast the pruning task as a constrained optimisation problem. Within this framework, we explicitly derive the dual formulation for multivariate data drawn from the exponential family. We analyse its properties and, in particular, establish a strong duality result that underpins the effectiveness of our pruning rule.
The proposed DUST rule consists of evaluating a dual function and comparing it to a threshold -- yielding a test as simple as that used in PELT. For one-parameter cost functions (corresponding essentially to univariate data), we further derive a simple inequality-based rule by maximising the dual function explicitly. The PELT criterion corresponds to evaluating the dual at zero and is provably upper-bounded by the DUST value.
Normalising the dual into a decision function simplifies its interpretation and opens the door to more sophisticated DUST rules applicable to multivariate time series. A comprehensive simulation study confirms the efficiency and versatility of DUST, highlighting in particular its time robustness across all change-point regimes and under misspecified calibration parameters. We apply DUST for change in variance to large-scale data collected from mouse monitoring via a force platform aimed at quantifying muscle fatigue.\\

This paper has the following structure. In Section \ref{sec:functional} we present the functional problem for pruning in the context of multiple change-point detection. In Section \ref{sec:simple} we describe DUST in the one-constraint case and illustrate DUST on a first simple example. Astonishingly, the one-parametric case leads to a simple closed formula for the positivity test associated with the dual. The general presentation of the duality method is given in Section \ref{sec:duality}. The simulation study of Section~\ref{sec:simus} explores the efficiency of DUST. We eventually describe an example on a real-world data set in Section~\ref{sec:application}. If not in the main body of the article, proofs are given in the appendix.

\section{Change-point problem and its functional description}
\label{sec:functional}

\subsection{Model and optimisation problem}

We consider a time series consisting of $n$ data points, denoted by $y_1, y_2, \ldots, y_n$, where each point is drawn independently from a distribution belonging to the exponential family. A segment, denoted by $y_{ab}$, refers to a contiguous sub-sequence of the data, $y_{a+1}, \ldots, y_b$, where $0 \leq a < b \leq n$. In its canonical form with a minimal representation, the likelihood of a segment $y_{ab}$ can be expressed as:
\begin{align*}
f(y_{ab}; \ttheta)&=  \prod_{i=a+1}^b \biggr[ h(y_i)\exp\Big(\sum_{j=1}^d \theta_j \cdot T_j(y_i) - A(\ttheta) \Big) \biggr]\nonumber\,,\\
 &= \prod_{i=a+1}^b \biggr[h(y_i)\biggr]\exp\Big( \ttheta\cdot  \sum_{i=a+1}^b \mathbf{T}(y_i) - (b-a)A(\ttheta) \Big)\,,
\end{align*}
where $\ttheta = (\theta_1, \ldots, \theta_d)^T$ is the natural parameter, belonging to a convex domain $\Theta \subset \mathbb{R}^d$. The function $A$ is the log-partition function, which is strictly convex due to the minimal representation. The functions $T_1, \ldots, T_d$ are the sufficient statistics, aggregated in the vector $\mathbf{T} = (T_1, \ldots, T_d)^T$, and $h$ is the base measure (a normalising term). The dot sign denotes the scalar product. %Useful properties of the exponential family are summarised in Appendix~\ref{app:exponential_family}.

To define a parametric cost function, we take the negative log-likelihood of a segment and omit the data-dependent term $\prod h(y_i)$, which is constant with respect to the segmentation structure (as seen in Equation~\eqref{eq:global_cost}). For a segment $y_{ab}$, we define the sufficient statistic sum
 $\mathbf{S}_{ab} = \sum_{i=a+1}^b \mathbf{T}(y_i) \in\R^d$, and get the cost function:
\begin{equation}\label{eq:cost}
c(y_{ab}; \ttheta) =  (b-a)A(\ttheta) - \ttheta\cdot \mathbf{S}_{ab}\,.
\end{equation}
An important property of this cost function is its additivity over disjoint segments. Specifically, for any $a < b < t$, we have:
\begin{equation}\label{eq:additive}
c(y_{at}; \ttheta) - c(y_{bt}; \ttheta) = c(y_{ab}; \ttheta)\,.
\end{equation}

We present some well-known examples of exponential family models and their corresponding segment cost functions.

\begin{example}
Let $S_{ab} = \sum_{i=a+1}^b y_i$ denote the sum of univariate data points over the segment $y_{ab}$. With a Poisson model, its one-parametric cost function is given by $c(y_{ab}; \theta) =  (b-a)\exp(\theta) - S_{ab}\theta$; for exponential model, $c(y_{ab}; \theta) =  (b-a)(-\log(-\theta)) - S_{ab}\theta$; for a binomial distribution, $c(y_{ab}; \theta) =  (b-a)\log(1 +\exp(\theta)) - S_{ab}\theta$; while for a Gaussian distribution with unitary variance,  $c(y_{ab}; \theta) =  (b-a)\frac{\theta^2}{2} - S_{ab}\theta$.
\end{example}

\begin{example}
\label{ex:MeanAndVar}
For the Gaussian distribution with unknown mean and variance, the canonical form leads to a bi-parametric cost function:
\begin{equation}
\label{eq:MeanAndVar}
c(y_{ab}; [\theta_1; \theta_2]) =  (b-a)\Big(-\frac{\theta_1^2}{4\theta_2} + \frac{1}{2}\log\Big(-\frac{1}{2\theta_2} \Big)\Big) -  \theta_1 (\sum_{i=a+1}^{b}y_i) - \theta_2 (\sum_{i=a+1}^{b}y_i^2)\,, 
\end{equation}
which is quadratic in parameter $\theta_1$ only and $(\theta_1, \theta_2) \in \R \times \R^-$.
\end{example}

In this work, we address the problem of multiple change-point detection via penalised maximum likelihood, where a penalty proportional to the number of segments is introduced to control model complexity. Specifically, each segment is assigned a positive penalty value $\beta$, often referred to as the unitary penalty. A larger $\beta$ encourages sparser segmentations, i.e., fewer change points. The optimal penalised cost over the entire time series is defined as:
\begin{equation}\label{eq:global_cost}
Q_n = \min_{\tau \in \Tau} \left( \sum_{k=0}^{K} \Big[ \min_{\ttheta\in \Theta}c(y_{\tau_k\tau_{k+1}}; \ttheta) + \beta \Big] \right)\,,
\end{equation}
where $\tau = (\tau_0 = 0, \tau_1, \ldots, \tau_K, \tau_{K+1} = n)$ is a change-point vector, and $\Tau$ denotes the set of all admissible segmentations $\Tau = \{\tau \in \mathbb{N}^{K+2}, K \in \mathbb{N}, 0 = \tau_0 <\tau_1 < \ldots < \tau_{K} < \tau_{K+1} = n \}$. 
Here, $K$ -- the number of change points -- is not fixed in advance but is instead inferred in the optimisation, depending on the choice of penalty value $\beta$. The quantity $Q_n$ is often called the global cost.

This framework, based on the exponential family with independent segments and a linear penalty, is widely used in the literature (e.g., \cite{jackson2005algorithm, killick2012optimal}). Non-linear penalty terms have also been investigated in various settings (see \cite{Zhang2007, cleynen2017model, Verzelen2020}). Although our current work focuses on a specific setting, we believe that the proposed pruning strategy is highly versatile and has the potential to be extended to a broader class of models. These include, for example, online change-point detection \cite{pishchagina2023online}, models with dependencies across segments \cite{fearnhead2019detecting, romano2022detecting, JSSv106i06}, and non-linear penalisation schemes \cite{rigaill2015pruned}. We leave the development of these extensions for future work.

\subsection{Functional cost for pruning}
\label{subsec:pruning}

The exact solution to the optimisation problem \eqref{eq:global_cost} can be obtained using the optimal partitioning algorithm \cite{jackson2005algorithm}. In our case of the exponential family, the term $\min_{\ttheta\in \Theta}c(y_{st}; \ttheta)$ can be computed in constant time. As a result, the global cost $Q_t$ can be evaluated in quadratic time using the following recursion over the last segment position using the initial value $Q_0 = 0$:
\begin{equation}
\label{eq:OP}
Q_t = \min_{0 \le s < t} \left\{Q_s + \min_{\ttheta \in \Theta} c(y_{st}; \ttheta) + \beta \right\}\,.
\end{equation}

This dynamic programming approach has been significantly accelerated through pruning strategies, most notably PELT \cite{killick2012optimal} and, more recently, FPOP \cite{maidstone2017optimal}. Pruning aims to reduce the number of candidate indices $s$ considered in the minimisation of~\eqref{eq:OP}, by applying conditions that ensure no optimal solution is discarded -- thus preserving exactness. In the FPOP framework, the optimisation over the natural parameter $\ttheta$ in recursion~\eqref{eq:OP} is postponed. That is, we search for the best last change-point location in truncated data $y_{0t}$ for $t$ from $1$ to $n$ for each value of the parameter~$\ttheta$:
\begin{equation}\label{eq:fpop}
Q_t(\ttheta) = \min_{0 \le s < t} \Big( Q_s + c(y_{st}; \ttheta) + \beta\Big)\,.
\end{equation}
Here, the global cost $Q_t$ has been transformed into a functional cost by treating the parameter 
$\ttheta$ as a free variable. We recover its value by $Q_t = \min_{\ttheta \in \Theta}Q_t(\ttheta)$. Defining inner cost function $q_t^s$ as:
\begin{equation}\label{eq:qts}
q_{t}^{s}(\ttheta) =  Q_s + c(y_{st}; \ttheta) + \beta\,,
\end{equation}
we can write $Q_t(\ttheta) = \min_{s\in \Tau_t}\{q_t^s(\ttheta)\}$ with $\Tau_t \subset \{0,\ldots,t-1\}$ being the set of non-pruned indices. An effective pruning rule is both fast to test and capable of eliminating a substantial number of indices from 
$\{0,\ldots,t-1\}$. For example, the simple pruning criterion used in PELT yields a set 
$\Tau_t$ of bounded size, even for large time series, under the assumption that the number of change points grows proportionally with the data length (see Theorem 3.2 in \cite{killick2012optimal}). A more efficient pruning strategy, FPOP, relies on the following principle.

\begin{definition}\label{pr:pruning_principle} (Functional pruning principle)
Considering an index $s$ in $\{0,\ldots,t-1\}$, if for all $\ttheta$ in $\Theta$ there exists $r$ (depending on $\ttheta$) such that $q_t^{r}(\ttheta) <  q_t^{s}(\ttheta)$ then $s$ can be removed from $\Tau_{t'}$ for all $t'>t$.
\end{definition}
In other words, it means that inner functions which are unseen in the minimisation at some time step $t$ for all values in the parametric space $\Theta$ will never be seen again later and can therefore be safely discarded. This insight follows directly from the additive property \eqref{eq:additive}, which implies the equivalence $q_t^{r}(\ttheta) <  q_t^{s}(\ttheta) \iff   q_{t'}^{r}(\ttheta) < q_{t'}^{s}(\ttheta)$ for $t'>t$. Functional pruning is the most efficient pruning strategy.

\begin{proposition}\label{prop:max_pruning} (FPOP maximal pruning)
The FPOP-like pruning presented in Definition~\ref{pr:pruning_principle} is the maximal possible pruning for the optimisation problem of type \eqref{eq:global_cost} with additive cost property \eqref{eq:additive}. Say differently, at each time step $t$, the index set $\Tau_{t}$ obtained by FPOP is minimal: a smaller $\Tau_{t}$ would potentially lead to an under-optimal solution.
\end{proposition}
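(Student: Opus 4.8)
The plan is to separate the statement into the \emph{safety} of the rule, which is already granted, and its genuinely new content, the \emph{minimality}. Safety is exactly the remark following Definition~\ref{pr:pruning_principle}: by the additive property \eqref{eq:additive} one has $q_{t'}^{s}(\ttheta)=q_t^{s}(\ttheta)+c(y_{tt'};\ttheta)$ for every $t'>t$ and every $s<t$, so an index that is nowhere minimal at time $t$ can never reappear. For minimality I would first record the dual reading of the rule: the indices \emph{kept} by FPOP are precisely those whose inner function touches the lower envelope, $\Tau_t=\{s:\ \exists\,\ttheta\in\Theta,\ q_t^{s}(\ttheta)=Q_t(\ttheta)\}$. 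The goal is then to show each such $s$ is unavoidable, by exhibiting a continuation of the data for which $s$ is the unique optimal last change-point, so that removing it would return a value strictly above $Q_{t'}$, i.e. an under-optimal solution.

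The engine of the argument is the order-preservation furnished by \eqref{eq:additive}. Since the future only adds the \emph{common} term $c(y_{tt'};\ttheta)$ to all inner functions inherited from before $t$, the sign of $q_{t'}^{r}(\ttheta)-q_{t'}^{s}(\ttheta)$ coincides with that of $q_{t}^{r}(\ttheta)-q_{t}^{s}(\ttheta)$; hence the identity of the minimising index at each fixed $\ttheta$ is frozen for all $t'>t$, and the envelope is merely shifted by the common function $c(y_{tt'};\ttheta)$. In particular, if $s$ realises the envelope at some $\ttheta^{*}$ at time $t$, it still does so, among the indices inherited from before $t$, at every later time.

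Next I would build the offending continuation. Fix $s\in\Tau_t$ and a point $\ttheta^{*}$ at which $s$ is minimal; generically $q_t^{s}(\ttheta^{*})<q_t^{r}(\ttheta^{*})$ for all $r\neq s$, so $s$ is the \emph{unique} minimiser on an open neighbourhood. Append a long run of $N$ observations whose empirical sufficient statistic equals $\nabla A(\ttheta^{*})=\mathbb{E}_{\ttheta^{*}}[\mathbf{T}]$, so that the maximum-likelihood parameter of the terminal segment is driven to $\ttheta^{*}$. Writing $c(\text{run};\ttheta)=N\big(A(\ttheta)-\ttheta\cdot\nabla A(\ttheta^{*})\big)$ up to a constant, this term is strictly convex with a unique minimum at $\ttheta^{*}$; as $N$ grows it dominates the fixed function $Q_t(\cdot)$ on any compact set and forces the global minimiser of $Q_{t'}(\cdot)$ into the neighbourhood where $s$ rules. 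One then checks that no \emph{newly created} index can do better: any change-point placed inside the homogeneous run only adds a penalty $\beta$ without improving the fit, hence is dominated. It follows that $s$ is the optimal last change-point at $t'=t+N$, so a strictly smaller candidate set would miss the optimum.

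The delicate step is precisely this last verification, and I expect it to be the crux. Two points need care. First, one must know that $s$ is minimal on a set with nonempty interior, so that deleting it genuinely raises the envelope somewhere; strict convexity of each $q_t^{s}$ makes this automatic except in degenerate tangency configurations, where two inherited functions meet the envelope only at an isolated point and one of them is redundant, so these measure-zero ties should be excluded or absorbed by a perturbation argument. Second, and more substantively, the terminal segment started at $s$ must beat the option of \emph{restarting a fresh segment at $t$}, whose cost is $Q_t+\beta+\min_{\ttheta}c(\text{run};\ttheta)$. Comparing the two shows the construction succeeds exactly when the active parameter is reachable within the penalty budget, $Q_t(\ttheta^{*})\le Q_t+\beta$. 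Securing such a $\ttheta^{*}$ inside the active region of $s$ is the true obstacle: it is the point where the functional (FPOP) viewpoint meets the value-based (PELT) viewpoint, and it is where the proof must concentrate its effort.
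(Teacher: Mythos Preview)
Your strategy coincides with the paper's: fix $\ttheta_0$ where $s_0$ is the unique minimiser of the envelope, then append data with unitary cost $A(\ttheta)-\ttheta\cdot\nabla A(\ttheta_0)$ so that, by strong convexity of $A$, enough repetitions drag the global minimiser of $Q_{t+T}(\cdot)$ into any prescribed ball around $\ttheta_0$, where only $s_0$ can win among the inherited indices.

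On the obstacle you isolate---beating the fresh segment started at $t$, which requires $q_t^{s_0}(\ttheta_0)<Q_t+\beta$---the paper does not \emph{derive} this; it simply \emph{assumes} it alongside visibility. Its starting hypotheses are $q_t^{s_0}(\ttheta_0)<q_t^{r}(\ttheta_0)$ for all $r\neq s_0$ \emph{and} $q_t^{s_0}(\ttheta_0)<Q_t+\beta$. The second condition is precisely the statement that $s_0$ survives the minimum-value rule of Proposition~\ref{prop:func-pruning}, not merely the visibility rule of Definition~\ref{pr:pruning_principle}. What is actually proved, then, is maximality of the \emph{full} FPOP rule (visibility together with the threshold comparison), which is how FPOP operates in practice. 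Your instinct that visibility alone is not maximal is correct: an index visible only where $Q_t(\cdot)\ge Q_t+\beta$ is dominated at step $t+1$ by the new index $t$ for \emph{every} continuation of the data, so it can be removed without risk. The resolution of your obstacle is therefore not an additional argument but a sharpening of the hypothesis.

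Your treatment of the newly created indices is where your sketch is thinnest. The claim ``only adds a penalty without improving the fit'' is not yet a proof, since the homogeneous run could in principle be absorbed into a segment starting at any earlier index, not just $s_0$. The paper's Case~2 handles $s=t$ directly from the second hypothesis, and for $s\in\{t+1,\ldots,t+T\}$ bounds $Q_s$ by routing through the optimal last change $\bar s_0<t$ of $Q_s$ and using that the appended points have minimal unitary cost at $\ttheta_0$; this chain yields $q_{t_0}^{s}(\ttheta)\ge q_{t_0}^{t}(\ttheta_0)>q_{t_0}^{s_0}(\ttheta_0)$ for all $\ttheta$. That routing step is what turns your heuristic into an argument.
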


The proof relies on the following argument. If we remove an index $s_0$ in $\Tau_t$ such that $Q_t(\ttheta) = q_t^{s_0}(\ttheta)$ in a neighbourhood of $\ttheta_0$, we can prove that, by adding data points "centred on $\theta_0$" at further iterations (with unitary cost
$A(\ttheta) - \ttheta\cdot\nabla A(\ttheta_0)$), the last segment in \eqref{eq:fpop} would start at index $s_0+1$ at some later time $t_0$. However, since $s_0$ was pruned, the solution of the DP algorithm is no longer exact. 

It is important to note that, although pruning rules can drastically reduce execution time in practice, there exist examples of data for which no pruning rule could discard any index, leading back to the worst-case quadratic complexity of the optimal partitioning algorithm. We construct such examples using increasing time series where all inner cost functions attain the same minimal value at time $n$. We first illustrate this phenomenon in the simple univariate Gaussian case. 

\begin{proposition}
\label{prop:worstcaseGauss}
We consider the Gaussian univariate model with fixed variance. Its inner cost functions are given by relations $q_{t}^{s}(\theta) =  Q_s + \frac{t-s}{2}\theta^2 - S_{st}\theta + \beta$ with $S_{st}  = \sum_{i=s+1}^{t}y_i$ and $y_i \in \R$. If we observe the following data points: 
\begin{equation}\label{eq:baddata}
y_t = \sqrt{\frac{\beta}{n}}\Big(\sqrt{n-1}-\sqrt{t(n-t)}+\sqrt{(t-1)(n-t+1)} \Big)\,,\quad t = 1,\dots,n
\end{equation}
then no pruning happens, that is $\Tau_n = \{0,\ldots,n-1\}$.
\end{proposition}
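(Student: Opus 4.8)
The plan is to show that at the final time step $t=n$ every inner function $q_n^s$ realises the minimum of the lower envelope $Q_n(\cdot)=\min_s q_n^s(\cdot)$ at some point, so that the functional-pruning rule of Definition~\ref{pr:pruning_principle} cannot discard any index at time $n$, and then to propagate this backwards to every $t\le n$. The propagation is the cheap part and uses only additivity. By \eqref{eq:additive} the difference $q_t^r(\theta)-q_t^s(\theta)$ is independent of $t$, so the set $\{\theta:\ q_t^r(\theta)<q_t^s(\theta)\}$ is the same for all $t>\max(r,s)$. Since pruning $s$ at time $t$ requires these sets (over candidates $r\in\{0,\dots,t-1\}\setminus\{s\}$) to cover all of $\Theta$, and the candidate pool only grows with $t$, a failure to cover at the largest pool $t=n$ forces a failure at every earlier $t$. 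Hence it suffices to prove that no index is pruned at $t=n$.

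First I would put \eqref{eq:baddata} in closed form. Writing $g(j)=\sqrt{j(n-j)}$ so that $y_t=\sqrt{\beta/n}\,(\sqrt{n-1}-g(t)+g(t-1))$, the increments telescope and give $S_{st}=\sqrt{\beta/n}\,((t-s)\sqrt{n-1}+g(s)-g(t))$, in particular $S_{sn}=\sqrt{\beta/n}\,((n-s)\sqrt{n-1}+\sqrt{s(n-s)})$. Next I would pin down the optimal costs, proving by induction on $s$ that the single-segment partition is optimal, i.e. $Q_s=\min_\theta c(y_{0s};\theta)+\beta=-S_{0s}^2/(2s)+\beta$ (with $Q_0=0$). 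Using the inductive hypothesis in the recursion, the induction step reduces to the inequality $S_{0r}^2/(2r)+S_{rs}^2/(2(s-r))\le S_{0s}^2/(2s)+\beta$ for all $0<r<s$, which is exactly the relation the constants in \eqref{eq:baddata} are reverse-engineered to satisfy.

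With $Q_s$ and $S_{sn}$ in hand, each parabola $q_n^s(\theta)=Q_s+\tfrac{n-s}{2}\theta^2-S_{sn}\theta+\beta$ has minimum $m_s=Q_s-S_{sn}^2/(2(n-s))+\beta$, attained at $\theta_s^\star=S_{sn}/(n-s)$. Substituting the closed forms, the surd cross-terms cancel and $m_s$ collapses to a value independent of $s$; this common-minimum identity is the analytic heart of the argument and is precisely the meaning of ``all inner cost functions attain the same minimal value at time $n$''. Granting it, for any fixed $s$ and any $r$ one has $q_n^r(\theta_s^\star)\ge m_r=m_s=q_n^s(\theta_s^\star)$, so $q_n^s(\theta_s^\star)=Q_n(\theta_s^\star)$: the index $s$ alone realises the envelope at $\theta_s^\star$ and therefore cannot be pruned. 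Combined with the backward propagation of the first paragraph, no index is ever discarded, which yields $\Tau_n=\{0,\dots,n-1\}$.

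The main obstacle is the algebra underlying the two middle steps: establishing single-segment optimality (hence the exact value of $Q_s$) and then verifying the cancellation that renders $m_s$ constant in $s$. Both reduce to elementary but delicate inequalities and identities among the square roots $\sqrt{s(n-s)}$, and it is exactly here that the specific coefficients of \eqref{eq:baddata} are used; by contrast the telescoping of $S_{sn}$, the tangency/visibility conclusion, and the $t$-invariant backward propagation are all routine once the common-minimum identity is secured.
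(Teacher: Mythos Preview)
Your approach is essentially the paper's: show that the data~\eqref{eq:baddata} yields $Q_s$ equal to the single-segment cost and that all minima $m_s:=\min_\theta q_n^s(\theta)$ coincide, so every $q_n^s$ touches the envelope and no index can be pruned; the paper uses the variance identities of Appendix~\ref{app:means_relations} where you telescope directly, and it verifies strict monotonicity of the data (for distinct argmins) where you instead spell out the backward propagation via the $t$-invariance of $q_t^r-q_t^s$. One caution: the inequality $S_{0r}^2/(2r)+S_{rs}^2/(2(s-r))\le S_{0s}^2/(2s)+\beta$ is \emph{engineered to hold with equality only at $s=n$}; for $s<n$ it is a genuine consequence that still requires the short check $\frac{rs}{s-r}\big(\sqrt{(n-r)/(nr)}-\sqrt{(n-s)/(ns)}\big)^2\le 1$, so do not skip it.
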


We extend the result to a broader class of continuous distributions within the natural exponential family in Appendix~\ref{app:worstCase2}.

\subsection{Functional pruning rule as an optimisation problem}

In the evolution of the dynamic programming algorithm, if an inner function $q_t^s$ is still accessible, it can be discarded only by the last introduced function at time $t+1$, that is $Q_t + c(y_{t(t+1)}; \ttheta) + \beta$ (due to the additive property \eqref{eq:additive}). This naturally leads to the following pruning strategy: compare each inner function at time $t$ with the constant threshold $Q_t + \beta$. We denote by $\Theta_t^s \subset \Theta$ the set of parameter values for which the function $q_t^s$ attains the minimum among all candidate functions at time $t$, defined explicitly as:
\begin{equation}    
\label{eq:thetaSet}
    \Theta_t^s := \{ \ttheta\in\Theta \,|\, \forall r \in \Tau_t \setminus \{s\},\ q_t^s(\ttheta) \leq  q_t^{r}(\ttheta) \}\,.
\end{equation}

The criterion "$\Theta_t^s = \emptyset$ ?" forms the basis of the functional pruning rule introduced in FPOP \cite{maidstone2017optimal} for univariate data, and has since been adopted in several one-dimensional extensions \cite{jewell2020fast, romano2022detecting}. For multivariate problems, a recent method proposes to approximate the set  $\Theta_t^s$ using simple geometric shapes \cite{pishchagina2024geometricbased}. However, this approach becomes challenging to implement efficiently for non-Gaussian cost functions, resulting in relatively slow updates (see also \cite{runge2020finite}). Instead, with DUST, we propose to evaluate the minimal value of $q_t^s$ over its visibility region 
$\Theta_t^s$, and compare it directly to the threshold $Q_t + \beta$. While this may appear to introduce additional computation, it avoids the need to characterise the potentially complex shape of $\Theta_t^s$ explicitly. Thus, the task reduces to finding a single value -- or a lower bound on it -- which is often more tractable in practice.

\begin{proposition}[Functional pruning with minimum value rule]\label{prop:func-pruning}
We define $R_t^s:=\min_{\ttheta\in\Theta_t^s}q_t^s(\ttheta)$ with $\Theta_t^s$ given in \eqref{eq:thetaSet}. By convention, if $\Theta_t^s = \emptyset$, then $R_t^s=+\infty$.
    The functional pruning condition is as follows. If there exists $s$ in $\Tau_{t}$ satisfying:
    \begin{equation}\label{eq:func-pruning-condition}
        R_t^s > Q_t + \beta,
    \end{equation}
    then $s$ can never be the last change point of a segmentation of $y_{0t'}$ for all $t'> t$: that is $s \not\in \Tau_{t'}$.
\end{proposition}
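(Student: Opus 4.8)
The plan is to reduce everything to the additive property~\eqref{eq:additive} and the functional pruning principle of Definition~\ref{pr:pruning_principle}. First I would fix an arbitrary $t' > t$ and rewrite each inner function at time $t'$ in terms of its value at time $t$. Using~\eqref{eq:additive}, for every index $r \le t$ one has $c(y_{rt'};\ttheta) = c(y_{rt};\ttheta) + c(y_{tt'};\ttheta)$, so that $q_{t'}^r(\ttheta) = q_t^r(\ttheta) + c(y_{tt'};\ttheta)$. The extra term $c(y_{tt'};\ttheta)$ depends only on the data block $y_{tt'}$ and not on $r$; hence pairwise differences are preserved, $q_{t'}^s(\ttheta) - q_{t'}^r(\ttheta) = q_t^s(\ttheta) - q_t^r(\ttheta)$, which is exactly the persistence remark following Definition~\ref{pr:pruning_principle}.

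The crux is to relate the threshold $Q_t + \beta$ to a genuine candidate function at time $t'$. The key observation is that the index $r = t$, which opens the last segment $y_{tt'}$, is a legitimate last-change-point candidate for every $t' > t$, and that $q_{t'}^t(\ttheta) = Q_t + c(y_{tt'};\ttheta) + \beta$. Subtracting yields the identity
\[
q_{t'}^s(\ttheta) - q_{t'}^t(\ttheta) = q_t^s(\ttheta) - (Q_t + \beta),
\]
so that $s$ is beaten by the fresh candidate $t$ at a point $\ttheta$ precisely when $q_t^s(\ttheta)$ exceeds the threshold $Q_t + \beta$. This is where the hypothesis $R_t^s > Q_t + \beta$ enters.

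Next I would argue that $s$ is strictly dominated at every $\ttheta \in \Theta$, splitting according to the visibility region $\Theta_t^s$ of~\eqref{eq:thetaSet}. If $\ttheta \in \Theta_t^s$, then $q_t^s(\ttheta) \ge R_t^s > Q_t + \beta$, so the identity above gives $q_{t'}^t(\ttheta) < q_{t'}^s(\ttheta)$ and the index $t$ dominates $s$; when $\Theta_t^s = \emptyset$ the convention $R_t^s = +\infty$ makes the hypothesis automatic and this case is simply vacuous. If instead $\ttheta \notin \Theta_t^s$, then by definition there is $r \in \Tau_t \setminus \{s\}$ with $q_t^s(\ttheta) > q_t^r(\ttheta)$, and persistence of differences gives $q_{t'}^r(\ttheta) < q_{t'}^s(\ttheta)$, so $r$ dominates $s$. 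In both cases some index in $\{0,\dots,t'-1\}$ lies strictly below $q_{t'}^s$ at $\ttheta$.

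Conclusion: for every $\ttheta \in \Theta$ there is a dominating index, i.e. $s$ attains the minimum nowhere at time $t'$. This is exactly the hypothesis of the functional pruning principle at time $t'$, whence $s \not\in \Tau_{t'}$; since $t' > t$ was arbitrary, the claim follows. I expect the main delicate point to be bookkeeping rather than substance: one must verify that the dominating indices ($t$ in the first case, $r \in \Tau_t$ in the second) are admissible candidates at the later time $t'$, which holds since both are $< t'$, and that defining $\Theta_t^s$ through the pruned set $\Tau_t$ instead of all of $\{0,\dots,t-1\}$ is harmless, because any index already pruned at time $t$ is itself dominated there and so cannot rescue $s$.
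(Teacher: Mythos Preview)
Your proposal is correct and follows essentially the same route as the paper: split $\Theta$ into $\Theta_t^s$ and its complement, use additivity to show that on $\Theta_t^s$ the new candidate $t$ dominates $s$ via $q_{t'}^s(\ttheta) = q_t^s(\ttheta) + c(y_{tt'};\ttheta) > Q_t + \beta + c(y_{tt'};\ttheta) = q_{t'}^t(\ttheta)$, and use persistence of pairwise differences on the complement. Your write-up is simply more explicit about the bookkeeping (the identity $q_{t'}^s - q_{t'}^t = q_t^s - (Q_t+\beta)$ and the admissibility of the dominating indices at time $t'$), but the underlying argument is identical.
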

\begin{proof}
    Let $\ttheta\not\in\Theta_t^s$. By definition of $\Theta_t^s$ there exists $r$ such that $\ q_t^s(\ttheta) >  q_t^{r}(\ttheta)$ and therefore $\ q_{t'}^s(\ttheta) >  q_{t'}^{r}(\ttheta)$. Let $\ttheta\in\Theta_t^s$. If the functional pruning condition~\eqref{eq:func-pruning-condition} holds, we have $q_{t'}^s(\ttheta) = q_t^s(\ttheta) + c(y_{tt'}, \ttheta) 
 > Q_t + \beta + c(y_{tt'}, \ttheta) = q_{t'}^t(\ttheta)$ which means that index $s$ is never seen in the minimisation problem (it is hidden by index $t$) and can be discarded. 
\end{proof}

Unfolding the definition of $\Theta_t^s$, we observe that the central challenge in pruning for multiple change-point detection is a problem of constrained optimisation. Its dual formulation forms the core of the DUST pruning method.

\begin{definition}[Pruning problem]
The pruning problem for multiple change-point detection at time $t$ for testing change point $s$ is the following problem of optimisation under constraints:
\begin{equation}\label{pb:pruningProblem}
 \left\{
      \begin{aligned}
        &\min_{\ttheta\in \Theta}   q_t^s(\ttheta)\,,\\
        &\hbox{   s.t.}\,\,q_t^s(\ttheta)- q_t^{r}(\ttheta) \le 0  \quad \hbox{for all} \quad r \in \Tau_t \setminus \{s\} \subset \{0,\ldots,t-1\}\,.\end{aligned}
    \right.
\end{equation}
A result greater than threshold $Q_t + \beta$ implies that the function $ q_t^s(\ttheta)$ can be removed and the index $s$ pruned.
\end{definition}

We aim to compute either the exact solution or a tight lower bound using a highly efficient algorithm. To this end, we begin by considering a simplified version of the pruning problem with a single constraint, which allows us to derive a lower bound. This setting offers a clear framework to introduce and illustrate our dual approach. Furthermore, in the case of one-parametric cost functions, this formulation enables the design of a particularly efficient pruning test.

\subsection{Notations for dual and decision functions}

For clarity, we compile below all the notations related to dual and decision functions used in the sequel. The function $\mathcal{D^*}$ is given by $\mathcal{D^*}(x) = x \cdot (\nabla A)^{-1}(x) - A((\nabla A)^{-1}(x))$ and is strictly convex. We introduce the following mean values:
$$\overline{\mathbf{S}}_{rs} = \left\{\begin{aligned}
&\frac{{\mathbf{S}}_{rs}}{s-r}\quad\hbox{if}\quad r < s\,,\\
&\frac{{\mathbf{S}}_{sr}}{r-s}\quad\hbox{if}\quad s < r\,,
\end{aligned}
\right. \quad \hbox{and index indicator}\quad
\psi_{rs} = \left\{
      \begin{aligned}
1\quad &\hbox{if}\quad r < s\,,\\
-1\quad &\hbox{if}\quad s < r\,. \end{aligned}
\right.
$$
By analogy, we also use a mean value between global costs, $\overline{Q}_{rs} =\frac{Q_s-Q_r}{s-r}$, as well as a difference operator between mean values, which incorporates the order between the first two indices:
$$\Delta \overline{\mathbf{S}}_{rst} = \psi_{rs}(\overline{\mathbf{S}}_{st}- \overline{\mathbf{S}}_{rs})\quad, \quad \Delta \overline{Q}_{rst} = \psi_{rs}( \overline{Q}_{st} - \overline{Q}_{rs})\,.$$
The absence of bold notation (e.g., 
$S_{rs},\overline{S}_{rs}, \Delta \overline{S}_{rst},\ldots$) indicates that the data is univariate. In this case, some standard notation will also be used for means and variance when $r<s$:
$$\overline{S}_{rs} = \overline{y}_{rs} = \frac{1}{s-r}\sum_{i=r+1}^s y_i\, , \quad \overline{S^2_{rs}} = \overline{y_{rs}^2} = \frac{1}{s-r}\sum_{i=r+1}^s y_i^2\,,$$
and $V(y_{rs}) = \overline{y_{rs}^2} - (\overline{y}_{rs})^2$. Eventually, we define the following linear function:
$$\sigma_{\mathcal{R}}(\mathbf{x}) = \overline{\mathbf{S}}_{st} + \sum_{r\in \mathcal{R}}x_{r}\Delta\overline{\mathbf{S}}_{rst} \quad,\quad \phi_{\mathcal{R}}(\mathbf{x}) = \overline{Q}_{st} + \sum_{r\in \mathcal{R}} x_r \Delta \overline{Q}_{rst}\,.$$
The vector $\mathbf{x} = (x_r)_{r \in \mathcal{R}}$ is identified with $(x_r)_{r=1,\ldots,|\mathcal{R}|}$ depending on the context (same with vector $\mu$). For a vector $\mathbf{v} \in \R^d$, its Euclidean norm is denoted by $\|\mathbf{v}\|$.

\section{DUST method with one constraint}
\label{sec:simple}

We present the dual formulation for the single-constraint case without delving into the theoretical analysis of the dual, which is deferred to Section~\ref{sec:duality} in a generic framework. We provide a detailed description of the DUST change-point algorithm. As a concrete illustration, we explore the change-in-mean-and-variance problem in depth. This simplified presentation is intended to make our methodology more accessible and to promote its application in other settings.

\subsection{A one-dimensional dual function}
\label{subsec:dualtrick}

Optimisation problem \eqref{pb:pruningProblem} with one constraint is as follows:  
\begin{equation}\label{pb:oneConstraint}
\left\{
      \begin{aligned}
        &\min_{\ttheta\in \Theta} q_t^s(\ttheta)\,,\\
        &\hbox{   s.t.}\,\, q_t^s(\ttheta)- q_t^{r}(\ttheta) \le 0  \,. \end{aligned}
    \right.
\end{equation}
where we consider that $r < s$. With $r > s$, the obtained (convex) problem \eqref{pb:oneConstraint} leads to a very inefficient pruning. The optimal value of \eqref{pb:oneConstraint} is defined as $R_t^{rs}$. We can easily write down the Lagrangian function as:
\begin{align*}
\mathcal{L}(\ttheta, \mu) &= q_t^s(\ttheta)+ \mu( q_t^s(\ttheta)- q_t^{r}(\ttheta) )\,, \\
&= \Big((t-s)-\mu(s-r)\Big) \Big(A(\ttheta) - \ttheta\cdot \frac{ \mathbf{S}_{st} - \mu \mathbf{S}_{rs}}{(t-s) - \mu (s-r)}\Big) + \mu(Q_s - Q_r) + Q_s  + \beta \,,
\end{align*}
or with a change of variable $\mu \to \mu \frac{t-s}{s-r}$ and the parametric mean $\mathbf{m}(\mu) = \frac{ \overline{\mathbf{S}}_{st} - \mu \overline{\mathbf{S}}_{rs}}{1 - \mu}$:
$$
\mathcal{L}(\ttheta, \mu)= (t-s)\Big((1-\mu) (A(\ttheta) - \ttheta\cdot \mathbf{m}(\mu)) + \mu  \frac{Q_{s}-Q_{r}}{s-r} \Big)+ Q_s + \beta\,.$$
The dual function presented in the following proposition serves as a lower bound for the target quantity $R_t^s$ (see Proposition~\ref{prop:func-pruning}). 

\begin{proposition}
\label{prop:dual1D}
The dual function $\D: [0, \mu_{max}) \to \R$ to Problem \eqref{pb:oneConstraint} with inner cost functions \eqref{eq:qts} and cost \eqref{eq:cost} for pruning index $s$ using index $r$ with $r<s$ is
given by:
\begin{equation}
\label{dualFunction1C}
\D(\mu) = (t-s)\Big(-(1-\mu) \mathcal{D^*}(\mathbf{m}(\mu)) + \mu  \overline{Q}_{rs} \Big)+ Q_s + \beta\,,
\end{equation}
where $\mathbf{m}(\mu) = \frac{ \overline{\mathbf{S}}_{st} - \mu \overline{\mathbf{S}}_{rs}}{1 - \mu}$. The domain of the dual is a segment bounded by a value $\mu_{max}$ which is the largest value for which $\mathcal{D^*}(\mathbf{m}(\mu))$ is finite or can be computed (e.g. when we have log terms). The value $\mu_{max}$ is always smaller than or equal to $1$.
\end{proposition}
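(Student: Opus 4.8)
The plan is to compute the dual $\D(\mu) = \min_{\ttheta\in\Theta}\mathcal{L}(\ttheta,\mu)$ directly from the rescaled form of the Lagrangian already obtained in the excerpt, in which the only $\ttheta$-dependent term is $(1-\mu)\bigl(A(\ttheta) - \ttheta\cdot\mathbf{m}(\mu)\bigr)$ while $\mu\overline{Q}_{rs}$, $Q_s$ and $\beta$ are constants. The whole argument then reduces to a single application of Legendre--Fenchel duality for the strictly convex log-partition function $A$, followed by a careful description of the admissible range of the multiplier $\mu$.

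First I would fix $\mu<1$ so that the prefactor $1-\mu$ is strictly positive; the minimisation over $\ttheta$ is then equivalent to minimising the strictly convex map $\ttheta\mapsto A(\ttheta) - \ttheta\cdot\mathbf{m}(\mu)$. Its first-order condition $\nabla A(\ttheta)=\mathbf{m}(\mu)$ admits the unique interior solution $\ttheta^\star=(\nabla A)^{-1}(\mathbf{m}(\mu))$ whenever $\mathbf{m}(\mu)$ lies in the range of $\nabla A$. Substituting $\ttheta^\star$ and comparing with the definition of $\mathcal{D^*}$ recalled in the notations gives $\min_{\ttheta}\bigl(A(\ttheta)-\ttheta\cdot\mathbf{m}(\mu)\bigr) = -\bigl(\mathbf{m}(\mu)\cdot(\nabla A)^{-1}(\mathbf{m}(\mu)) - A((\nabla A)^{-1}(\mathbf{m}(\mu)))\bigr) = -\mathcal{D^*}(\mathbf{m}(\mu))$. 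Reinserting this into the Lagrangian and factoring out $(t-s)$ reproduces formula \eqref{dualFunction1C} verbatim.

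It remains to pin down the domain $[0,\mu_{max})$. The lower endpoint $\mu\ge 0$ is the standard sign constraint on the multiplier of an inequality constraint, and is what guarantees that the dual is a weak-duality lower bound on $R_t^s$. For the upper endpoint I would rewrite the parametric mean as $\mathbf{m}(\mu)=\overline{\mathbf{S}}_{st}+\frac{\mu}{1-\mu}\bigl(\overline{\mathbf{S}}_{st}-\overline{\mathbf{S}}_{rs}\bigr)$, so $\mu_{max}$ is the first value at which $\mathbf{m}(\mu)$ leaves the set on which $\mathcal{D^*}$ is finite (equivalently, the range of $\nabla A$); beyond this point the inner minimum is $-\infty$ or undefined and the dual carries no usable information. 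To obtain $\mu_{max}\le 1$ I give two complementary reasons: as $\mu\to 1^-$ the scalar $\frac{\mu}{1-\mu}\to+\infty$ pushes $\mathbf{m}(\mu)$ to infinity along the direction $\overline{\mathbf{S}}_{st}-\overline{\mathbf{S}}_{rs}$, which typically exits the effective domain of $\mathcal{D^*}$ at some value $\le 1$; and in any case at $\mu=1$ the prefactor $1-\mu$ vanishes and turns negative for $\mu>1$, so the minimisation over $\ttheta$ no longer has a finite value and the construction cannot be extended past $1$.

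I expect the only delicate step to be not algebraic but the attainment and domain bookkeeping: one must justify that the interior minimiser $\ttheta^\star$ exists exactly when $\mathbf{m}(\mu)$ belongs to the range of $\nabla A$, and then translate this attainment condition into the definition of $\mu_{max}$. The models carrying logarithmic terms (the exponential distribution, and the Gaussian mean-and-variance case of Example~\ref{ex:MeanAndVar}) are where the phrase ``finite or can be computed'' must be read with care at the boundary of $\mathrm{dom}\,\mathcal{D^*}$; for those I would verify case by case that the curve $\mu\mapsto\mathbf{m}(\mu)$ indeed reaches that boundary at some $\mu_{max}\le 1$.
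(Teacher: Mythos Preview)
Your proposal is correct and follows essentially the same route as the paper's proof in Appendix~\ref{app:dual1D}: solve $\nabla_{\ttheta}\mathcal{L}(\ttheta,\mu)=0$ to get $\ttheta^\star=(\nabla A)^{-1}(\mathbf{m}(\mu))$, substitute back, and recognise $-\mathcal{D^*}(\mathbf{m}(\mu))$. The only cosmetic difference is that you work directly with the rescaled Lagrangian already displayed in Section~\ref{subsec:dualtrick}, whereas the appendix proof starts from the unnormalised multiplier and applies the change of variable $\mu\to\mu\frac{t-s}{s-r}$ at the end; your domain discussion (sign change of $1-\mu$ at $\mu=1$, and $\mathbf{m}(\mu)$ leaving the range of $\nabla A$) is slightly more explicit than the paper's one-line appeal to the mean-parameter space $\mathcal{M}^o$.
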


At time $t$, we prune index $s$ using a candidate index $r < s$ if $\mathcal{D}(\mu_0) > Q_t + \beta$, for some $\mu_0 \in [0, \mu_{\text{max}})$. The solution $R_t^{rs}$ to Problem~\eqref{pb:oneConstraint} is smaller than the solution $R_t^s$ to Problem~\eqref{pb:pruningProblem}, and is lower bounded by the dual function: for all $\mu \in [0, \mu_{\text{max}})$, we have: $\mathcal{D}(\mu) \le R_t^{rs} \le R_t^s$. Therefore, using any value $\mu_0 \in [0, \mu_{\text{max}})$, pruning based on the condition $\mathcal{D}(\mu_0) > Q_t + \beta$ is safe: it guarantees that $R_t^s > Q_t + \beta$, so no index is removed by mistake. Ideally, we would evaluate the dual function at its maximum or some value close to it.

\subsection{DUST decision rule}
\label{subsec:dustDecision}

Pruning relies on the existence of a value $\mu_0$ such that $\mathcal{D}(\mu_0) - (Q_t + \beta) > 0$. Since dividing the left-hand side by any positive function does not change the decision, we normalise the dual function $\mathcal{D}$ from Proposition~\ref{prop:dual1D} to allow its direct maximisation (with a closed formula) in some particular cases. The new function $\mathbb{D}$, called the decision function, has this simple property: pruning of index $s$ occurs as soon as $\mathbb{D}$ returns a positive value for some point in its domain.

\begin{proposition}
\label{prop:dual1D_max}
The decision function $\mathbb{D}: [0, x_{max}) \to \R$ to Problem \eqref{pb:oneConstraint} with inner cost functions \eqref{eq:qts} and cost \eqref{eq:cost} for pruning index $s$ using index $r$ with $r<s$ is
given by:
\begin{equation}
\label{eq:decisionDual1D}\mathbb{D}(x) = -\D^*\Big(\overline{\mathbf{S}}_{st} + x\Delta \overline{\mathbf{S}}_{rst}\Big) - \Big(\overline{Q}_{st} + x\Delta \overline{Q}_{rst}\Big) \,.
\end{equation}
The value $x_{max}$ is derived from $\mu_{max}$.
\end{proposition}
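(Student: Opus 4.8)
The plan is to show that $\mathbb{D}$ is nothing but the dual $\D$ of Proposition~\ref{prop:dual1D}, recentred by the threshold $Q_t+\beta$, divided by a strictly positive factor, and reparametrised through a one-to-one change of variable $\mu\mapsto x$. Since division by a positive quantity does not alter signs, the whole point is that the pruning test $\D(\mu_0)>Q_t+\beta$ for some $\mu_0\in[0,\mu_{max})$ becomes equivalent to $\mathbb{D}(x_0)>0$ for the corresponding $x_0\in[0,x_{max})$. First I would form the difference $\D(\mu)-(Q_t+\beta)$ and absorb the constants: using $\overline{Q}_{st}=(Q_t-Q_s)/(t-s)$ one rewrites $Q_s-Q_t=-(t-s)\overline{Q}_{st}$, so that
\[
\D(\mu)-(Q_t+\beta)=(t-s)\Big(-(1-\mu)\,\D^*(\mathbf{m}(\mu))+\mu\,\overline{Q}_{rs}-\overline{Q}_{st}\Big).
\]

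Next I would factor out the strictly positive quantity $(t-s)(1-\mu)$. This is legitimate throughout the domain: $t-s\ge 1>0$, and because $\mu_{max}\le 1$ we have $1-\mu>0$ for every $\mu\in[0,\mu_{max})$. Dividing preserves the sign and gives
\[
\frac{\D(\mu)-(Q_t+\beta)}{(t-s)(1-\mu)}=-\D^*(\mathbf{m}(\mu))+\frac{\mu\,\overline{Q}_{rs}-\overline{Q}_{st}}{1-\mu}.
\]

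The crux is then the substitution $x=\mu/(1-\mu)$, equivalently $\mu=x/(1+x)$ and $1/(1-\mu)=1+x$, which is an increasing bijection from $[0,\mu_{max})$ onto $[0,x_{max})$ with $x_{max}=\mu_{max}/(1-\mu_{max})$ (and $x_{max}=+\infty$ when $\mu_{max}=1$). I would verify that this single substitution linearises both remaining terms at once. For the argument of $\D^*$, writing $\mathbf{m}(\mu)=\overline{\mathbf{S}}_{st}+\tfrac{\mu}{1-\mu}(\overline{\mathbf{S}}_{st}-\overline{\mathbf{S}}_{rs})$ and using $\psi_{rs}=1$ (since $r<s$) yields exactly $\mathbf{m}(\mu)=\overline{\mathbf{S}}_{st}+x\,\Delta\overline{\mathbf{S}}_{rst}$. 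For the cost term, multiplying through by $1+x$ turns $\tfrac{\mu\overline{Q}_{rs}-\overline{Q}_{st}}{1-\mu}$ into $x\overline{Q}_{rs}-(1+x)\overline{Q}_{st}=-\overline{Q}_{st}-x\,\Delta\overline{Q}_{rst}$, again using $\psi_{rs}=1$. Substituting both identities reproduces the claimed formula $\mathbb{D}(x)=-\D^*(\overline{\mathbf{S}}_{st}+x\Delta\overline{\mathbf{S}}_{rst})-(\overline{Q}_{st}+x\Delta\overline{Q}_{rst})$.

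I expect the only genuine obstacle to be the algebraic bookkeeping in this last step: recognising that the affine rewriting of the parametric mean $\mathbf{m}(\mu)$ dictated by the definition of $\Delta\overline{\mathbf{S}}_{rst}$ is precisely the reparametrisation $x=\mu/(1-\mu)$ that simultaneously linearises the $Q$-term, so that one variable $x$ serves both. Everything else (the sign-preserving normalisation and the monotone transport of the endpoint $\mu_{max}$ to $x_{max}$) is routine. I would close by noting that, since the normalising factor is strictly positive, $\D(\mu_0)>Q_t+\beta$ holds for some $\mu_0\in[0,\mu_{max})$ if and only if $\mathbb{D}(x_0)>0$ for the corresponding $x_0\in[0,x_{max})$; hence pruning of index $s$ occurs exactly when $\mathbb{D}$ attains a positive value on its domain, as claimed.
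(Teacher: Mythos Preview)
Your proposal is correct and follows exactly the same route as the paper: shift the dual by $Q_t+\beta$, divide by the positive factor $(t-s)(1-\mu)$, and apply the change of variable $x=\mu/(1-\mu)$. You in fact supply more algebraic detail than the paper's proof, which merely states the transformation $\mathbb{D}(x)=\big((t-s)(1-\tfrac{x}{1+x})\big)^{-1}\big(\D(\tfrac{x}{1+x})-(Q_t+\beta)\big)$ and leaves the verification that this yields \eqref{eq:decisionDual1D} to the reader.
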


\begin{proof}
We shift the dual $\D$ of  Equation~\eqref{dualFunction1C} by the quantity $Q_t + \beta$ to transform the dual pruning threshold test into a sign test. We then divide the obtained function by $(t-s)(1-\mu)$, which is always a positive value. We eventually introduce the variable $x = \frac{\mu}{1-\mu}\ge 0$. This is summarised by the transformation:
$$\mathbb{D}(x) = \Big((t-s)(1-\frac{x}{1+x}\Big)^{-1}\Big(\D\Big(\frac{x}{1+x}\Big) - (Q_t + \beta)\Big)\,.$$
\end{proof}

A typical setting in change-point detection is that of univariate data drawn from the exponential family \cite{chen2000parametric, JSSv106i06, romano2023fast}. For this setting, we can explicitly compute the maximum of the decision function.

\begin{theorem}
\label{prop:dual1D_max_exact}
The decision function $\mathbb{D}: [0, x_{max}) \to \R$ to Problem \eqref{pb:oneConstraint} given in Proposition~\ref{prop:dual1D_max} admits a closed-form maximum for the single-constraint case and one-parametric cost functions. If the argument of the maximum $x^{\star}$ of this function is not located on the frontier of the domain, then we prune index $s$ using index $r$ with $r<s$ when:
\begin{equation}
\label{eq:DUSTexact1D1C}
q_t^s\Big( -\frac{\Delta \overline{Q}_{rst}}{\Delta \overline{S}_{rst} } \Big) > Q_t + \beta\,.  
\end{equation}
\end{theorem}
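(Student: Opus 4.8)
The plan is to turn the positivity test ``$\max_x\mathbb{D}(x)>0$'' into the stated inequality by maximising $\mathbb{D}$ explicitly. First I would record a structural fact: by Proposition~\ref{prop:dual1D_max} the decision function is $\mathbb{D}(x)=-\D^*\big(\overline{S}_{st}+x\Delta\overline{S}_{rst}\big)-\big(\overline{Q}_{st}+x\Delta\overline{Q}_{rst}\big)$, i.e. the strictly convex conjugate $\D^*$ pre-composed with an affine map, minus an affine term. Since $\D^*$ is strictly convex, $\mathbb{D}$ is strictly concave on its interval (assuming $\Delta\overline{S}_{rst}\neq 0$, so the affine map is non-constant). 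Hence any critical point lying in the open interval $(0,x_{max})$ is \emph{the} unique global maximiser; this is exactly the ``interior'' hypothesis of the theorem, and it lets me locate the optimum through the single stationarity equation $\mathbb{D}'(x^\star)=0$.

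Next I would use the Legendre identity that, in the one-parametric case, the derivative of the conjugate is the inverse link. Differentiating $\D^*(x)=x\,(\nabla A)^{-1}(x)-A((\nabla A)^{-1}(x))$ from the notations section, the two terms carrying the derivative of $(A')^{-1}$ cancel because $A'((A')^{-1}(u))=u$, leaving $(\D^*)'(u)=(A')^{-1}(u)$. Substituting into $\mathbb{D}'$, the condition $\mathbb{D}'(x^\star)=0$ becomes $-(A')^{-1}\!\big(\overline{S}_{st}+x^\star\Delta\overline{S}_{rst}\big)\,\Delta\overline{S}_{rst}-\Delta\overline{Q}_{rst}=0$. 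Writing $\theta^\star:=(A')^{-1}\!\big(\overline{S}_{st}+x^\star\Delta\overline{S}_{rst}\big)$ for the natural parameter attached to $x^\star$, this collapses to $\theta^\star\,\Delta\overline{S}_{rst}=-\Delta\overline{Q}_{rst}$, that is $\theta^\star=-\Delta\overline{Q}_{rst}/\Delta\overline{S}_{rst}$ --- precisely the argument appearing in~\eqref{eq:DUSTexact1D1C}. This already exhibits the announced closed form of the maximiser.

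Finally I would evaluate $\mathbb{D}$ at $x^\star$. Using $\D^*\big(\overline{S}_{st}+x^\star\Delta\overline{S}_{rst}\big)=\big(\overline{S}_{st}+x^\star\Delta\overline{S}_{rst}\big)\theta^\star-A(\theta^\star)$ together with the stationarity relation $\Delta\overline{Q}_{rst}=-\theta^\star\Delta\overline{S}_{rst}$, every term proportional to $x^\star\Delta\overline{S}_{rst}$ cancels, leaving $\mathbb{D}(x^\star)=A(\theta^\star)-\overline{S}_{st}\theta^\star-\overline{Q}_{st}$. Recognising $A(\theta^\star)-\overline{S}_{st}\theta^\star=c(y_{st};\theta^\star)/(t-s)=(q_t^s(\theta^\star)-Q_s-\beta)/(t-s)$ from~\eqref{eq:cost}--\eqref{eq:qts}, and $\overline{Q}_{st}=(Q_t-Q_s)/(t-s)$, everything telescopes to the clean identity $\mathbb{D}(x^\star)=\big(q_t^s(\theta^\star)-(Q_t+\beta)\big)/(t-s)$. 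Since $t-s>0$ and, by concavity, $\mathbb{D}(x^\star)=\max_x\mathbb{D}(x)$, the pruning test ``$\max_x\mathbb{D}>0$'' is equivalent to $q_t^s(\theta^\star)>Q_t+\beta$, which is exactly~\eqref{eq:DUSTexact1D1C}.

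The hard part will be twofold: establishing the conjugate-derivative identity $(\D^*)'=(A')^{-1}$ cleanly, and carrying out the cancellation in the evaluation step, where it is precisely the stationarity relation that annihilates the $x^\star$-dependent terms and leaves a quantity proportional to $q_t^s(\theta^\star)-(Q_t+\beta)$. I would also be careful to invoke the interior hypothesis explicitly: concavity guarantees the critical point is the genuine maximiser only when it lies strictly inside $(0,x_{max})$; otherwise the optimum sits at the frontier $x=0$, which merely recovers the PELT-type evaluation $\D(0)$. It is worth stressing that only the \emph{sign} of $\max\mathbb{D}$ matters for pruning, so the fact that the maximiser of $\mathbb{D}$ need not coincide with the maximiser of the dual $\D$ (the positive rescaling factor $\tfrac{1+x}{t-s}$ in the proof of Proposition~\ref{prop:dual1D_max} depends on $x$) is immaterial to the conclusion.
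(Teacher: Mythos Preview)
Your proposal is correct and follows essentially the same route as the paper's proof: differentiate $\mathbb{D}$ using the Legendre identity $(\D^*)'=(A')^{-1}$, solve the stationarity equation to obtain $\theta^\star=-\Delta\overline{Q}_{rst}/\Delta\overline{S}_{rst}$, then evaluate $\mathbb{D}(x^\star)$ and recognise it as $(q_t^s(\theta^\star)-(Q_t+\beta))/(t-s)$. The only cosmetic difference is that the paper first isolates $x^\star$ explicitly before substituting, whereas you work directly with $\theta^\star$; your treatment of the boundary cases and of concavity also matches the paper's.
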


Note that certain special cases must be handled before testing inequality \eqref{eq:DUSTexact1D1C}, such as $x_{\max} = 0$, $\overline{S}_{st} = \overline{S}_{rs}$ (i.e., the decision function is linear) and $x^\star \notin (0, x_{\max})$. This theorem advantageously replaces the PELT pruning rule, which was given by $q_t^s((\nabla A)^{-1}(\overline{S}_{st})) > Q_t + \beta$. We will highlight the effectiveness of the DUST rule in simulation Section~\ref{sec:simus}.

%If the maximum value $x^*$ is within $(0, x_{max})$ (not modified by the projection), we get the following maximum value for the decision function:
%$$A\Big( -\frac{\Delta Q_{st} - \Delta Q_{rs}}{\overline{S}_{st} - \overline{S}_{rs}} \Big) - \Big( -\frac{\Delta Q_{st} - \Delta Q_{rs·}}{\overline{S}_{st} - \overline{S}_{rs}}\Big)\overline{S}_{st} - \Delta Q_{st}\,,$$
%or for the initial dual, leading to the solution of Problem~\eqref{pb:oneConstraint} given by:

%Its maximum is given by:
%\begin{equation} x^* =(\overline{S}_{st} - \overline{S}_{rs})^{-1}\Big( A'\Big( -\frac{\Delta Q_{st} - \Delta Q_{rs·}}{\overline{S}_{st} - \overline{S}_{rs}}\Big) - \overline{S}_{st} \Big)\,,
%\end{equation}
%with $A'$ the derivative of $A$ and notation $\Delta Q_{ij} =\frac{Q_j-Q_i}{j-i}$.\\ The useful maximum value for pruning is constrained by the domain of $\D^*$ and the positiveness of $x$. With the domain $(m,M)$ for $\D^*$ (with possible infinite bound) and notations $r_m = \frac{m - \overline{S}_{st}}{\overline{\mathbf{S}}_{st} - \overline{S}_{rs}}$, $r_M = \frac{M - \overline{S}_{st}}{\overline{\mathbf{S}}_{st} - \overline{S}_{rs}}$; the maximum of the dual is  the closest point for $x^*$ in $[0,x_{max}) = [\min\{r_m,r_M\}, \max\{r_m,r_M\}) \cap \R^+$.\\

\subsection{DUST algorithm}
\label{subsec:dust1constraint}

We have designed our new pruning method to be as straightforward and practical as the PELT rule. The DUST multiple change-point detection algorithm is described in Algorithm~\ref{algo:dust}. Here, the value $c(y_{st})$ is defined as the minimum of the segment cost function $\ttheta \mapsto c(y_{st}; \ttheta)$. For the smallest index in $\mathcal{T}_t$, we use the standard PELT pruning rule by default. The index $r$ is selected from a given probability distribution $p_{r}$ over the indices in $\mathcal{T}_t$ that are smaller than $s$, while the value $\mu_0$ is sampled from a distribution $p_{\mu}$ supported on the segment $[0, \mu_{\text{max}} \le 1)$. For one-parametric cost functions, we can consider a Dirac data-dependent $p_{\mu}$ which matches relation~\eqref{eq:DUSTexact1D1C} in addition to its limit cases.

Simulations (Section~\ref{sec:simus}) explore several choices for the distributions $p_r$ and $p_{\mu}$. In the single-constraint setting, an effective strategy is to select $r$ as the closest index below $s$, and to choose $\mu_0$ as the argument that maximises the dual function, using a simple iterative maximisation procedure or relation~\eqref{eq:DUSTexact1D1C} for one-parametric cost functions. For simplicity, the algorithm is presented with the dual $\D$ with bounded domain, but a similar algorithm can be written with decision function  $\mathbb{D}$. From the output of Algorithm~\ref{algo:dust}, we can easily recover the optimal segmentation for the initial problem~\eqref{eq:global_cost} via a standard backtracking step described in Algorithm~\ref{algo:backtrack}.

\begin{algorithm}[!ht]
\caption{DUST algorithm}
\label{algo:dust}
\begin{algorithmic}[1]
\Require Time series $y_{0n}$, penalty value $\beta>0$, rules $p_r$, $p_{\mu}$
\Ensure Sequences of global costs $Q_{0n}$ and last changes $\hat s_{0n}$ 
\State $Q_0 \gets 0,\quad \mathcal{T}_1 \gets \{0\}$
\For{$t = 1,\dots, n$} \Comment{OP step}
    \State $Q_t \gets \min_{s \in \mathcal{T}_t} \left\{ Q_s + c(y_{st}) + \beta \right\}$
    \State $\hat s_t \gets \argmin_{s \in \mathcal{T}_t} \left\{Q_s + c(y_{st}) + \beta \right\}$
    \For{$s \in \mathcal{T}_t$}  \Comment{DUST pruning step}
       \State Draw $r$ in $\mathcal{T}_t$ such that $r < s$ from distribution $p_{r}$
        \State Compute $\mu_{max} = \mu_{max}^{r,s,t}$ (data dependent) 
        \State Draw $\mu_0$ in $[0, \mu_{max})$ from distribution $p_{\mu}$
         \If {$\D(\mu_0) > Q_t + \beta$}  \Comment{DUST test}
         \State {$\mathcal{T}_t \gets \mathcal{T}_t \backslash\{s\}$}
       \EndIf
    \EndFor
 \State $\mathcal{T}_{t+1} \gets \mathcal{T}_t \cup  \{t\}$
\EndFor
\end{algorithmic}
\end{algorithm}

\begin{algorithm}[!ht]
\caption{Backtracking the change-point locations}
\label{algo:backtrack}
\begin{algorithmic}[1]
\Require $\hat s_{0n}$
\Ensure Set of optimal change point indices $\widehat{\Tau} = \{\tau_1, \tau_2,\dots\}$
% \State {\bf procedure} $(y_{1:n}, \beta)$
\State $\tau \gets n$, $\widehat{\Tau} \gets \emptyset$
\While{$\tau>0$}
\State $\widehat{\Tau} \gets (\tau, \widehat{\Tau})$, $t \gets \hat{s}_{\tau}$
\EndWhile
\end{algorithmic}
\end{algorithm}

\begin{remark}
DUST extends PELT by evaluating the dual function beyond zero. While PELT applies the test $\mathcal{D}(0) > Q_t + \beta$, derived from the unconstrained form of Problem~\eqref{pb:oneConstraint}, DUST samples $\mu_0$ in $[0, \mu_{\text{max}})$, potentially yielding stronger pruning. This makes DUST especially effective in time series with few change points, where PELT tends to prune poorly. Despite its improved pruning, DUST maintains a complexity close to PELT, with minimal overhead from computing $\mu_{\text{max}}$ and selecting $r$ and $\mu_0$.
\end{remark}

\subsection{Example of the change-in-mean-and-variance problem}
\label{subsec:example}

Detecting changes in both the mean and variance of a large Gaussian time series is challenging. To the best of our knowledge, no efficient implementation currently exists in this setting due to two main difficulties. First, the problem involves two parameters, which limit the applicability of methods like FPOP \cite{maidstone2017optimal}. Second, a logarithmic term in the cost function complicates root-finding, making methods such as geomFPOP \cite{pishchagina2024geometricbased} difficult to apply. We address this challenge using the DUST method. We write down the likelihood of segment $y_{st}$ with mean $m$ and variance $\sigma^2$,
$$
\mathcal{L}(y_{st}; [m; \sigma^2])=  \prod_{i=s+1}^t \biggr[ \frac{1}{\sqrt{2\pi\sigma^2}}\exp\Big(-\frac{(m-y_i)^2}{2\sigma^2}\Big) \biggr]\,,
$$
and its associated cost:
$$c(y_{st}; [m; \sigma^2]) =  (t-s)\Big(\frac{(m-\overline{y}_{st})^2 + V (y_{st})}{2\sigma^2} + \frac{1}{2}\log(\sigma^2)\Big)\,.$$

With the change of variable $(\theta_1,\theta_2) = (\frac{m}{\sigma^2}; -\frac{1}{2\sigma^2}) \in \R \times \R^-$, we obtain the canonical form of Example~\ref{ex:MeanAndVar} with $A(\theta_1,\theta_2) =-\frac{\theta_1^2}{4\theta_2} + \frac{1}{2}\log(-\frac{1}{2\theta_2})$. We can now derive the exact maximum value of the associated decision function.

\begin{lemma}
The one-dimensional decision function for the change-in-mean-and-variance problem with a single constraint from the cost function \eqref{eq:MeanAndVar} is given by the function:
$$
\mathbb{D}(x) =  \frac{1}{2}\Bigg[1+ \log\Big( \overline{S^2_{st}}+x\Delta\overline{S_{rst}^2}-(\overline{S}_{st}+ x\Delta\overline{S}_{rst})^2\Big)\Bigg]- \Big(\overline{Q}_{st} + x\Delta \overline{Q}_{rst}\Big)\,,$$
with $x \in (x_0 - \sqrt{x_1}, x_0 + \sqrt{x_1})$ and 
$$x_0 = \frac{1}{2}\Big(\frac{V(y_{st}) - V(y_{rs})}{(\Delta\overline{S}_{rst})^2} - 1\Big)\,\quad x_1 = x_0^2 +  \frac{V(y_{st})}{(\Delta\overline{S}_{rst})^2}$$
and its maximum, with notation $x_2 = \Delta \overline{Q}_{rst}$, is evaluated in:
$$x^{\star} = \max \Big\{0, x_0 + (2 x_2)^{-1} - \operatorname{sign}(x_2) \sqrt{x_1 + (2 x_2)^{-2}}\Big\}\,.$$
\end{lemma}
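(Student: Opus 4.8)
The plan is to specialise the generic decision function of Proposition~\ref{prop:dual1D_max} to the bivariate sufficient statistic $\mathbf{T}(y) = (y, y^2)$, so that $\overline{\mathbf{S}}_{st} = (\overline{S}_{st}, \overline{S^2_{st}})$, and then to carry out three computations in turn: an explicit formula for $\D^*$, the positivity domain of the logarithm, and the location of the interior maximiser.

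First I would compute the conjugate $\D^*$ for $A(\theta_1,\theta_2) = -\theta_1^2/(4\theta_2) + \tfrac12\log(-1/(2\theta_2))$. Writing the mean parameters $\eta = \nabla A(\theta)$, a short differentiation gives $\eta_1 = -\theta_1/(2\theta_2)$ and $\eta_2 = \theta_1^2/(4\theta_2^2) - 1/(2\theta_2)$, so that $(\nabla A)^{-1}$ is obtained in closed form through $\sigma^2 = \eta_2 - \eta_1^2$ and $m = \eta_1$, i.e. $\theta_1 = \eta_1/(\eta_2-\eta_1^2)$ and $\theta_2 = -1/(2(\eta_2-\eta_1^2))$. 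Substituting into $\D^*(\eta) = \eta\cdot(\nabla A)^{-1}(\eta) - A((\nabla A)^{-1}(\eta))$ and simplifying, the $\eta_1$-dependent rational terms collapse to the constant $-\tfrac12$, leaving the clean expression $\D^*(\eta_1,\eta_2) = -\tfrac12\bigl(1 + \log(\eta_2 - \eta_1^2)\bigr)$. Plugging $\eta_1 = \overline{S}_{st} + x\Delta\overline{S}_{rst}$ and $\eta_2 = \overline{S^2_{st}} + x\Delta\overline{S^2_{rst}}$ into $-\D^*$ then reproduces the stated formula for $\mathbb{D}(x)$.

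For the domain I would observe that the logarithm's argument $g(x) := \overline{S^2_{st}} + x\Delta\overline{S^2_{rst}} - (\overline{S}_{st} + x\Delta\overline{S}_{rst})^2$ is a concave quadratic in $x$ with leading coefficient $-(\Delta\overline{S}_{rst})^2 < 0$, hence positive exactly on the open interval between its two roots. Completing the square writes $g(x) = (\Delta\overline{S}_{rst})^2\bigl(x_1 - (x - x_0)^2\bigr)$, whose roots are $x_0 \pm \sqrt{x_1}$. The only real content here is the algebraic identity pinning down the vertex abscissa: using $r<s$ (so $\psi_{rs}=1$, $\Delta\overline{S}_{rst} = \overline{S}_{st} - \overline{S}_{rs}$ and likewise for the squared statistic), one checks that the linear coefficient $\Delta\overline{S^2_{rst}} - 2\overline{S}_{st}\Delta\overline{S}_{rst}$ equals $V(y_{st}) - V(y_{rs}) - (\Delta\overline{S}_{rst})^2$, from which $x_0 = \tfrac12\bigl((V(y_{st}) - V(y_{rs}))/(\Delta\overline{S}_{rst})^2 - 1\bigr)$ follows; then $x_1 = x_0^2 + V(y_{st})/(\Delta\overline{S}_{rst})^2$ drops out of $g(x_0)/(\Delta\overline{S}_{rst})^2$. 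Since $V(y_{st}) - V(y_{rs}) = \overline{S^2_{st}} - \overline{S}_{st}^2 - \overline{S^2_{rs}} + \overline{S}_{rs}^2$, this identity is a one-line expansion.

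Finally I would maximise $\mathbb{D}$. Because $g$ is strictly concave and positive on its domain and $\log$ is strictly increasing and concave, $\log g$ is strictly concave, so $\mathbb{D}$ — this function minus the linear term $\overline{Q}_{st} + x\Delta\overline{Q}_{rst}$ — is strictly concave and tends to $-\infty$ at both endpoints; its maximiser is therefore the unique interior critical point. Setting $\mathbb{D}'(x)=0$ with $u := x-x_0$ reduces, after clearing the denominator $g(x)=(\Delta\overline{S}_{rst})^2(x_1-u^2)$, to the quadratic $x_2 u^2 - u - x_2 x_1 = 0$ with $x_2 = \Delta\overline{Q}_{rst}$, whose roots are $u = (2x_2)^{-1} \pm \operatorname{sign}(x_2)\sqrt{x_1 + (2x_2)^{-2}}$. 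The delicate step, which I expect to be the main obstacle, is the root selection: I would show that $u \mapsto -u/(x_1 - u^2)$ is strictly decreasing from $+\infty$ to $-\infty$ on $(-\sqrt{x_1},\sqrt{x_1})$, so exactly one root lies in the domain, and a sign comparison (splitting on the sign of $x_2$) singles out the branch carrying $-\operatorname{sign}(x_2)$. Adding back $x_0$ and intersecting with the admissible half-line $x \ge 0$ of the decision variable $x = \mu/(1-\mu)$ — where strict concavity forces the constrained optimum to the boundary $x=0$ whenever the interior critical point is negative, and where $\sqrt{x_1}\ge|x_0|$ guarantees $0$ lies in the domain — yields $x^{\star} = \max\{0, x_0 + (2x_2)^{-1} - \operatorname{sign}(x_2)\sqrt{x_1 + (2x_2)^{-2}}\}$.
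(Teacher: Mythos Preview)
Your proof is correct and follows essentially the same route as the paper's (Appendix~\ref{app:meanVar}): compute $\D^*$ via $(\nabla A)^{-1}$, read off the domain from the positivity of the concave quadratic inside the logarithm, and locate the critical point by solving the quadratic obtained from $\mathbb{D}'(x)=0$. You are in fact more careful than the paper on the root-selection step and on justifying the clipping at $x=0$; the paper dispatches both with ``after simple computations we get the desired result.''
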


In Figure~\ref{fig:pruningMV}, we evaluate the efficiency of DUST for this example. For each index~$s$ considered for pruning, we choose the biggest non-pruned index smaller than $s$ and evaluate the dual at its maximum position $x^{\star}$. With $n=10^4$, $\beta = 4 \log(n)$ and data with no change ($\mathcal{N}(0,1)$), only $2.95\%$ of the indices remain non-pruned and the overall number of indices to be considered is reduced by a factor $28$ compared to PELT (left panel). %Running times with dust Rcpp package are clear, for $n = 10^5$ and no change, $T(DUST) = 4.7 s$ while $T(PELT) = 82s$.

\begin{figure}[!t]%
\centering
%{\color{black!20}\rule{213pt}{137pt}}
\includegraphics[width=0.47\textwidth]{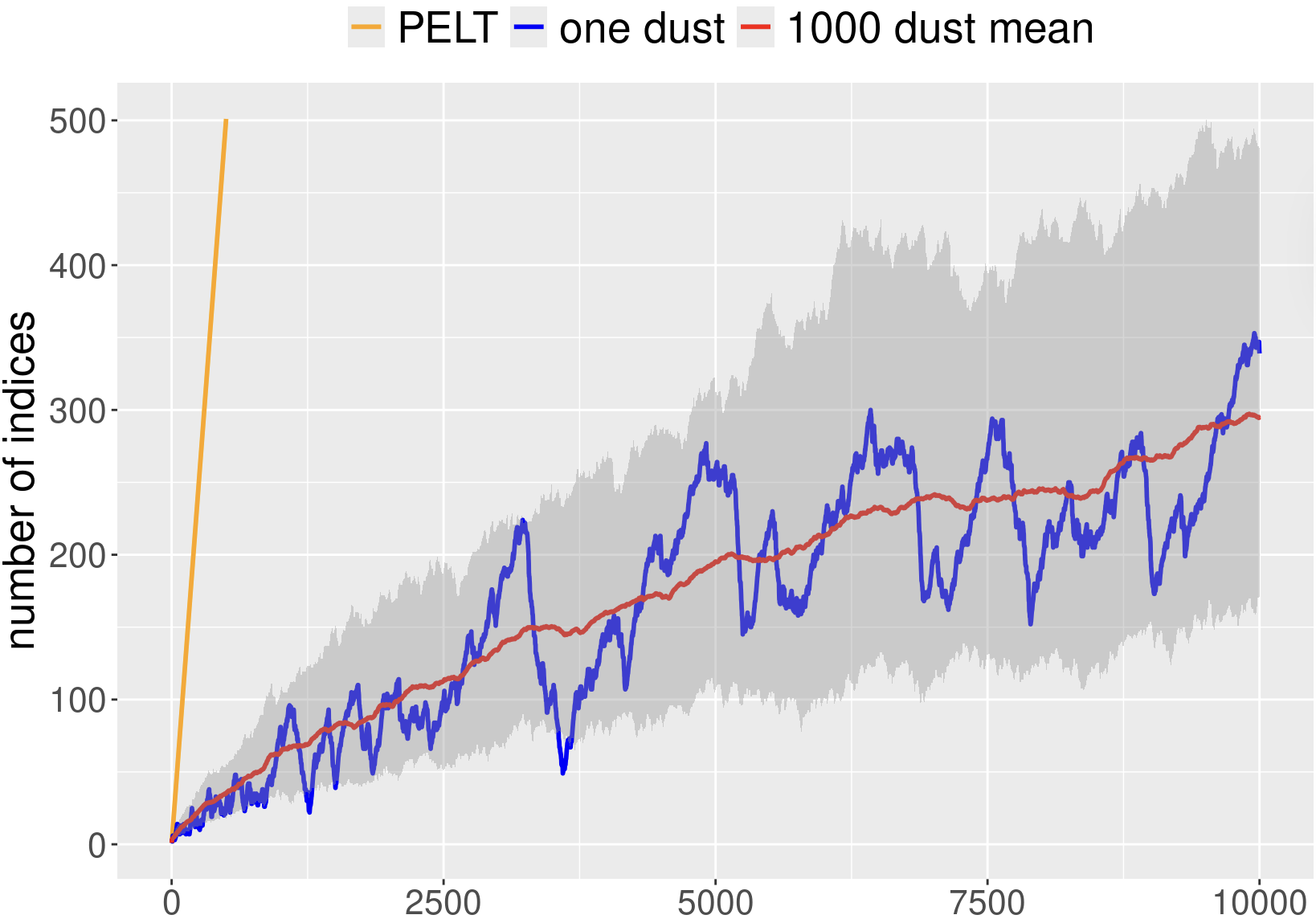}
\includegraphics[width=0.47\textwidth]{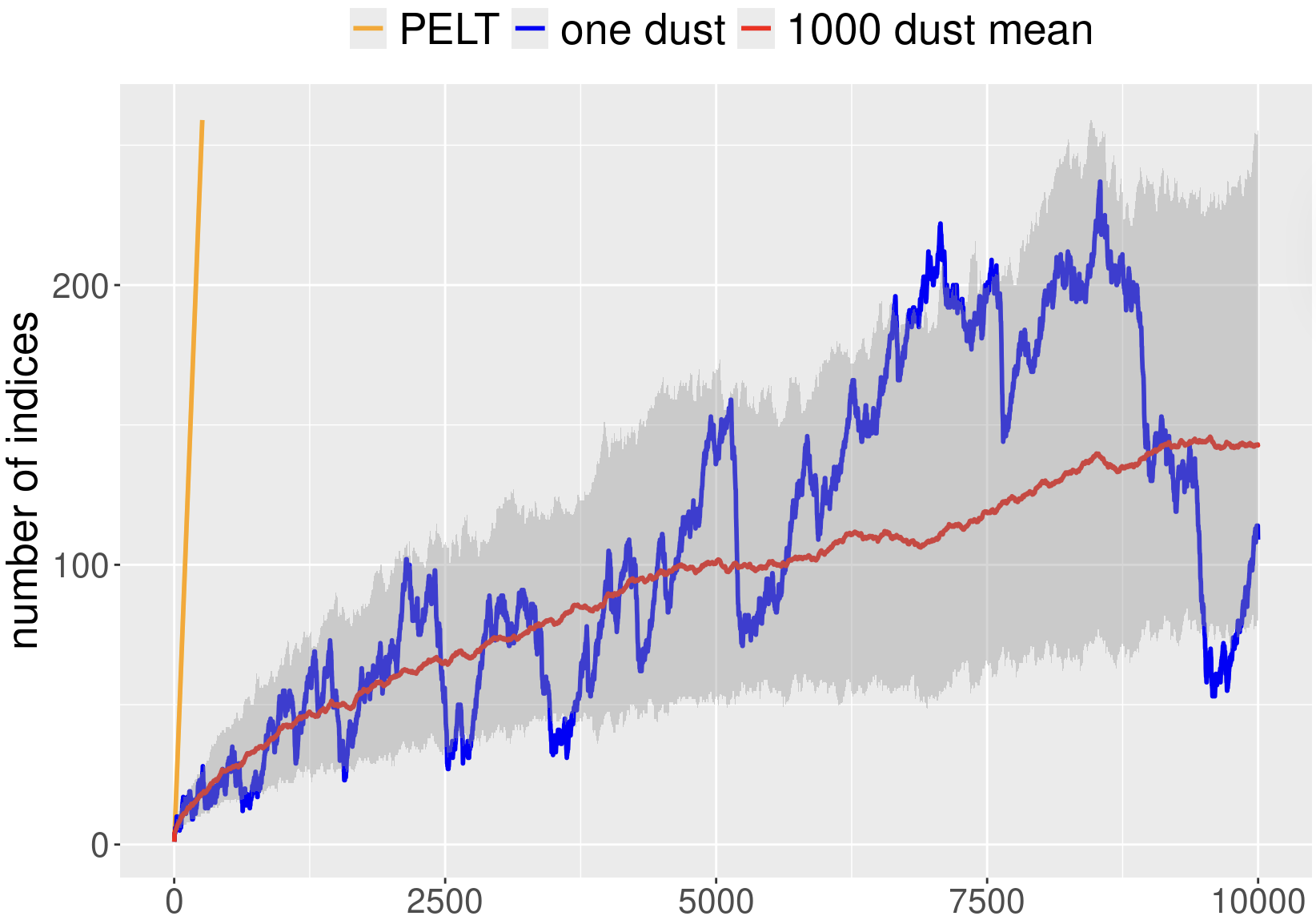}
\caption{For data with no change, the number of indices saved by DUST over time is consistently much less than PELT (no pruning), for the one-constraint DUST (left) and the two-constraint case (right).}\label{fig:pruningMV}
\end{figure}

A higher lower bound for $R_t^s$ is obtained when we consider two constraints: $q_t^s - q_t^{r_1} \le 0$ and $q_t^s - q_t^{r_2} \le 0$. This leads to a bi-parametric decision function:
$$
\mathbb{D}(x_1,x_2) =  \frac{1}{2}\Bigg[1+ \log\Big( \overline{S^2_{st}}+x_1\Delta\overline{S_{r_1st}^2}+x_2\Delta\overline{S_{r_2st}^2})-(\overline{S}_{st}+ x_1\overline{S}_{r_1st} + x_2\overline{S}_{r_2st})^2\Big)\Bigg]$$
$$-\Big(\overline{Q}_{st} + x_1 \Delta \overline{Q}_{r_1st} + x_2 \Delta \overline{Q}_{r_2st}\Big)\,.$$
Such a general dual and its properties are studied in the next Section. In this mean-and-variance problem, explicit maximal values are available for the three values to be tested: $\max_{x_1}\mathbb{D}(x_1,0)$,  $\max_{x_2}\mathbb{D}(0,x_2)$ and  $\max_{x_1,x_2}\mathbb{D}(x_1,x_2)$ with positive $x_1, x_2$. In Figure~\ref{fig:pruningMV} (right panel) we run the same simulation as in the left panel but with a bi-parametric dual, built from the two highest indices smaller than $s$ ($r_1 < r_2 < s$). Only $1.42\%$ of the indices remain non-pruned and the overall number of indices to be considered is reduced by a factor $54$ compared to PELT. In this example, $0$ indice is pruned by PELT, $36\%$ of indices are pruned by the $r_2$-dual,  $23\%$ of indices are pruned by the $r_1$-dual and $40\%$ by the maximum of the  $(r_1,r_2)$-dual (function $\mathbb{D}(x_1,x_2)$ with $x_1 > 0$ and $x_2 > 0$). We recall that this DUST pruning procedure does not involve any iterative routine, but relies solely on three inequality tests (as PELT) evaluated at three points of the decision function, each computed from a closed-form expression. With $10^6$ data point, the percentage of remaining indices drops to $0.5\%$. 

\section{Duality function in change-point problems}
\label{sec:duality}

\subsection{The shape of the dual}

We consider the pruning problem \eqref{pb:pruningProblem}: $\min_{\ttheta\in \Theta} q_t^s(\ttheta)$ under constraints $q_t^s(\ttheta)- q_t^{r}(\ttheta) \le 0$ for all indices $r \ne s$ in $\{0,\ldots,t-1\}$. As for the single-constraint case, the dual and the decision functions can be explicitly written in closed form.

\begin{theorem}
\label{th:dualMultiple}
The dual function $\D_{st}: (\R^+)^{t-1} \to \R$ to problem~\eqref{pb:pruningProblem} with inner cost functions \eqref{eq:qts} and cost \eqref{eq:cost} for pruning index $s$ using all indices $r$ with $r\ne s$ is
given by:
$$\D_{st}(\mu) =  (t-s) \Bigg[- l(\mu)  \mathcal{D^*}(\mathbf{m}(\mu)) + \sum_{r \ne s}\mu_{r}\psi_{rs}\overline{Q}_{rs}\Bigg]  + Q_s + \beta\,,$$
with $\mu = (\mu_{r})_{0 \le r < t, r \ne s}$ and
\begin{equation}\label{eq:dual_ratio}
\mathbf{m}(\mu) = \frac{ \overline{\mathbf{S}}_{st} - \sum_{r \ne s}\mu_{r}\psi_{rs} \overline{\mathbf{S}}_{rs}}{1 -\sum_{r \ne s}\mu_{r}\psi_{rs}} \in \R^d\,,\quad
l(\mu) = 1 -\sum_{r \ne s}\mu_{r}\psi_{rs}\,.    
\end{equation}
We also define the decision function $\mathbb{D}_{st}$ (as done in previous section) with the change of variable $x_r = \mu_r/l(\mu)$ and notation $\mathbf{x} =  (x_{r})_{0 \le r < t, r \ne s}$. It is given by a simpler relation:
\begin{align}
\label{eq:decisionDualMD}
\mathbb{D}_{st}(\mathbf{x}) 
&= -\D^*\Big(\overline{\mathbf{S}}_{st} + \sum_{r \ne s} x_{r} \Delta \overline{\mathbf{S}}_{rst}\Big) - \Big(\overline{Q}_{st} + \sum_{r \ne s} x_r \Delta \overline{Q}_{rst}\Big)\,, \nonumber \\
&= -\D^*(\sigma(\mathbf{x})) - \phi(\mathbf{x})\,,
\end{align}
leading to the pruning rule $\mathbb{D}_{st}(\mathbf{x}_0) >0$ for some chosen $\mathbf{x}_0$ in the domain $\Omega_{\mathbf{x}}$.
\end{theorem}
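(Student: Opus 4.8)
The dual function is the Lagrange dual $\D_{st}(\mu) = \inf_{\ttheta \in \Theta}\mathcal{L}(\ttheta,\mu)$ of the constrained problem~\eqref{pb:pruningProblem}, so the whole task is to write this infimum in closed form. The plan is to (i) linearise every constraint in $\ttheta$ using the additivity identity~\eqref{eq:additive}, (ii) rescale the multipliers so that the segment lengths disappear, (iii) recognise the inner minimisation over $\ttheta$ as minus the Legendre conjugate of $A$, and (iv) perform the stated change of variables to pass from $\D_{st}$ to $\mathbb{D}_{st}$.

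First I would form
$$\mathcal{L}(\ttheta,\mu) = q_t^s(\ttheta) + \sum_{r\ne s}\mu_r\big(q_t^s(\ttheta) - q_t^r(\ttheta)\big), \qquad \mu = (\mu_r)_{r\ne s}\in(\R^+)^{t-1},$$
and simplify each difference $q_t^s - q_t^r$. Writing $q_t^s = Q_s + c(y_{st};\cdot)+\beta$ and applying~\eqref{eq:additive} gives $c(y_{st};\ttheta) - c(y_{rt};\ttheta) = -c(y_{rs};\ttheta)$ when $r<s$ and $= c(y_{sr};\ttheta)$ when $s<r$. Substituting $c(y_{ab};\ttheta) = (b-a)A(\ttheta) - \ttheta\cdot\mathbf{S}_{ab}$ together with the definitions of $\overline{\mathbf{S}}_{rs}$, $\overline{Q}_{rs}$ and $\psi_{rs}$, both cases merge into the single affine-in-$\ttheta$ form
$$q_t^s(\ttheta) - q_t^r(\ttheta) = \psi_{rs}\,|s-r|\,\big(\overline{Q}_{rs} - A(\ttheta) + \ttheta\cdot\overline{\mathbf{S}}_{rs}\big).$$

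Second, I would absorb the segment lengths through the rescaling $\mu_r \mapsto \mu_r\,(t-s)/|s-r|$ (a positive rescaling, hence preserving $(\R^+)^{t-1}$), which turns the constraint sum into $(t-s)\sum_{r\ne s}\mu_r\psi_{rs}(\overline{Q}_{rs}-A(\ttheta)+\ttheta\cdot\overline{\mathbf{S}}_{rs})$. Collecting the $A(\ttheta)$-term, the $\ttheta\cdot(\,\cdot\,)$-term and the $\ttheta$-free terms, then factoring $(t-s)l(\mu)$ out of the first two, yields
$$\mathcal{L}(\ttheta,\mu) = (t-s)\,l(\mu)\big(A(\ttheta) - \ttheta\cdot\mathbf{m}(\mu)\big) + (t-s)\sum_{r\ne s}\mu_r\psi_{rs}\overline{Q}_{rs} + Q_s + \beta,$$
with $l(\mu)$ and $\mathbf{m}(\mu)$ exactly as in~\eqref{eq:dual_ratio}. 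On the region $l(\mu)>0$ the inner minimisation is $\inf_{\ttheta}(A(\ttheta)-\ttheta\cdot\mathbf{m}) = -\sup_{\ttheta}(\ttheta\cdot\mathbf{m}-A(\ttheta)) = -\D^*(\mathbf{m})$, since $A$ is strictly convex and differentiable so the supremum is attained at $\ttheta = (\nabla A)^{-1}(\mathbf{m})$, matching the definition of $\D^*$ recalled among the notations. Multiplying by the positive factor $(t-s)l(\mu)$ produces the announced formula for $\D_{st}(\mu)$; the domain is the subset of $(\R^+)^{t-1}$ on which $l(\mu)>0$ and $\mathbf{m}(\mu)$ stays in the effective domain of $\D^*$, generalising the scalar bound $\mu_{max}$.

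Finally, for the decision function I would follow the scalar recipe of Proposition~\ref{prop:dual1D_max}: shift $\D_{st}$ by $Q_t+\beta$, divide by the positive quantity $(t-s)l(\mu)$, and set $x_r = \mu_r/l(\mu)$, so that $1/l(\mu) = 1 + \sum_{r\ne s}x_r\psi_{rs}$. Substituting these relations and using $Q_s - Q_t = -(t-s)\overline{Q}_{st}$, the $\overline{Q}$-terms reorganise into $-(\overline{Q}_{st} + \sum_{r\ne s}x_r\Delta\overline{Q}_{rst}) = -\phi(\mathbf{x})$, while the argument of $\D^*$ becomes $\overline{\mathbf{S}}_{st} + \sum_{r\ne s}x_r\Delta\overline{\mathbf{S}}_{rst} = \sigma(\mathbf{x})$, giving $\mathbb{D}_{st}(\mathbf{x}) = -\D^*(\sigma(\mathbf{x})) - \phi(\mathbf{x})$; since this is a positive rescaling of $\D_{st}(\mu)-(Q_t+\beta)$, the rule $\mathbb{D}_{st}(\mathbf{x}_0)>0$ is equivalent to $\D_{st}(\mu)>Q_t+\beta$, which by weak duality forces $R_t^s>Q_t+\beta$ and thus licenses pruning via Proposition~\ref{prop:func-pruning}. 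The main obstacle is purely the sign bookkeeping around $\psi_{rs}$: one must verify that the $r<s$ and $s<r$ branches genuinely coincide in the unified constraint form, and then that the same indicator correctly converts $\psi_{rs}(\overline{Q}_{rs}-\overline{Q}_{st})$ and $\psi_{rs}(\overline{\mathbf{S}}_{st}-\overline{\mathbf{S}}_{rs})$ into the difference operators $\Delta\overline{Q}_{rst}$ and $\Delta\overline{\mathbf{S}}_{rst}$; the analytic content (attainment of the conjugate, positivity of $(t-s)l(\mu)$) is routine once $l(\mu)>0$ is imposed.
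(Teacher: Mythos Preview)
Your proposal is correct and follows essentially the same route as the paper's proof: form the Lagrangian, minimise over $\ttheta$ by recognising the Legendre conjugate of $A$ (equivalently, solving $\nabla_{\ttheta}\mathcal{L}=0$ for $\ttheta^*(\mu)=(\nabla A)^{-1}(\mathbf{m}(\mu))$), apply the rescaling $\overline{\mu}_r=\mu_r(t-s)/|s-r|$, and then obtain $\mathbb{D}_{st}$ via the change of variable $x_r=\mu_r/l(\mu)$. Your treatment is in fact more explicit than the paper's---particularly the unified handling of the $r<s$ and $s<r$ branches through $\psi_{rs}$ and the verification that $\mathbf{m}(\mu)$ and the $\overline{Q}$-terms reorganise into $\sigma(\mathbf{x})$ and $\phi(\mathbf{x})$---but the underlying argument is identical.
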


Notice that $\mathbf{m}(\mu)$ is a vector if the statistics $\mathbf{S}_{st}$ are vector-valued. The case $\overline{\mathbf{S}}_{st}=\overline{\mathbf{S}}_{rs}$ for all $r \ne s$ can be left apart. It corresponds to $\mathbf{m}(\mu) = \overline{\mathbf{S}}_{st}$ and the dual $\D_{st}$ becomes a linear function. The values of the decision function can be interpreted as evaluations of $(t-s)^{-1}(q_t^s - (Q_t + \beta))$ where values $\overline{\mathbf{S}}_{st}$ and $\overline{Q}_{st}$ are replaced by linear combinations of the available $\Delta \overline{\mathbf{S}}_{rst}$ and $\Delta \overline{Q}_{rst}$.

\begin{remark}
With constraints of type "$r < s$" only, the domain $\Omega_\mu$ of $\D_{st}$ is bounded and included in the simplex of dimension $d$.
\end{remark}

Appendix \ref{app:exampleDual} presents the form of the function $\D^*$ for many distributions. We give here the explicit form of the decision function for two examples in dimension $d$ with $q$ constraints (in index set $\mathcal{R} \subset \{0,\ldots,t-1\}$) with $|\mathcal{R}| = q \le d$.

(i) Multivariate Gaussian:
$$\mathbb{D}_{st}^{\mathcal{N}}(\mathbf{x})= -\frac{1}{2} \sum_{i=1}^d \sigma_i(\mathbf{x})^2 - \phi(\mathbf{x}) = -\frac{1}{2} \sum_{i=1}^d \left((\overline{\mathbf{S}}_{st})_i + \sum_{r \in \mathcal{R}} x_r (\Delta \overline{\mathbf{S}}_{rst})_i \right)^2 - \phi(\mathbf{x})\,.$$

(ii) Multivariate Bernoulli:
$$\mathbb{D}_{st}^{\mathcal{B}}(\mathbf{x}) =  -\sum_{i=1}^d \Bigg[\sigma_i(\mathbf{x})\log(\sigma_i(\mathbf{x})) -  (1-\sigma_i(\mathbf{x}))\log(1-\sigma_i(\mathbf{x}))\Bigg] - \phi(\mathbf{x})\,.$$

In this latter case, its domain $\Omega_{\mathbf{x}}$ is at the intersection of $3d$ half-spaces in dimension $q$ given by relations $\sigma_i(\mathbf{x}) \ge 0$, $1 - \sigma_i(\mathbf{x}) \ge 0$ and the positive orthant $\mathbf{x} > 0$.  The shape of the domains ($\Omega_{\mathbf{x}}$ and $\Omega_{\mu}$) depends on the underlying distribution used.

\begin{definition}[DUST pruning rule with multiple constraints]
\label{def:dualmultiD}
We consider the dual function $\mathcal{D}_{st}: \Omega_\mu \to \mathbb{R}$ and the decision function $\mathbb{D}_{st}: \Omega_{\mathbf{x}} \to \mathbb{R}$. At time $t$, we prune index $s$ using candidate indices from a set $\mathcal{R} \subset \{0, \ldots, t-1\} \setminus \{s\}$ to construct the dual, if there exists $\mu_0 \in \Omega_\mu$ such that $\mathcal{D}_{st}(\mu_0) > Q_t + \beta$, or equivalently, if there exists $\mathbf{x}_0 \in \Omega_{\mathbf{x}}$ such that $\mathbb{D}_{st}(\mathbf{x}_0) > 0$.
\end{definition}

This definition leaves implicit three important choices that must be addressed: (i) the selection of candidate indices in $\mathcal{R}$, (ii) the method for identifying a suitable evaluation point ($\mu_0$ or $\mathbf{x}_0$), and (iii) a precise characterisation of the domain of definition ($\Omega_\mu$ or $\Omega_{\mathbf{x}}$). Since these domains always lie within the positive orthant, solving the equation $\nabla \mathbb{D}_{st}(\mathbf{x}) = 0$ is never a valid option. For now, we address the problem using a quasi-Newton iterative algorithm or random sampling, based on a carefully chosen distribution over $\Omega_\mu$. This domain is defined through the mean parameter space $\mathcal{M}$ \cite{wainwright2008graphical}. We can show that the correct number $q$ of constraints to consider is upper bounded by the dimension $d$ of the parameter space $\Theta \subset \mathbb{R}^d$.

\subsection{d-Strong duality}

Geometrically, the solution to our optimisation problem~\eqref{pb:pruningProblem} lies at the intersection of at most $d+1$ inner functions in a parameter space of dimension $d$. This implies that no more than $d$ constraints can be active at the solution point. Identifying the correct subset of $q \le d$ active constraints seems out of reach. Nevertheless, focusing on at most $d$ constraints among the available $t-1$ is particularly appealing: in this case, the maximum of the dual function coincides with the solution of the original optimisation problem.
\begin{theorem}
\label{th:noDualityGap}
When the dual function $\D_{st}$ is built with at most $d$ constraints for d-parametric cost functions, there is no duality gap: $$R_t^s:=\min_{\ttheta\in\Theta_t^s(\mathcal{R})\subset \R^d}\Big\{q_t^s(\ttheta)\Big\} = \max_{\mu \in \Omega_{\mu} \subset \R^q} \Big\{\D_{st}(\mu)\Big\}\,,$$
where $\Theta_t^s(\mathcal{R})$ is the feasible set defined by the $q$ constraints with $q \le d$ such that $q_t^s(\ttheta) - q_t^{r_i}(\ttheta) \le 0$ with $r_i \in \mathcal{R}$, $i=1,\ldots,q = |\mathcal{R}|$ (see also~\eqref{eq:thetaSet}).
\end{theorem}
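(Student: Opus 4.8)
The plan is to prove the two inequalities $\max_{\mu}\D_{st}(\mu)\le R_t^s$ and $\max_{\mu}\D_{st}(\mu)\ge R_t^s$ separately. The first (weak duality) is immediate and purely structural: for any admissible $\mu\ge 0$ and any primal-feasible $\ttheta$ one has $\mathcal{L}(\ttheta,\mu)=q_t^s(\ttheta)+\sum_{r}\mu_r\big(q_t^s(\ttheta)-q_t^r(\ttheta)\big)\le q_t^s(\ttheta)$, so $\D_{st}(\mu)=\min_\ttheta\mathcal{L}(\ttheta,\mu)\le R_t^s$; moreover $\D_{st}$ is concave on $\Omega_\mu$ as an infimum over $\ttheta$ of functions that are affine in $\mu$, so the maximum is well posed. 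The whole difficulty lies in the reverse inequality, which I would obtain by exhibiting a dual point $\mu^\star\in\Omega_\mu$ with $\D_{st}(\mu^\star)=R_t^s$.

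First I would secure a primal minimiser $\ttheta^\star$ of $q_t^s$ over the feasible set $\Theta_t^s(\mathcal{R})$. Strict convexity of $q_t^s$ (inherited from $A$), closedness of the feasible set, and essential smoothness (steepness) of the log-partition $A$ on $\Theta$ --- the ``mild assumptions'' --- guarantee existence with $\ttheta^\star$ in the interior of $\Theta$. At $\ttheta^\star$ at most $q\le d$ constraints are active, and because $q\le d$ their gradients can be taken linearly independent in $\R^d$ (linear independence constraint qualification, LICQ); this is exactly where the bound ``at most $d$ constraints'' enters. Standard first-order necessary conditions then produce multipliers $\mu^\star\ge 0$ with $\nabla_\ttheta\mathcal{L}(\ttheta^\star,\mu^\star)=0$ and complementary slackness $\mu_r^\star\,(q_t^s-q_t^r)(\ttheta^\star)=0$.

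The crux --- and the main obstacle --- is that $\mathcal{L}(\cdot,\mu^\star)$ is \emph{not} convex in general: it is the sum of the convex $q_t^s$ and the concave terms $\mu_r^\star(q_t^s-q_t^r)$ (each $q_t^s-q_t^r$ is affine minus a positive multiple of the convex $A$, hence concave). Thus the stationary point $\ttheta^\star$ need not be a global minimiser of $\mathcal{L}(\cdot,\mu^\star)$, and a priori $\D_{st}(\mu^\star)$ could sit strictly below $\mathcal{L}(\ttheta^\star,\mu^\star)=R_t^s$. The key lemma I must prove is that the optimal multipliers land in the dual domain, namely $l(\mu^\star)\ge 0$: since every function involved is affine in $\ttheta$ plus a scalar multiple of $A$, one computes $\nabla^2_\ttheta\mathcal{L}(\ttheta^\star,\mu^\star)=(t-s)\,l(\mu^\star)\,\nabla^2 A(\ttheta^\star)$ with $\nabla^2 A\succ 0$, and the second-order necessary optimality condition forces this Hessian to be positive semidefinite on the critical cone of the active constraints, whence $l(\mu^\star)\ge 0$. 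Equivalently, the stationarity relation reads $\mathbf{m}(\mu^\star)=\nabla A(\ttheta^\star)$, which must lie in the mean-parameter space $\mathcal{M}$, and writing $\mathbf{m}(\mu^\star)$ as an affine combination of the empirical means $\overline{\mathbf{S}}_{st},\overline{\mathbf{S}}_{rs}\in\mathcal{M}$ with weight $1/l(\mu^\star)$ on $\overline{\mathbf{S}}_{st}$ is consistent only for $l(\mu^\star)>0$. Once $l(\mu^\star)>0$ is established, $\mathcal{L}(\cdot,\mu^\star)$ is strictly convex, $\ttheta^\star$ is its unique global minimiser, and hence $\D_{st}(\mu^\star)=\mathcal{L}(\ttheta^\star,\mu^\star)=q_t^s(\ttheta^\star)=R_t^s$ by complementary slackness, which together with weak duality yields the claimed equality.

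Finally I would dispose of the degenerate configurations excluded by the mild assumptions: the boundary case $l(\mu^\star)=0$, where $\mathcal{L}(\cdot,\mu^\star)$ becomes affine in $\ttheta$ and stationarity forces it constant, so that $\D_{st}(\mu^\star)=R_t^s$ still holds; the fully determined case in which $d$ active, independent constraints pin down $\ttheta^\star$ and the critical cone collapses to $\{0\}$; and possible failures of LICQ. I expect the genuine mathematical difficulty to be concentrated entirely in the positivity $l(\mu^\star)\ge 0$ of the third paragraph --- that is, in showing that the non-convexity of the Lagrangian is harmless \emph{precisely because} no more than $d$ constraints are used --- while the remaining steps are standard convex-analytic bookkeeping.
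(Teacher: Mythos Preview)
Your weak-duality step and the reduction to ``show $l(\mu^\star)\ge 0$'' are clean, but the argument you give for $l(\mu^\star)\ge 0$ does not close. The second-order necessary condition only constrains $\nabla^2_\ttheta\mathcal{L}$ on the critical cone, and that cone is $\{0\}$ whenever the active constraints with positive multipliers span $\R^d$. This is not a degenerate configuration you can wave away: already for $d=q=1$ with a single concave constraint $r<s$ (the most basic DUST situation), the feasible set is $(-\infty,\theta_-]\cup[\theta_+,\infty)$, both endpoints are local minima with a positive multiplier, and the critical cone at each is $\{0\}$. At one endpoint you get $l>0$, at the other $l<0$; your second-order test cannot tell them apart, and you have not used global optimality anywhere to pick the right one. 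The alternative ``$\mathbf{m}(\mu^\star)\in\mathcal{M}$'' argument is also incorrect: for the Gaussian model $\mathcal{M}=\R^d$, so membership in $\mathcal{M}$ places no sign restriction on $l(\mu^\star)$.

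The paper takes a genuinely different route. Instead of KKT, it studies the image set $\mathcal{O}=\{(q_t^s(\ttheta),(q_t^s-q_t^{r_1})(\ttheta),\ldots,(q_t^s-q_t^{r_d})(\ttheta)):\ttheta\in\R^d\}$ and proves that its lower boundary in the first coordinate is convex. The key structural lemma is that the level set $\{(q_t^s-q_t^{r_i})(\ttheta)=c_i,\ i=1,\ldots,d\}$ consists of exactly two points (because subtracting any two constraints eliminates $A(\ttheta)$ and leaves affine equations, reducing to a line intersected with one convex level curve). Convexity of the epigraph then gives strong duality via the standard supporting-hyperplane argument. If you want to rescue your KKT route, this two-point lemma is exactly the missing ingredient: when all $d$ constraints are active and $l(\mu^\star)<0$, the Lagrangian $\mathcal{L}(\cdot,\mu^\star)$ is strictly concave with its global maximum at $\ttheta^\star$, so at the \emph{other} intersection point $\ttheta^{\star\star}$ (which is feasible, all constraints being zero there) one gets $q_t^s(\ttheta^{\star\star})=\mathcal{L}(\ttheta^{\star\star},\mu^\star)<\mathcal{L}(\ttheta^\star,\mu^\star)=q_t^s(\ttheta^\star)$, contradicting global optimality of $\ttheta^\star$. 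Without this lemma --- which is where the hypothesis $q\le d$ and the common-$A$ structure of the exponential family are actually used --- your proof does not go through.
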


The proof provided in Appendix~\ref{app:noDualityGap} is based on the geometric interpretation of duality. We consider $d$ arbitrary constraints. In case, $q < d$, we can still add $d-q$ constraints of type $q_t^s(\ttheta)- q_t^{r}(\ttheta) \le 0$ that we know would be unused (no equality at the optimum point). With $d$ constraints, we consider the geometric object $\mathcal{O}$ in the Euclidean space $\R \times \R^{d}$ whose elements $(x_0,x_1,...,x_d)$ are given by a parametric representation:
$$\mathcal{O}=\Big\{(x_0,...,x_d)\,,\exists\, \ttheta \in \R^d\,,\,(x_0,...,x_d) = (q_t^s(\ttheta), (q_t^s- q_t^{r_1})(\ttheta),\ldots, (q_t^s- q_t^{r_d})(\ttheta))\Big\}\,.$$
We show that the epigraph of the objective function $\mathcal{O}$ is convex. This result relies primarily on the fact that the functions $q_t^j$ are similar. Notably, the duality gap is zero -- that is, strong duality holds -- even in the presence of concave constraints (i.e., when $r < s$).\\

We illustrate the failure of strong duality when $q > d$ by showing that it does not hold with $d+1$ constraints. This is demonstrated using a simple example of a one-dimensional Gaussian cost with three data points, $(y_1, y_2, y_3) = (2, -1, 0)$, penalty $\beta = 2$, and initial cost $Q_0 = -\beta$. These data points yield the following functions: $q_3^0(\ttheta) = \frac{3}{2}\theta^2 - \theta$, $q_3^1(\ttheta) = \theta^2 + \theta$ and $q_3^2(\ttheta) = \frac{1}{2}\theta^2 + \frac{3}{2}$. In this example, the dual function (without normalising the Lagrangian parameters) for testing index $2$ with indices $0$ and $1$ is given by:
$$
\mathcal{D}(\mu_1, \mu_2) = \frac{1}{2}\frac{(\mu_1 - \mu_2)^2}{1 - 2\mu_1 - \mu_2} + \frac{3}{2}(1 + \mu_1 + \mu_2)\,,
$$
which attains a maximum value of $2.5$. The solution to the corresponding optimisation problem is $\frac{10 + \sqrt{7}}{4}$, as illustrated in Figure~\ref{fig:dualCE}.

\begin{figure}[h!]
\centering
\includegraphics[width=0.6\textwidth]{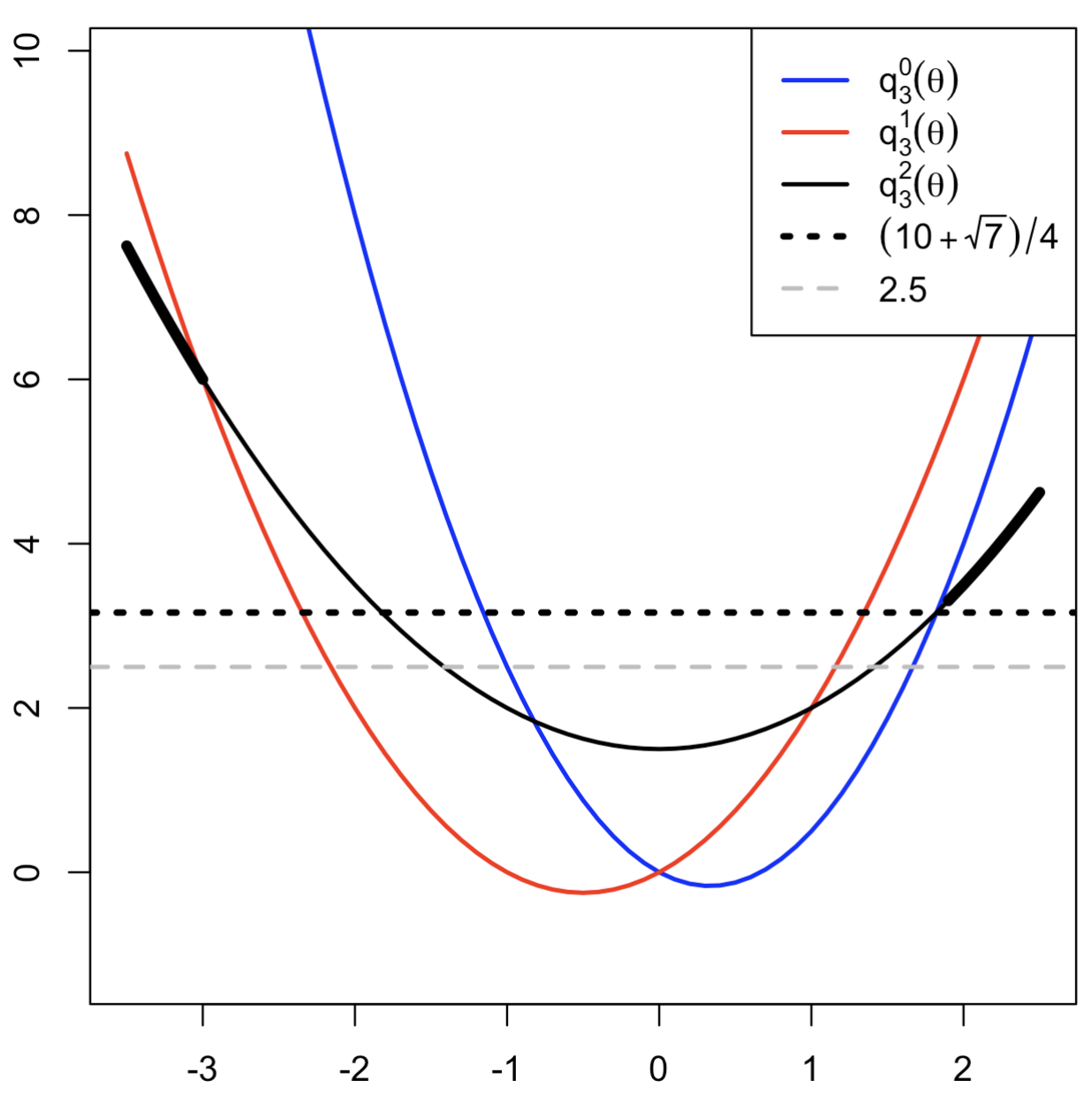}
\caption{Illustration of the dual maximum value ($2.5$) and the true threshold $(\frac{10 + \sqrt{7}}{4})$, solution of the optimisation problem~\eqref{pb:pruningProblem}. We use a Gaussian cost with three points, $(y_1, y_2, y_3) = (2, -1, 0)$, penalty $\beta = 2$ and initial cost $Q_0 = -\beta$.}
\label{fig:dualCE}
\end{figure}

\subsection{Gaussian case}
\label{subsec:gaussian}

The maximum of the dual function can be explicitly expressed in closed form in the single-constraint case. When the time series contains no change points and the penalty parameter is appropriately chosen, the expression simplifies further using the relation  $Q_u = -\frac{u}{2}\|\overline{\mathbf{S}}_{0u}\|^2$.

\begin{proposition}
In $d$-variate Gaussian model the maximum value of the dual with one constraint ($r < s <t$), $\max_{\mu \in [0,1]}\Big\{\D^{gauss}(\mu)\Big\}$, is given by expression:
$$\left\{\begin{aligned}
&- \frac{t-s}{2}\|\overline{\mathbf{S}}_{st}\|^2 +Q_s + \beta\quad\hbox{if}\quad \|\Delta\overline{\mathbf{S}}_{rst} \| \ge R_{rs} \quad (\hbox{case } \mu = 0)\,,\\
& - \frac{t-s}{2}\|\overline{\mathbf{S}}_{st}\|^2 +Q_s + \beta + \frac{t-s}{2}(  \|\Delta\overline{\mathbf{S}}_{rst} \|- R_{rs})^2\quad\hbox{if}\quad \|\Delta\overline{\mathbf{S}}_{rst} \| < R_{rs}\,,
\end{aligned}\right.  $$
where:
$R_{rs} =  \sqrt{\|\overline{\mathbf{S}}_{rs}\|^2 + 2 \overline{Q}_{rs}}$. Index $s$ is pruned at time $t$ by index $r$ when this maximum is greater than $Q_t + \beta$. If no change is in the data, we get the value:
$$ \frac{1}{2}\Bigg( \sqrt{\|\Delta\overline{\mathbf{S}}_{rst}\|^2} - \sqrt{\frac{r}{s}\|\Delta\overline{\mathbf{S}}_{0rs} \|^2}\Bigg)^2- \frac{1}{2}\Bigg(\sqrt{\frac{s}{t}\|\Delta\overline{\mathbf{S}}_{0st} \|^2}\Bigg)^2\,.$$
\end{proposition}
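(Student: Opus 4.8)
The plan is to specialise the one-constraint dual of Proposition~\ref{prop:dual1D} to the $d$-variate Gaussian, for which $A(\ttheta)=\tfrac12\|\ttheta\|^2$, $(\nabla A)^{-1}=\mathrm{id}$, and hence $\mathcal{D^*}(x)=\tfrac12\|x\|^2$. Substituting this into~\eqref{dualFunction1C} and using $\mathbf{m}(\mu)=(\overline{\mathbf{S}}_{st}-\mu\overline{\mathbf{S}}_{rs})/(1-\mu)$ turns the dual into the scalar function
$$\D^{gauss}(\mu)=(t-s)\Big(-\tfrac12\,\tfrac{\|\overline{\mathbf{S}}_{st}-\mu\overline{\mathbf{S}}_{rs}\|^2}{1-\mu}+\mu\,\overline{Q}_{rs}\Big)+Q_s+\beta,\qquad \mu\in[0,1).$$
Since the dual is always concave, maximising over $[0,1)$ reduces to locating a single interior stationary point or, failing that, taking the boundary value at $\mu=0$. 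I would note at the outset that as $\mu\to 1^-$ the rational term diverges to $+\infty$ (so $\D^{gauss}\to-\infty$), which guarantees the maximum is attained.

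The core computation is to differentiate the bracket and set it to zero. Writing $N(\mu)=\|\overline{\mathbf{S}}_{st}-\mu\overline{\mathbf{S}}_{rs}\|^2$ for the numerator and clearing the $(1-\mu)^2$ denominator produces a quadratic in $\mu$ whose two key simplifications are the identities $\|\overline{\mathbf{S}}_{rs}\|^2+2\overline{Q}_{rs}=R_{rs}^2$ (by definition of $R_{rs}$) and $\|\overline{\mathbf{S}}_{st}-\overline{\mathbf{S}}_{rs}\|^2=\|\Delta\overline{\mathbf{S}}_{rst}\|^2$ (since $\psi_{rs}=1$ for $r<s$). After dividing by $R_{rs}^2$ the quadratic collapses to $\mu^2-2\mu+(R_{rs}^2-\|\Delta\overline{\mathbf{S}}_{rst}\|^2)/R_{rs}^2=0$, whose admissible root is $\mu^\star=1-\|\Delta\overline{\mathbf{S}}_{rst}\|/R_{rs}$. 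This lies in $[0,1)$ exactly when $\|\Delta\overline{\mathbf{S}}_{rst}\|\le R_{rs}$; in the complementary regime $\|\Delta\overline{\mathbf{S}}_{rst}\|\ge R_{rs}$ the concave dual is already decreasing at $\mu=0$, so the maximiser is the boundary point $\mu=0$. This dichotomy is precisely the two-branch structure of the statement.

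For the value itself I would avoid brute substitution of $\mu^\star$ and instead exploit the stationarity relation: it lets me replace $N(\mu^\star)/(1-\mu^\star)$ by $2\overline{Q}_{rs}(1-\mu^\star)-N'(\mu^\star)$, after which the dependence on $\mu^\star$ telescopes through $2\overline{Q}_{rs}+\|\overline{\mathbf{S}}_{rs}\|^2=R_{rs}^2$ and a completion of the square, giving $\D^{gauss}(\mu^\star)=-\tfrac{t-s}{2}\|\overline{\mathbf{S}}_{st}\|^2+Q_s+\beta+\tfrac{t-s}{2}(\|\Delta\overline{\mathbf{S}}_{rst}\|-R_{rs})^2$. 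The boundary branch is read off directly by plugging $\mu=0$, giving $-\tfrac{t-s}{2}\|\overline{\mathbf{S}}_{st}\|^2+Q_s+\beta$. These are the two displayed expressions.

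Finally, for the no-change simplification I would insert $Q_u=-\tfrac{u}{2}\|\overline{\mathbf{S}}_{0u}\|^2$ and reduce the pruning margin $\max_\mu\D^{gauss}(\mu)-(Q_t+\beta)$, which carries a global factor $(t-s)$. Two applications of the additive decomposition of the sufficient-statistic sums are needed: from $\mathbf{S}_{0s}=\mathbf{S}_{0r}+\mathbf{S}_{rs}$ one shows $R_{rs}^2=\tfrac{r}{s}\|\Delta\overline{\mathbf{S}}_{0rs}\|^2$, and from $t\,\overline{\mathbf{S}}_{0t}=s\,\overline{\mathbf{S}}_{0s}+(t-s)\overline{\mathbf{S}}_{st}$ one shows $-\tfrac{t-s}{2}\|\overline{\mathbf{S}}_{st}\|^2+Q_s-Q_t=-\tfrac{s(t-s)}{2t}\|\Delta\overline{\mathbf{S}}_{0st}\|^2$. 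Substituting both, recognising $R_{rs}=\sqrt{r/s}\,\|\Delta\overline{\mathbf{S}}_{0rs}\|$ inside the squared term, and dividing by $(t-s)$ yields the claimed value. I expect the two completing-the-square identities of this last paragraph, especially the clean factorisation $R_{rs}^2=\tfrac{r}{s}\|\Delta\overline{\mathbf{S}}_{0rs}\|^2$, to be the main obstacle: they are where all the index bookkeeping concentrates, whereas the concavity and stationarity arguments are routine.
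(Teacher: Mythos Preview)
Your proposal is correct and follows essentially the same route as the paper: the paper's own proof is the two-line statement ``straightforward by derivation of the dual'' together with a pointer to the variance identities of Appendix~\ref{app:means_relations} for the no-change simplification. Your additive-decomposition identities $R_{rs}^2=\tfrac{r}{s}\|\Delta\overline{\mathbf{S}}_{0rs}\|^2$ and $-\tfrac{t-s}{2}\|\overline{\mathbf{S}}_{st}\|^2+Q_s-Q_t=-\tfrac{s(t-s)}{2t}\|\Delta\overline{\mathbf{S}}_{0st}\|^2$ are precisely the multivariate form of relation~\eqref{lemmaroots1} in that appendix, so you are rederiving what the paper invokes.
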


The proof is straightforward by derivation of the dual. The proof for the no-change case uses variance expressions in Appendix~\ref{app:means_relations}. For such case, we prune when:
 $$\sqrt{\frac{r}{s}}\|\Delta\overline{\mathbf{S}}_{0rs}\| - \|\Delta\overline{\mathbf{S}}_{rst}\| > \sqrt{\frac{s}{t}}\|\Delta\overline{\mathbf{S}}_{0st}\|\,.$$
With $d$ (independent) constraints, the maximum of the dual can still be obtained using basic linear algebra, provided the positivity of $\mu$ is not enforced. The true dual solution is recovered only when the maximum lies strictly within the positive orthant.

\subsection{Maximum of the non-constrained decision function}
\label{subsec:decision}

The decision function admits a closed-form expression for its critical point thanks to its simple structure. However, this point may fall outside the positive orthant. Enforcing this constraint analytically remains intractable at present, necessitating an iterative optimisation algorithm.

\begin{proposition}
\label{prop:decisionMD}
In the $d$-variate model, the maximum of the decision function with $q$ constraints -- when solved over the entire space $\mathbb{R}^q$ rather than the constrained set $\Omega_{\mathbf{x}}$ -- i.e., $\max_{\mathbf{x} \in \mathbb{R}^q} \mathbb{D}(\mathbf{x})$, corresponds to the solution of the following system of equations, involving $\mathbf{x} \in \mathbb{R}^q$ and an instrumental variable $\mathbf{y} \in \mathbb{R}^d$:
$$
\left\{
\begin{aligned}
&(\Delta \overline{\mathbf{S}}_{\bullet st}) \, \mathbf{x}  = \nabla A(\mathbf{y}) - \overline{\mathbf{S}}_{st}\,, \\
&(\Delta \overline{\mathbf{S}}_{\bullet st})^T \, \mathbf{y} = - \Delta \overline{Q}_{\bullet st}\,,
\end{aligned}
\right.
$$
where $(\Delta \overline{\mathbf{S}}_{\bullet st})$ is the matrix of size $d \times |\mathcal{R}|$ gathering all vectors $\Delta \overline{\mathbf{S}}_{rst}$ in columns and $\Delta \overline{Q}_{\bullet st} \in \R^{|\mathcal{R}|}$. If $q=d$ and the matrix $(\Delta \overline{\mathbf{S}}_{\bullet st})$ is invertible, we get the solution:
\begin{equation}
\label{eq:xDecision}
\mathbf{x}^{\star} = (\Delta \overline{\mathbf{S}}_{\bullet st})^{-1} \Bigg(\nabla A \Big(-((\Delta \overline{\mathbf{S}}_{\bullet st})^T)^{-1}\Delta \overline{Q}_{\bullet st}\Big) - \overline{\mathbf{S}}_{st} \Bigg)\,.
\end{equation}
\end{proposition}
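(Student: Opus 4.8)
The plan is to treat $\mathbb{D}_{st}$ as an unconstrained concave function on $\R^q$ and to locate its stationary point, which the statement implicitly identifies with the global maximiser over $\R^q$. First I would record the one structural fact needed about $\D^*$: since $\D^*(x) = x\cdot(\nabla A)^{-1}(x) - A((\nabla A)^{-1}(x))$ is exactly the Legendre--Fenchel conjugate $A^*$ of the strictly convex log-partition function $A$, the standard conjugacy identity gives $\nabla \D^*(x) = (\nabla A)^{-1}(x)$. Strict convexity of $A$, guaranteed by the minimal representation, ensures $\nabla A$ is a diffeomorphism onto its image, so $\D^*$ is differentiable and this identity is valid on the relevant domain. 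I would also note that $\mathbb{D}_{st}(\mathbf{x}) = -\D^*(\sigma(\mathbf{x})) - \phi(\mathbf{x})$ is concave in $\mathbf{x}$: the map $\sigma$ is affine and $\D^*$ is convex, so $-\D^*\circ\sigma$ is concave, while $-\phi$ is affine. Hence any stationary point is automatically the global maximiser over $\R^q$, matching the quantity in the statement.

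Next I would differentiate. Because $\sigma(\mathbf{x}) = \overline{\mathbf{S}}_{st} + \sum_{r\in\mathcal{R}} x_r \Delta\overline{\mathbf{S}}_{rst}$ and $\phi(\mathbf{x}) = \overline{Q}_{st} + \sum_{r\in\mathcal{R}} x_r \Delta\overline{Q}_{rst}$ are affine, we have $\partial_{x_r}\sigma = \Delta\overline{\mathbf{S}}_{rst}$ and $\partial_{x_r}\phi = \Delta\overline{Q}_{rst}$. The chain rule together with the conjugacy identity then yields, for each $r\in\mathcal{R}$,
$$\partial_{x_r}\mathbb{D}_{st}(\mathbf{x}) = -(\nabla A)^{-1}(\sigma(\mathbf{x}))\cdot\Delta\overline{\mathbf{S}}_{rst} - \Delta\overline{Q}_{rst}\,.$$
Setting all partials to zero gives the stationarity system $(\nabla A)^{-1}(\sigma(\mathbf{x}))\cdot\Delta\overline{\mathbf{S}}_{rst} = -\Delta\overline{Q}_{rst}$ for every $r\in\mathcal{R}$.

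The decisive step is the change of variable that linearises this system: introduce the instrumental variable $\mathbf{y} := (\nabla A)^{-1}(\sigma(\mathbf{x})) \in \R^d$. By its very definition, $\sigma(\mathbf{x}) = \nabla A(\mathbf{y})$, that is $\overline{\mathbf{S}}_{st} + (\Delta\overline{\mathbf{S}}_{\bullet st})\mathbf{x} = \nabla A(\mathbf{y})$, which is the first equation of the claimed system. The stationarity conditions $\mathbf{y}\cdot\Delta\overline{\mathbf{S}}_{rst} = -\Delta\overline{Q}_{rst}$, collected over $r\in\mathcal{R}$, read $(\Delta\overline{\mathbf{S}}_{\bullet st})^T\mathbf{y} = -\Delta\overline{Q}_{\bullet st}$, which is the second equation. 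This substitution is what makes the argument work: the genuinely nonlinear part of the condition is isolated inside $\mathbf{y} = (\nabla A)^{-1}(\sigma(\mathbf{x}))$, while the remaining relations become linear.

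Finally, for the square invertible case $q=d$ I would solve by back-substitution. The second equation determines $\mathbf{y}$ directly and independently of $A$, namely $\mathbf{y} = -\big((\Delta\overline{\mathbf{S}}_{\bullet st})^T\big)^{-1}\Delta\overline{Q}_{\bullet st}$; substituting this into the first equation and inverting $(\Delta\overline{\mathbf{S}}_{\bullet st})$ yields the closed form~\eqref{eq:xDecision}. I expect the only genuine obstacle to be the justification of the conjugacy identity $\nabla\D^* = (\nabla A)^{-1}$ together with the attendant differentiability and domain bookkeeping (requiring strict convexity of $A$ and the minimal representation); once that is in place, the gradient computation and the instrumental-variable substitution are routine.
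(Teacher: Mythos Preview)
Your proposal is correct and follows essentially the same approach as the paper: compute the gradient via the chain rule using $\nabla\D^* = (\nabla A)^{-1}$, introduce the instrumental variable $\mathbf{y} = (\nabla A)^{-1}(\sigma(\mathbf{x}))$, and read off the two equations, with back-substitution in the square invertible case. Your version is in fact more carefully argued than the paper's terse proof, which simply asserts that $\nabla\D^* = (\nabla A)^{-1}$ and that $\mathbb{D}$ is concave without the Legendre--Fenchel justification you provide.
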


The explicit solution~\eqref{eq:xDecision} is promising, but it fails to incorporate constraint inequalities, particularly the simple positivity constraint on $\mathbf{x}$. Relation~\eqref{eq:xDecision} generalises the simpler result of Equation~\eqref{eq:DUSTexact1D1C}.

\subsection{General decision function}

The form of the decision function naturally leads to the introduction of a more general function that encapsulates all duality tests across all indices $s$. This function can take the following general form:
\begin{equation}
\label{eq:generalDual}
\mathbb{GD}(\mathbf{z}) 
= -\D^*\Big(\sum_{i = 1}^{|\Tau_t|-1}z_i\overline{\mathbf{S}}_{s_is_{i+1}} \Big) - \Big(\sum_{i = 1}^{|\Tau_t|-1}z_i\overline{Q}_{s_is_{i+1}}\Big)\,,
\end{equation}
with $s_i \in \Tau_t$ sorted in increasing order  with $s_{|\Tau_t|} = t$ and constraint $\sum_{i = 1}^{|\Tau_t|-1}z_{i} = 1$. We recall that $\Tau_t$ is the set of non-pruned indices at time~$t$.

The case where the vector $\mathbf{z}$ is equal to $\overline{\mathbf{z}}(s_{uv})$, defined as
$$\overline{\mathbf{z}}(s_{uv}) = \Big(0,\ldots, 0, \frac{s_{u+1} - s_{u}}{s_v-s_{u}}, \frac{s_{u+2} - s_{u+1}}{s_v-s_{u}},\ldots, \frac{s_v-s_{v-1}}{s_v-s_u},0,\ldots, 0\Big)\,,$$
corresponds to the standard PELT rule for pruning index $s_u$ at time $s_v$ as we get $-\D^*(\overline{\mathbf{S}}_{s_u s_v}) - \overline{Q}_{s_u s_v}$. The single-constraint case of Section~\ref{sec:simple} with $r<s$ for pruning at time $t$ is obtained with $\mathbf{z} = (1+x)\overline{\mathbf{z}}(s_{st}) - x\overline{\mathbf{z}}(s_{rs})$. We believe that the general formulation~\eqref{eq:generalDual}, which unifies all duality problems into a single function, has the potential to further improve pruning efficiency. This is left for future work.

\section{Simulation study}
\label{sec:simus}

This section presents DUST's performance on simulated data and compares its efficiency against state-of-the-art algorithms. {\color{violet} (This first draft version does not yet include the study of multivariate signals.)}

\subsection{Framework}

\subsubsection{Data}

We simulate time series of length $\exp(n)$, where $n$ is a sequence of regularly spaced values between $\log 10^2$ and $\log 10^8$. We have $8$ available models from the exponential family: Gauss (G), Poisson (P), Exponential (E), Geometric (G), Bernoulli (Be), Binomial (Bi), Negative Binomial (NB), and Variance (V). 
The simulated data consists of alternating two segments of length $k$. We often set $k=n$, ensuring that only the first segment is observed. This no-change regime presents the most challenging scenario for pruning and computational efficiency. The typical parameters used are outlined in Table~\ref{tab:params}.\\

\begin{table}[!ht]
\caption{Parameter values for simulations\label{tab:params}}
\begin{tabular*}{\columnwidth}{@{\extracolsep\fill}llll@{\extracolsep\fill}}
Model & Penalty scale factor &  Parameter & Values  \\
 \midrule
gauss & 1 & $\mu$ ($\sigma = 1$ fixed) & $\{0,1\}$   \\
poisson & $2/3$ & $\lambda$ & $\{3,4\}$   \\
exponential & $3/4$ & $\lambda$ & $\{1,0.5\}$   \\
geometric & $2/3$ & $p$  & $\{0.5,0.7\}$   \\
bernoulli & $2/3$ & $p$  & $\{0.5,0.7\}$   \\
binomial & $1/6$ & $p$  & $\{0.5,0.7\}$\\ 
negative binomial & 1/10 & $p$ & $\{0.5,0.7\}$  \\
variance & 1 & $\sigma$ ($\mu = 0$ fixed) & $\{1,2\}$   \\
 \midrule
\end{tabular*}
\end{table}

When considering d-variate time series, we generate $d$ time series with identical change-point locations using the previous model and concatenate the $d$ copies. Further, we study three different configurations. (i) (with $k = n$) Pruning and time capacity of DUST over time and against competitors. (ii) (with $k = n$) Pruning and time robustness of DUST with respect to the penalty value at fixed data length (iii) Pruning and time robustness in the presence of changes (against competitors). For the multivariate setting, we also study these configurations with respect to the dimension.

We run each configuration $50$ to $100$ times at fixed data length depending on computational cost.\\

\subsubsection{Baseline and parameters}  We compare DUST to the PELT\footnote{\url{https://cran.r-project.org/web/packages/changepoint/index.html}} \cite{killick2012optimal} and FPOP\footnote{\url{https://cran.r-project.org/web/packages/fpopw/index.html}} \cite{maidstone2017optimal} algorithms for univariate data. In the multivariate setting, there is only one competitor and only for the Gaussian cost: GeomFPOP\footnote{\url{https://github.com/computorg/published-202406-pishchagina-change-point}} \cite{pishchagina2024geometricbased}. Each algorithm is provided with the same simulated data to ensure proper head-to-head comparisons.\\

The penalty factor for each model is calibrated to achieve similar segmentation behaviour across the range of penalty for data with no change point. With the Gaussian model as a baseline, we apply a scaling factor to the penalty for each model. This scaling factor is determined by finding the smallest penalty value where change-point detection correctly produces no change point on data of length $10^3$. The ratio between this value and the corresponding value in the Gaussian model gives us the scaling factor. For example, suppose we execute DUST on Gaussian data of length $n$ with penalty $2 \log n$. In that case, an equivalent Negative binomial simulation is done with penalty $2 a \log n$ with $a = 1/10$ (see Table~\ref{tab:params}).

\subsubsection{Metrics} 
We measure the execution time for parsing each simulated time series to compare the computational cost of each algorithm in each configuration. We further record the number of candidate indices during the execution of each DUST. The remaining number of indices at the end of the algorithm is a measure of the algorithm's pruning capacity.\\

\subsubsection{DUST variants}

DUST algorithm comes with variants depending on the dual evaluation algorithm and index selection method used.

The dual evaluation can be made:
\begin{enumerate}
\item at its maximum using the closed formula (for one-parametric cost functions only);
\item at a random point uniformly drawn in the dual domain;
\item at zero (equivalent to PELT test);
\item at its maximum using the Quasi-Newton algorithm.
\end{enumerate}
Evaluation in the one-parameter-cost case is naturally performed using a closed-form formula. Other methods are tested for multi-parameter cases to balance time complexity with pruning efficiency. Index selection for building the dual comprises two parts: indices below the index~$s$, and indices above. Indices can be chosen randomly (uniformly) or deterministically (the largest available below $s$ and the smallest available after $s$). Here, we consider only the deterministic case.

\subsection{Univariate signals}

\subsubsection{Pruning capacity}

Figure~\ref{fig:nb_plot} displays the number of non-pruned indices over time for one time series of length $10^4$ (left column) and length $10^8$ (right column) and the median and confidence intervals (over $100$ repetitions). We present the results for two cost models: Gauss, top row; Negbin, bottom row.  DUST shows a persistent pruning efficiency that is robust to the number of data points, with low variations being recorded. With $n = 10^8$, the number of remaining indices is $50$ with the exact evaluation method, and $500$ with the random method. 

\begin{figure}[ht!]
   \centering
\includegraphics[width=0.4\linewidth]{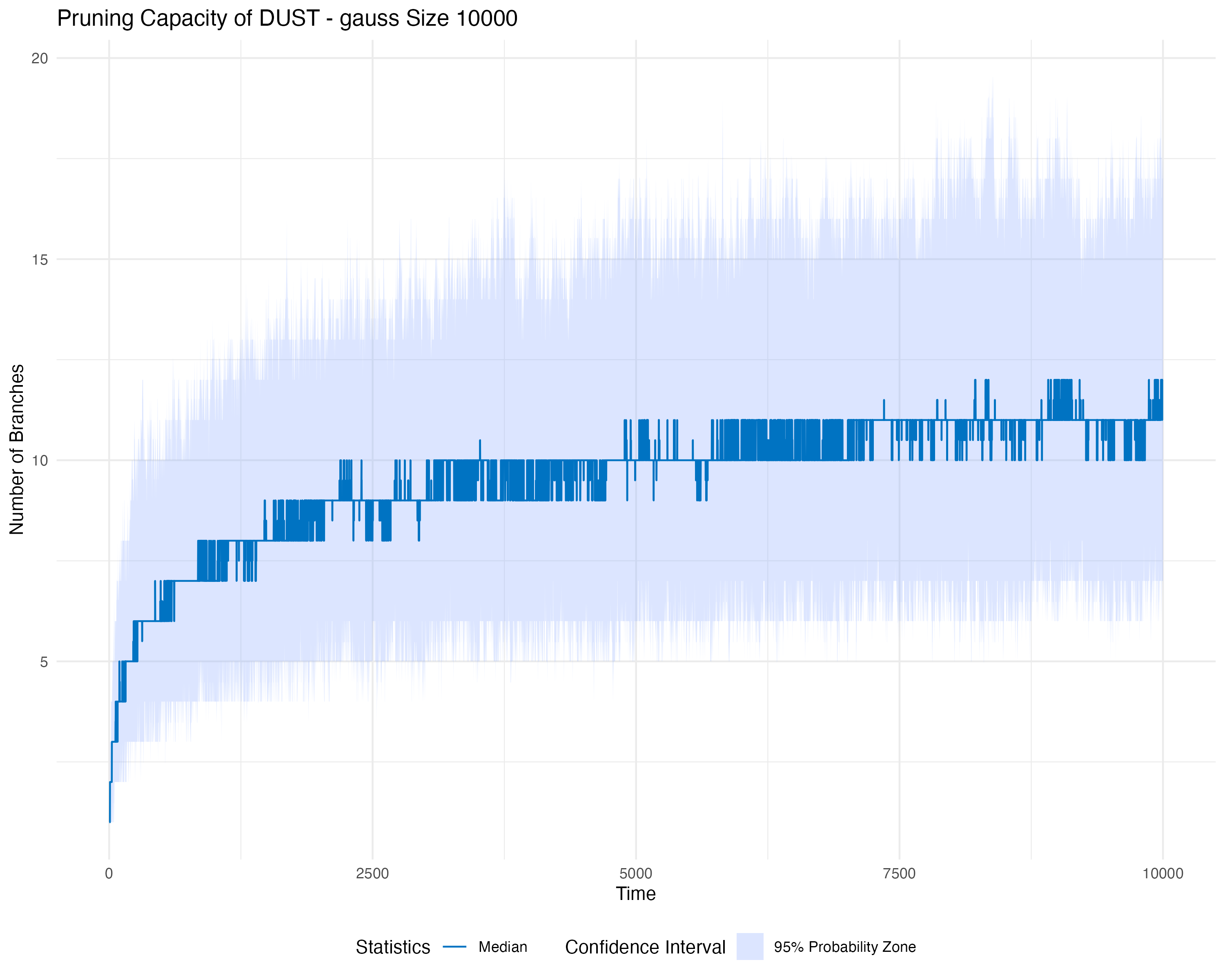}
\includegraphics[width=0.4\linewidth]{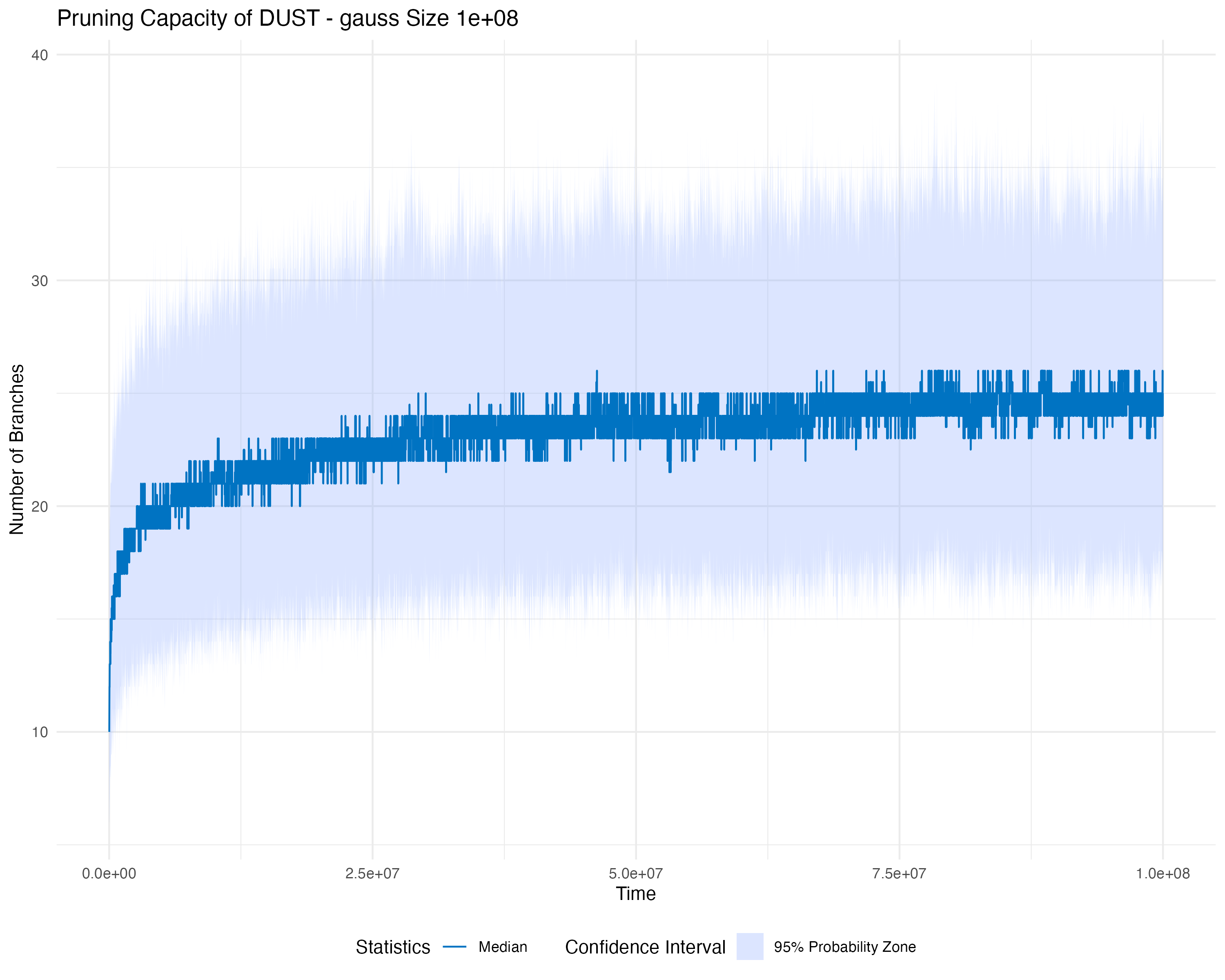}
    \newline
\includegraphics[width=0.4\linewidth]{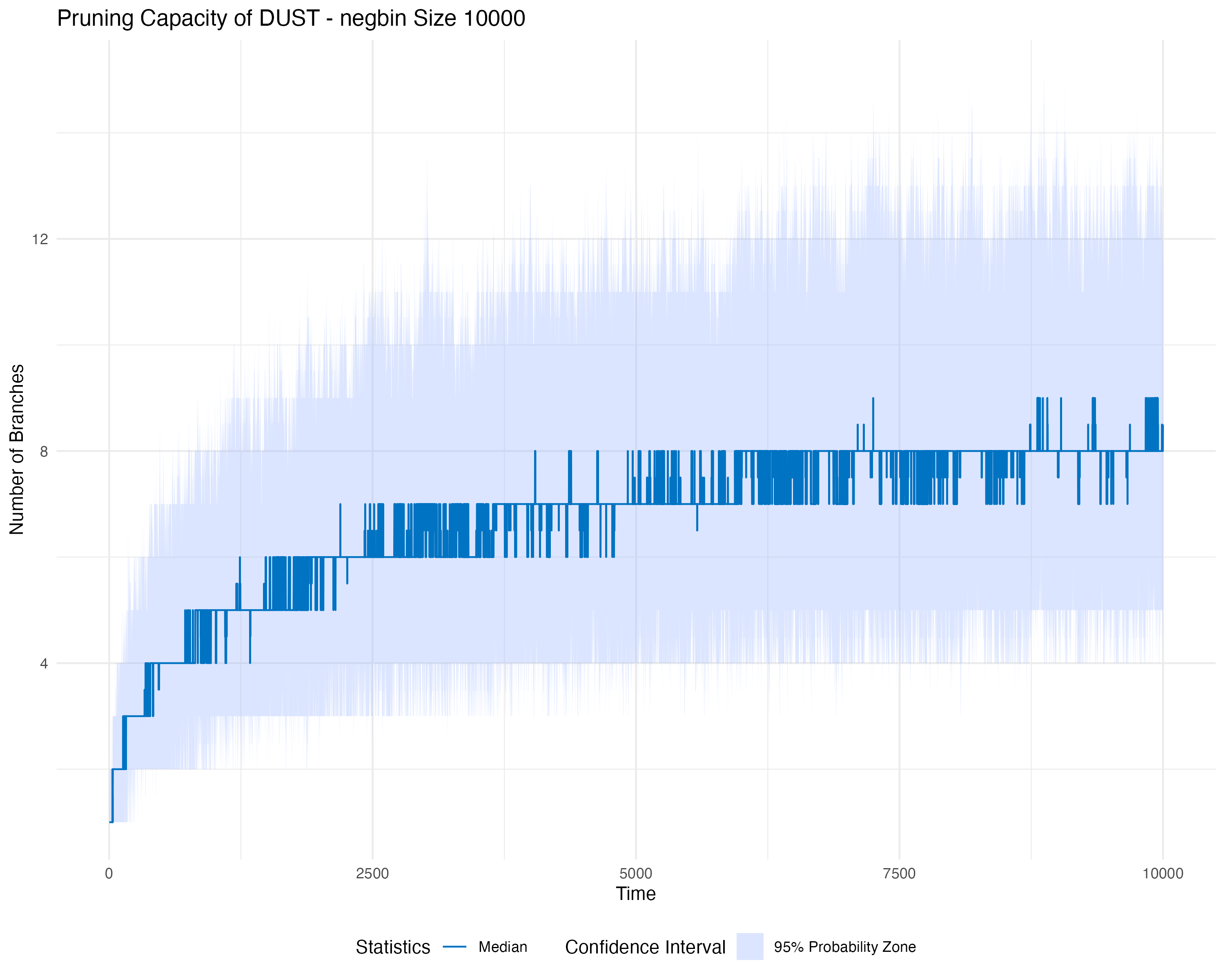}
\includegraphics[width=0.4\linewidth]{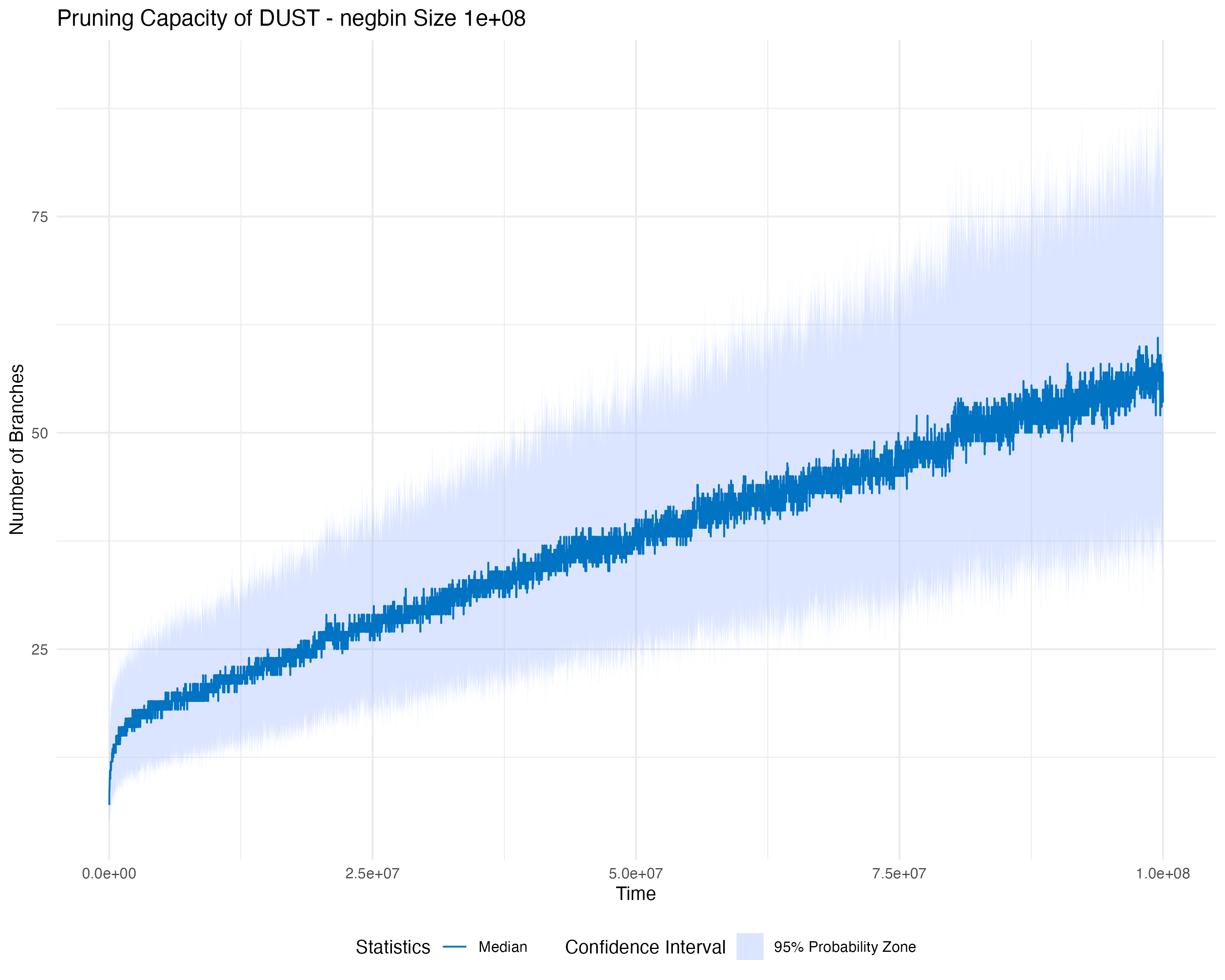}
\caption{Median number of remaining candidate indices at each time for $100$ DUST simulations for the Gaussian (top row) and negative binomial (bottom row) models, data lengths $10^4$ (left column) and $10^8$ (right column). The shaded region shows the interval between the $0.025 \, \%$ and $0.975 \, \%$ quantiles.}
    \label{fig:nb_plot}
\end{figure}

We plot the number of remaining candidate indices for different data lengths on a log-log scale in Figure~\ref{fig:nb_reg}. We run $100$ simulations for each data length among $100$ regularly spaced (in the logarithmic scale) lengths between $10^2$ and $10^6$. The shaded region shows the interval between the $0.025 \, \%$ and $0.975 \, \%$ quantiles. The solid line shows the fitted linear regression with a $95\%$ prediction confidence interval as dotted lines on either side. The slope value of either model suggests that the mean number of candidate indices remaining upon exit of the DUST algorithm is of order $n^{\alpha}$, $\alpha < 0.15$, which indicates a time complexity of order $\mathcal{O}(n^{1 + \alpha})$, with minor variations in the coefficient depending on the model being considered. 

\begin{figure}[ht!]
    \centering
\includegraphics[width=0.49\linewidth]{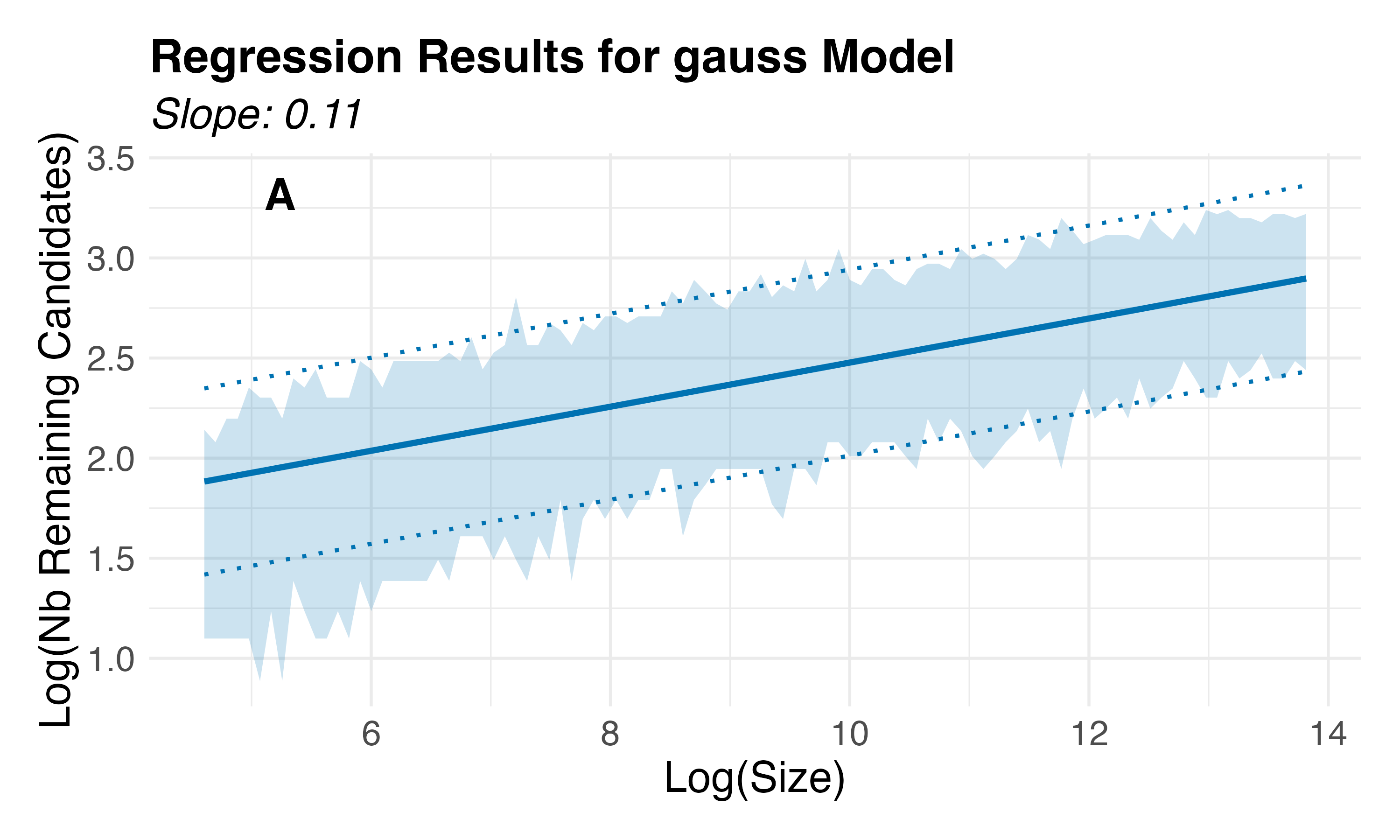}
\includegraphics[width=0.49\linewidth]{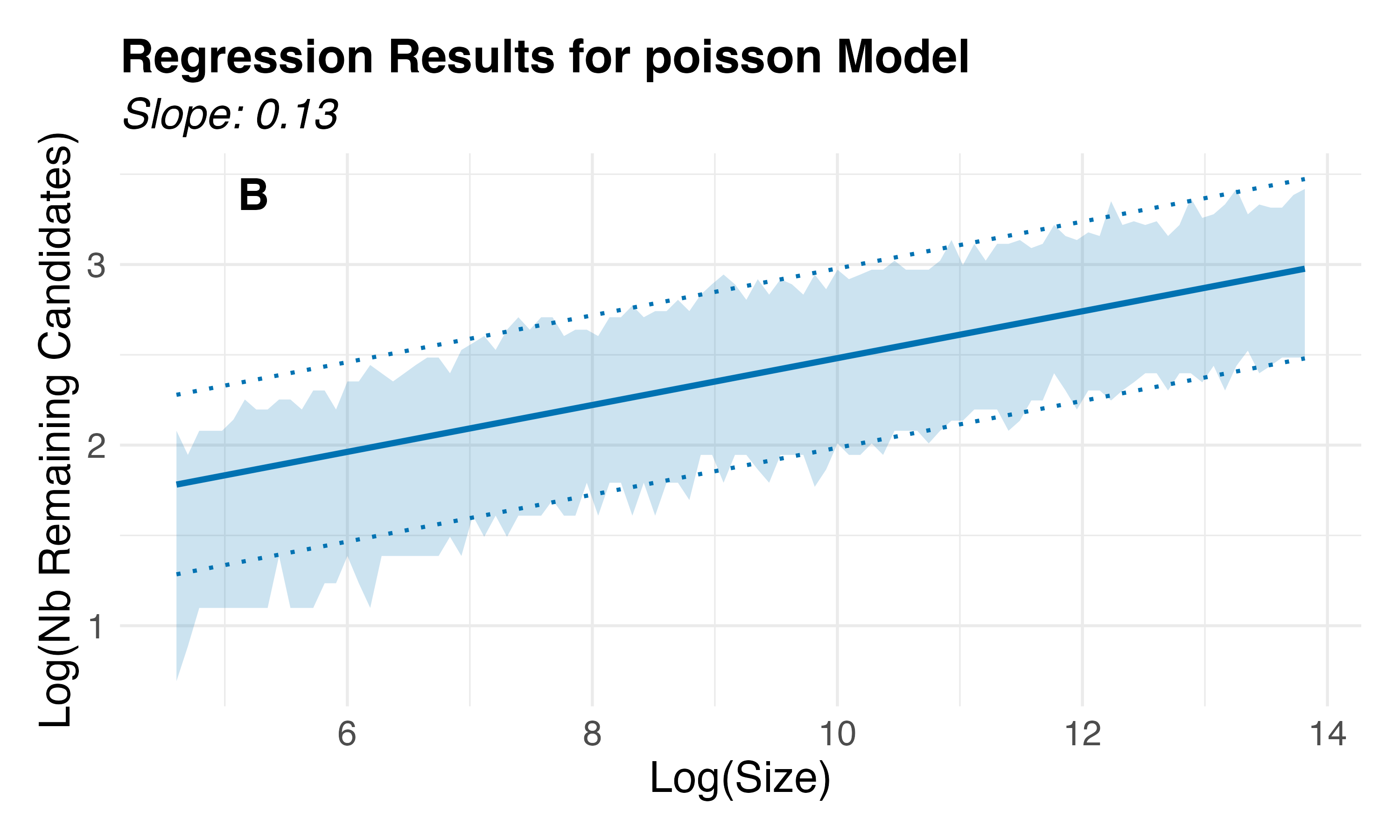}
\caption{Log-log comparison of number of remaining indices over times as a function of data size for the Gaussian (panel A) and Poisson (panel B) models. We run $100$ simulations on $100$ data lengths between $10^2$ and $10^6$, regularly spaced in the log-scale. The shaded region shows the interval between the $0.025 \, \%$ and $0.975 \, \%$ quantiles. The solid line shows the fitted linear regression with a $95\%$ prediction confidence interval as dotted lines on either side. All simulations are performed on data without a change point. }
\label{fig:nb_reg}
\end{figure}

%1) We plot the number of non-pruned indices over time with the average number and quantiles...
%$\implies$ to show stability. Gauss + Exp\\

%2) We plot the final number of remaining indices for various lengths (log scale) + linear regression\\
%$\implies$ to show low complexity. Conjecture: we get $n^a$. Gauss + Bern\\

\subsubsection{Time competition}

Figure~\ref{fig:time_reg} displays a log-log comparison of execution times between DUST and FPOP as a function of data size for the Gaussian (panel A) and Poisson (panel B) models. $100$ simulations were performed on $100$ data lengths between $10^2$ and $10^6$, regularly spaced in the log-scale. The shaded regions show the interval between each algorithm's $0.025 \, \%$ and $0.975 \, \%$ quantiles. The solid lines show the fitted linear regression for each algorithm for sizes $3125$ through $10^6$, with $95\%$ prediction confidence intervals as dotted lines on either side. The threshold at $3125$ was introduced as the relation between execution time and data length stabilizes by that point. Analysis of the interaction coefficient between data length and algorithm used shows that FPOP's slope is significantly greater than DUST's under either model, which implies that DUST's execution time scales better on larger data. Under the Gaussian model, DUST runs much slower on smaller data sizes, but DUST reliably beats FPOP beyond a break-even point at around $5,000$, and is on average $1.19$ times faster on data of length $10^6$. On non-Gaussian models such as the Poisson model, however, the DUST algorithm outperforms FPOP even at the smallest sizes, achieving speeds $5.88$ times up to $8.15$ times greater than FPOP, at sizes $10^2$ and $10^6$ respectively. Comparison of execution time is presented in 
Figure~\ref{fig:timeExecution} for relatively small data length.

\begin{figure}[ht!]
    \centering
\includegraphics[width=0.49\linewidth]{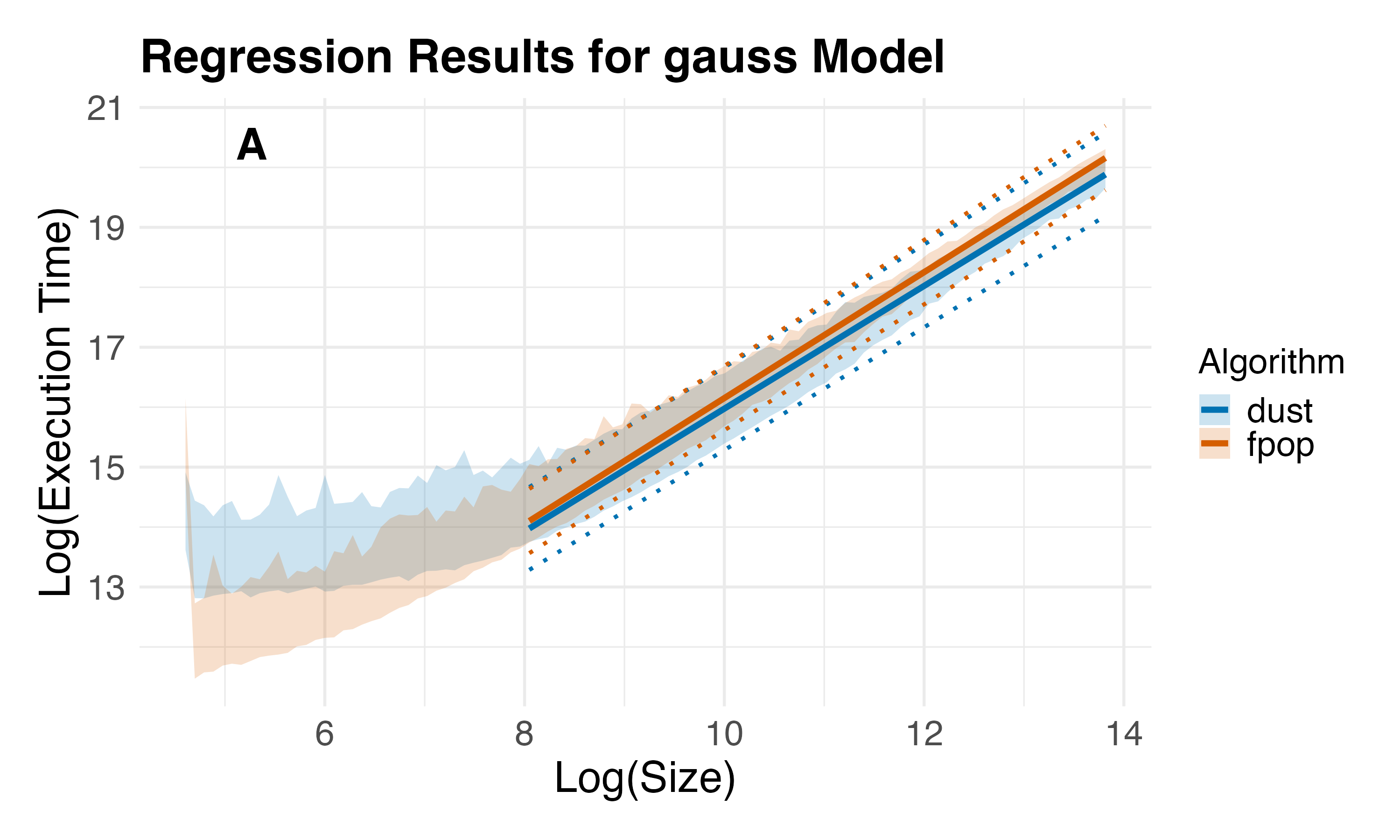}
\includegraphics[width=0.49\linewidth]{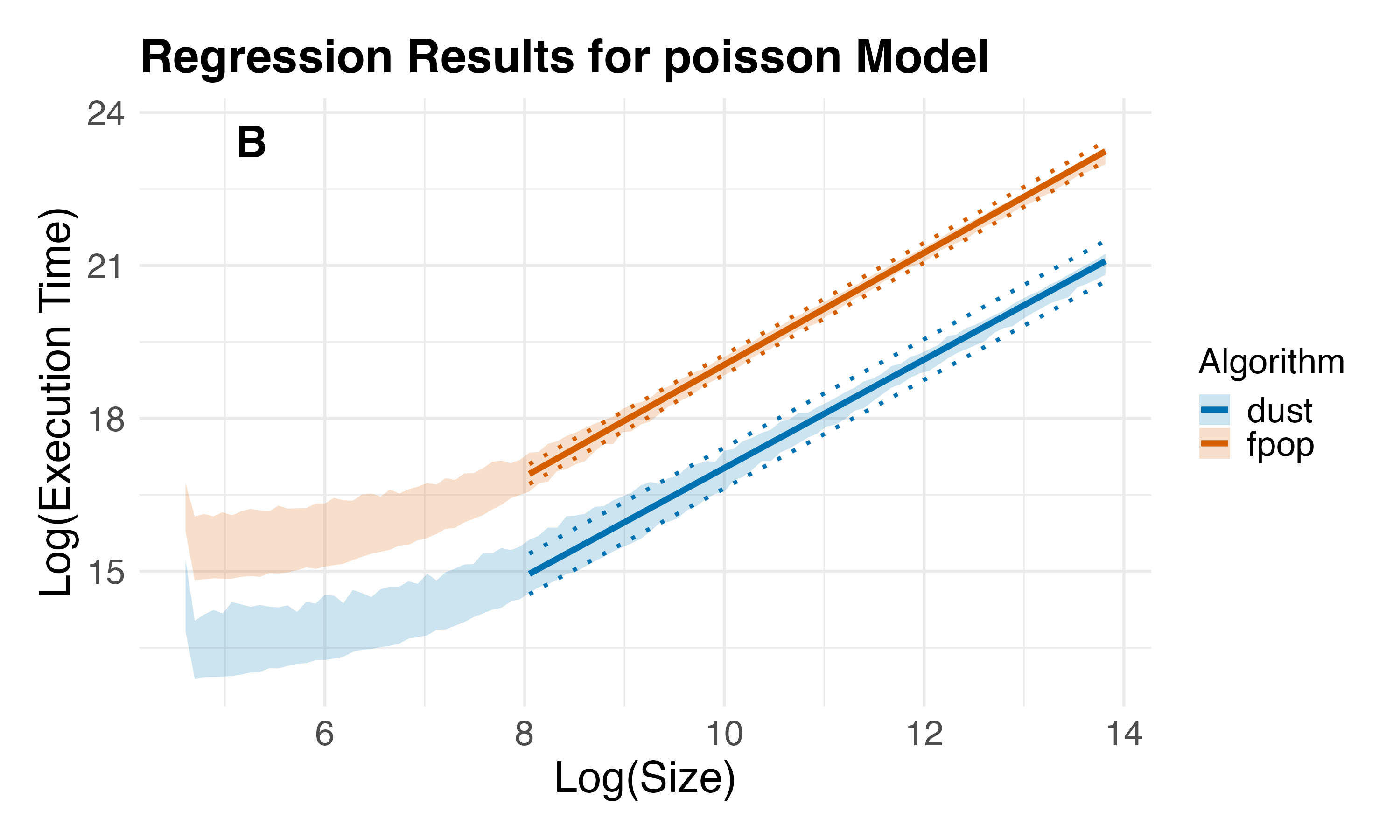}
\caption{Log-log comparison of execution times between DUST and FPOP as a function of data size for the Gaussian (panel A) and Poisson (panel B) models. $100$ simulations were performed on $10^2$ data lengths between $100$ and $10^6$, regularly spaced in the log-scale. The shaded regions show the interval between the $0.025 \, \%$ and $0.975 \, \%$ quantiles. The solid lines show the fitted linear regression for each algorithm for sizes $3125$ through $10^6$, with $95\%$ prediction confidence intervals as dotted lines on either side. All simulations are performed on data without a change point. }
    \label{fig:time_reg}
\end{figure}

\begin{figure}[ht!]
    \centering
\includegraphics[width=0.40\linewidth]{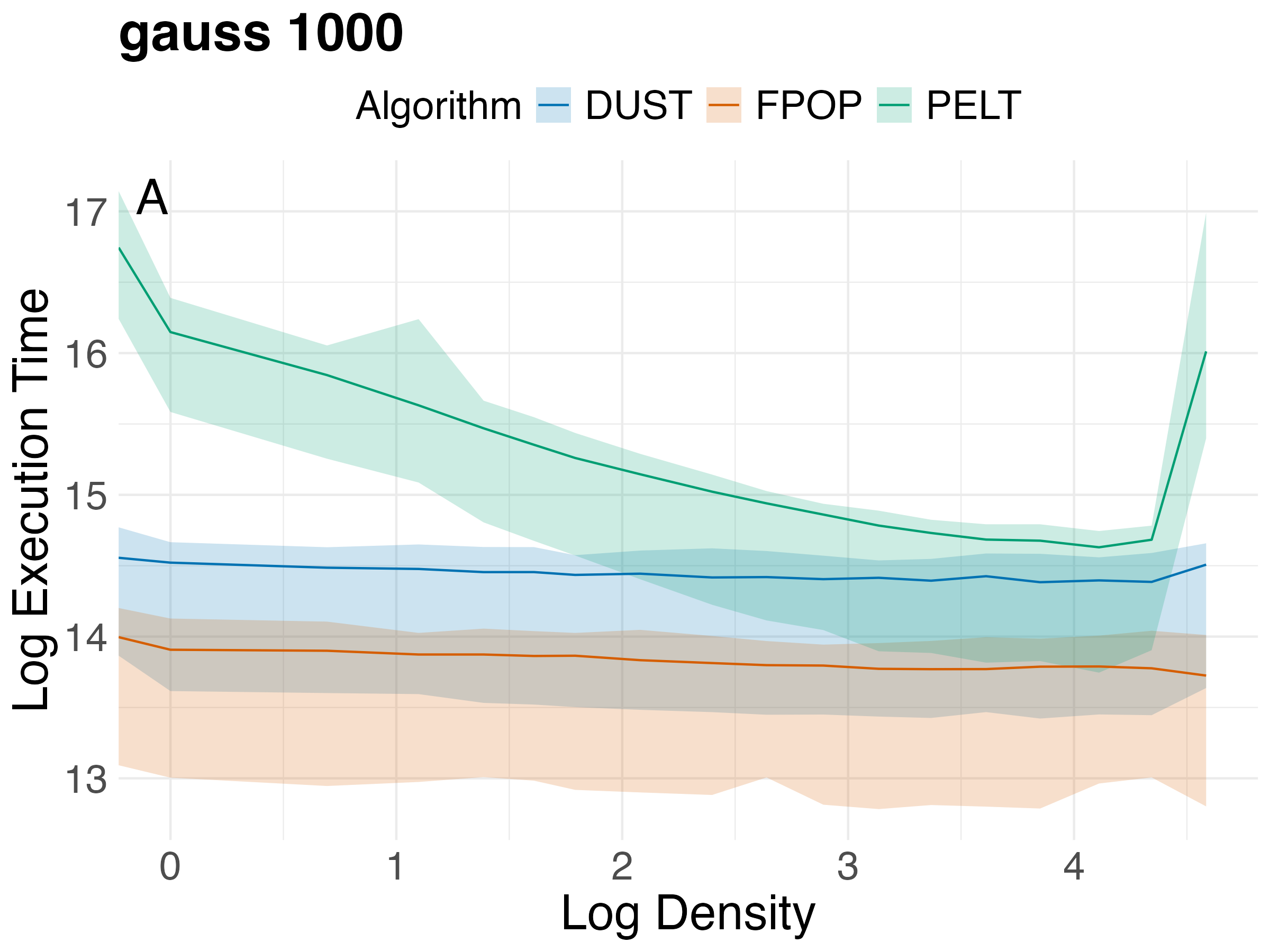}
\includegraphics[width=0.40\linewidth]{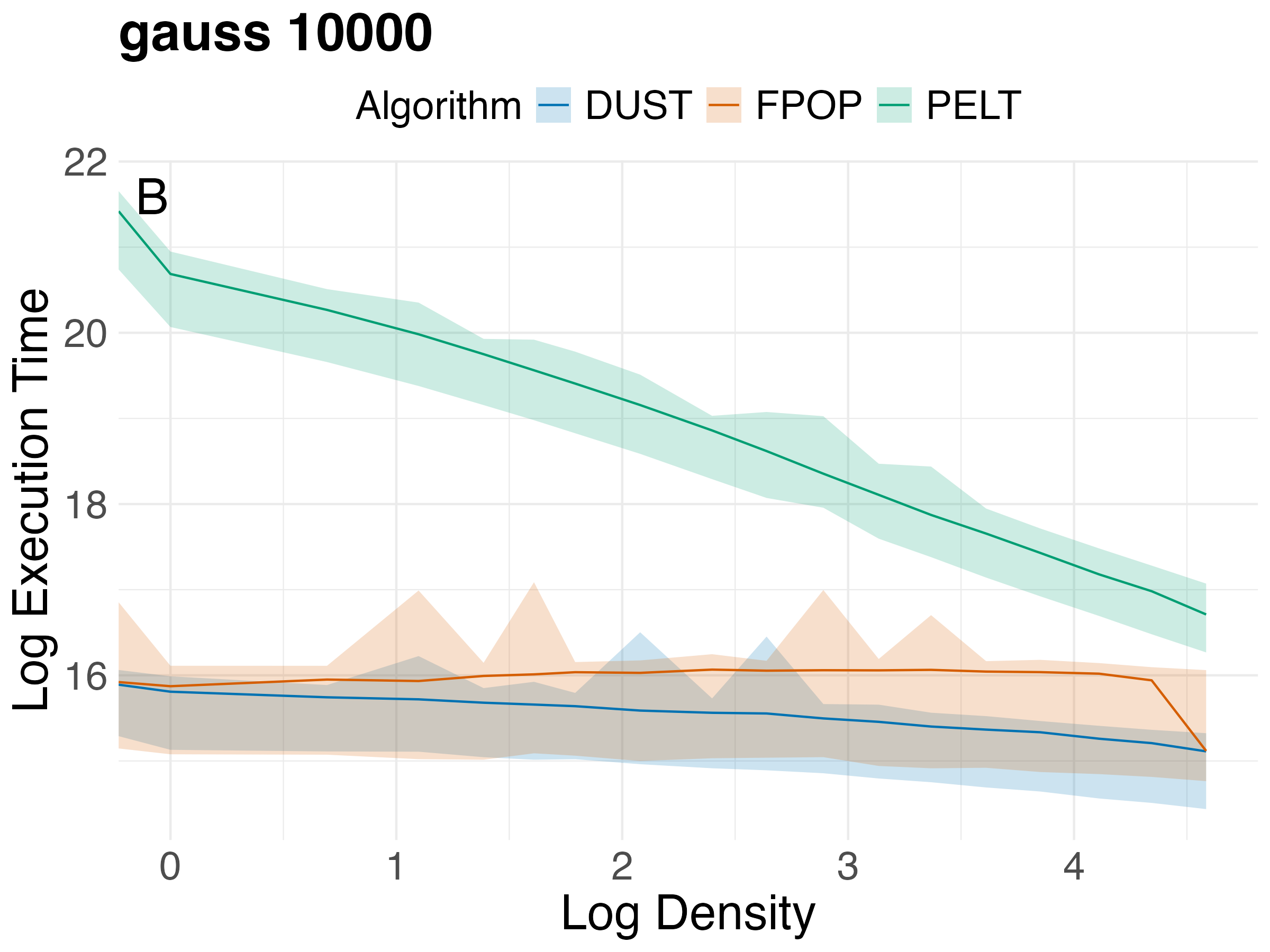} \includegraphics[width=0.40\linewidth]{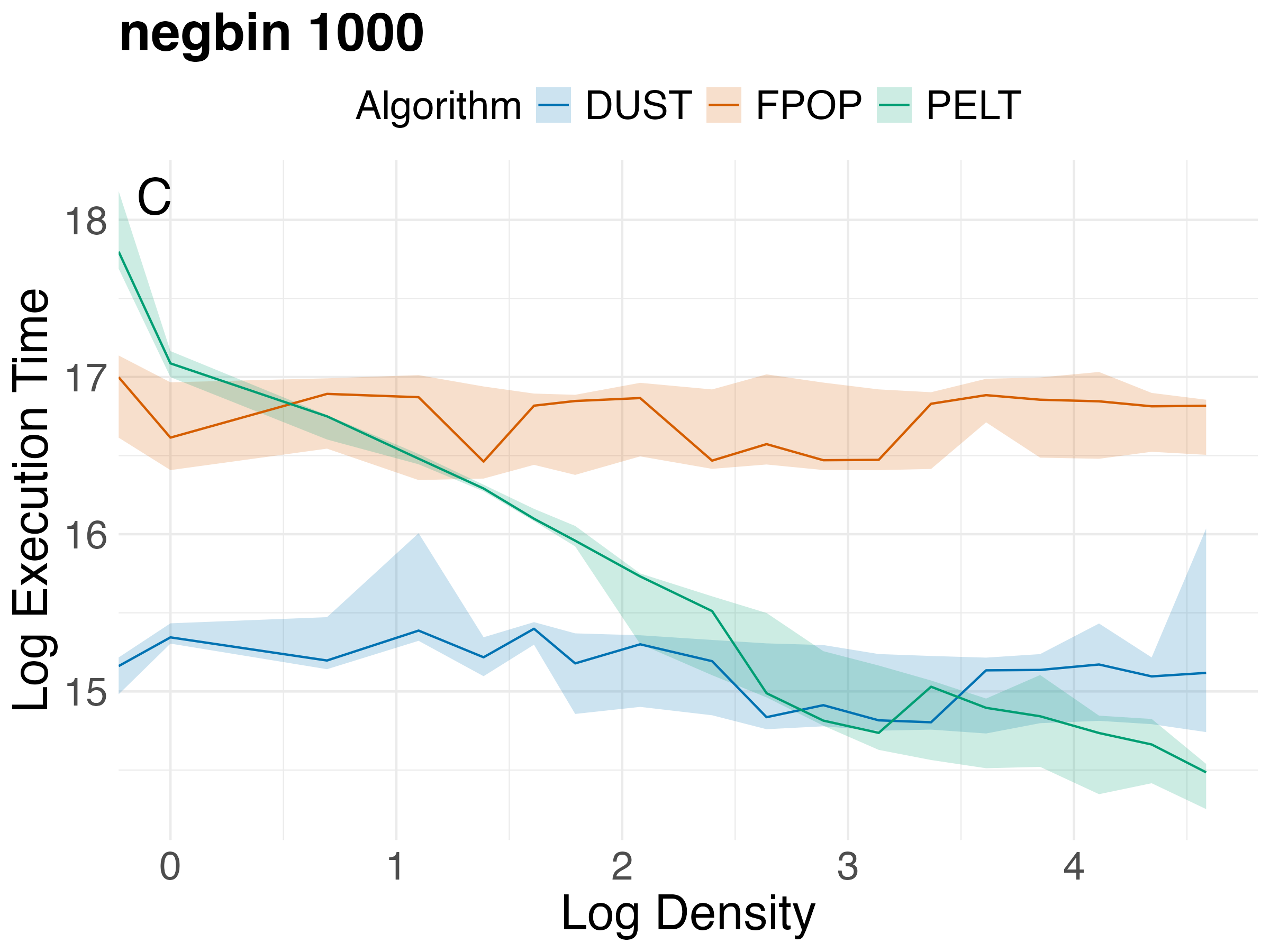}  \includegraphics[width=0.40\linewidth]{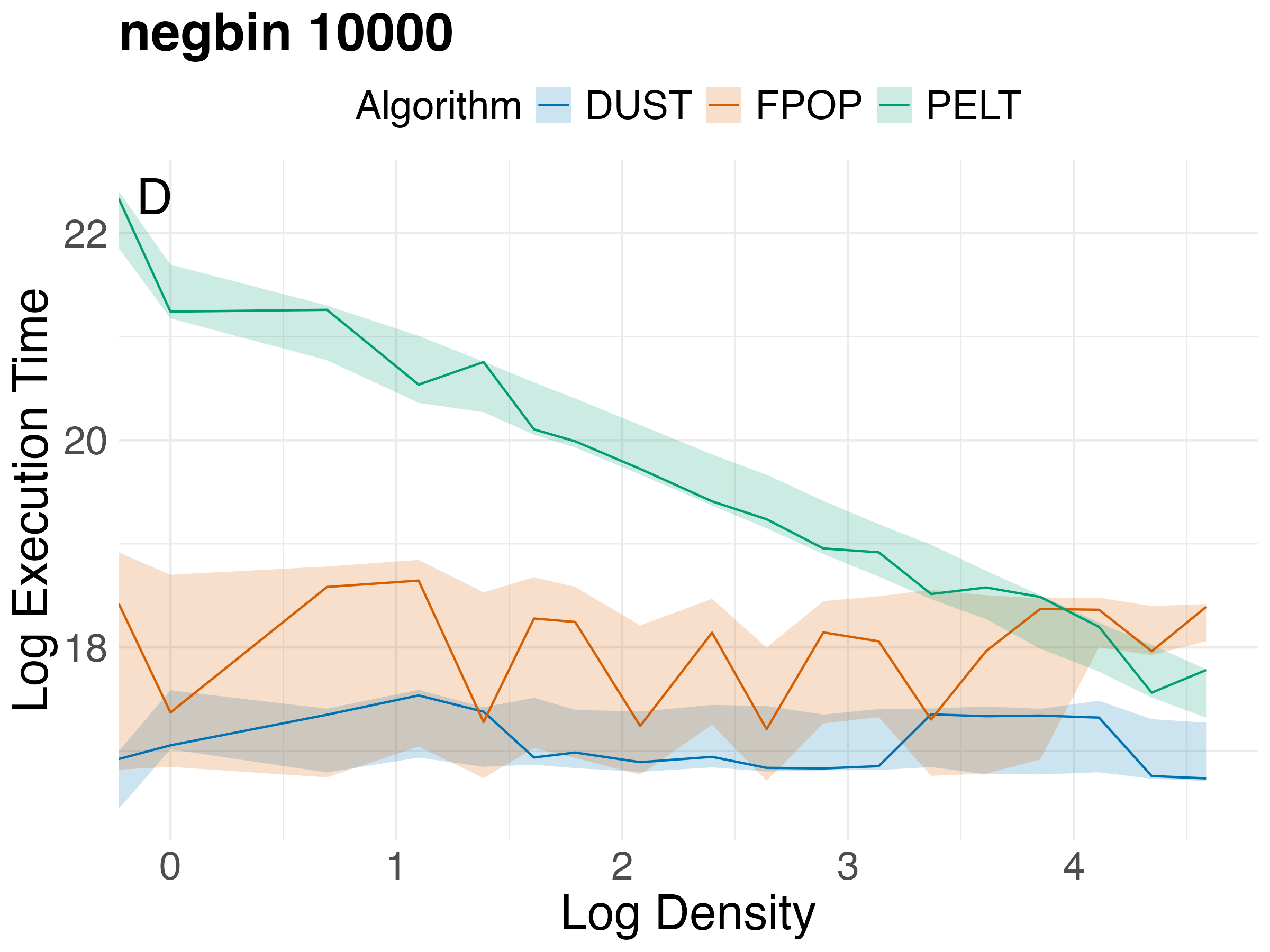}
\caption{Log-log comparison of execution times for varying numbers of change points in input data of lengths $10^3$ (panels A, C) and $10^4$ (panels B, D) under the Gaussian (panels A, B) and the Negative binomial (panels C, D) models.}
\label{fig:timeExecution}
\end{figure}

\subsubsection{Pruning exploration}

Figure~\ref{fig:nb_beta} displays the median number of candidate indices upon exit of the DUST algorithm for $100$ executions on fixed-length data with no change point for $100$ different penalty factors, under the Gaussian (panel A) and Poisson (panel B) models. Data length is $n = 10^7$ and penalty factors are of the form $\beta = a\log n$, with $100$ different $a$ values regularly spaced between $0.001$ and $20$ in the logarithmic scale. The shaded region shows the interval between the $0.025 \, \%$ and $0.975 \, \%$ quantiles. Lower penalty values produce a high number of change points. Values beyond $0.75$ predominantly produce the correct number of change points. As for the number of remaining candidates, we observe a polynomial relation with the logarithmic values of $a$ up to $a = 1$, and a negative linear relation beyond that point, with a maximum median number of candidates of $24$ under the Gaussian model, and $28$ under the Poisson model. This suggests that pruning is strong regardless of the penalty chosen, which guarantees a strongly reduced computational cost even on non-normalised data or with a poorly tuned penalty factor.

\begin{figure}[ht!]
    \centering
\includegraphics[width=0.49\linewidth]{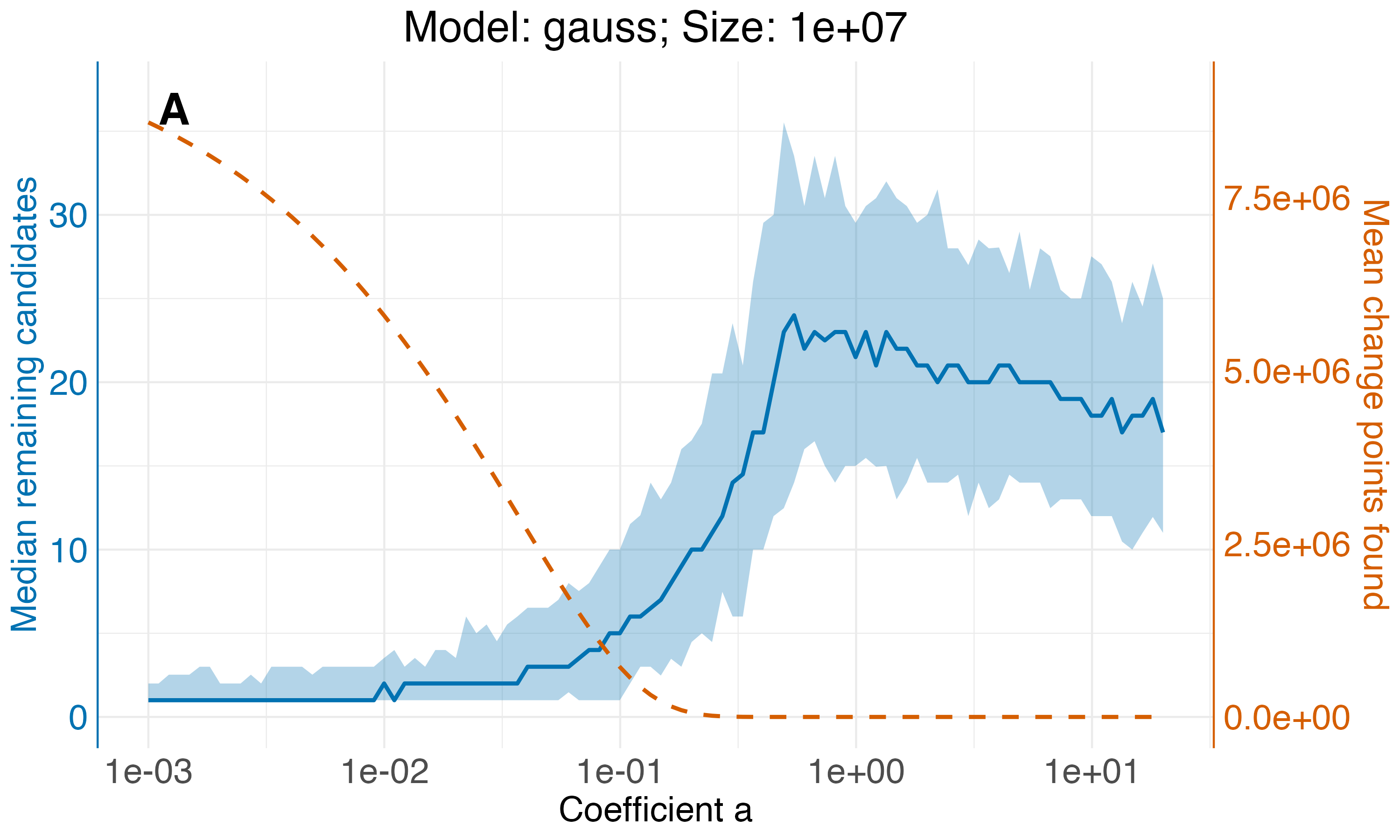}
\includegraphics[width=0.49\linewidth]{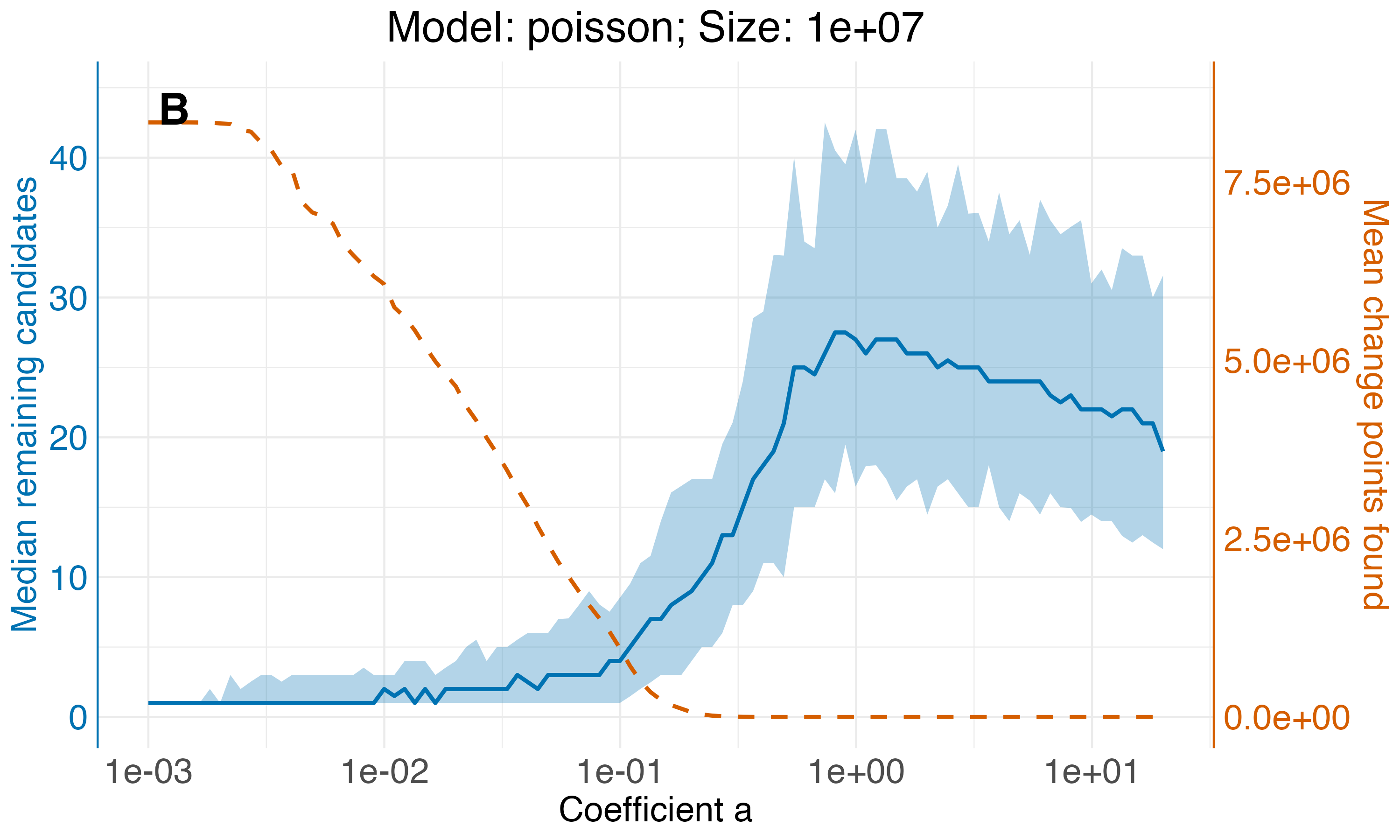}
\caption{Median number of candidate indices upon exit of the DUST algorithm for $100$ executions on fixed-length data with no change point for $100$ different penalty factors, under the Gaussian (panel A) and Poisson (panel B) models. Data length is $n = 10^7$ and Penalty factors are of the form $\beta = a_i\log n$, with $100$ different $a$ values regularly spaced between $0.001$ and $20$ in the logarithmic scale. The shaded region shows the interval between the $0.025 \, \%$ and $0.975 \, \%$ quantiles.}
    \label{fig:nb_beta}
\end{figure}

\subsection{Multivariate signals}

{\textcolor{violet}This section will be updated post-publication.}

\subsection{The DUST package}

DUST package is available on GitHub\footnote{\url{https://github.com/vrunge/dust.git}}.

\section{Application to mouse monitoring}
\label{sec:application}

This section explores the use of DUST on a real-world application taken from~\cite{Krejci2024}.

\paragraph{Context}
The neuromuscular junction (NMJ) is the synapse responsible for the chemical transmission of electrical impulses from motor neurons to muscle cells.
Studying the NMJ is crucial because it explains how nerves communicate with muscles to move. Since this synapse is a disease-prone synapse~\cite{RodrguezCruz2020}, studying the NMJ also helps understand and treat disorders like myasthenia gravis and congenital myasthenic syndrome.

In~\cite{Krejci2024}, the authors find a correlation between a partial inhibition of the NMJ and the level of muscle fatigue in mice. They measure muscle fatigue by quantifying the number and duration of active periods over several days. 

\paragraph{Data}
A typical approach to estimate resting and active states is to use a force platform, a scale measuring the ground reaction force generated by the mouse in its cage.
To compare mice, \cite{Krejci2024} uses several features, including the total duration of activity, the mean duration of activity, and the number of active periods. 
The procedure to detect whether a mouse is active or inactive consists of several steps, including filtering and simple signal transformations. Changes are detected on a simplified signal because the signals have too many samples. 

Here we consider two mice with two different genetic modifications (Mouse ColQ and Mouse A7), monitored over 2 nights and 1 day.
The cage is placed on four force sensors, and we record the evolution over time of the sum of the ground reaction forces at $10$ Hz.
If the sum of forces is constant (low variance), the mouse is not active (resting state).
If the sum varies, the mouse is active, regardless of the activity.

\paragraph{Proposed approach and results}
We propose a more straightforward approach: we use DUST on raw data. Each segment is then classified as active or inactive by thresholding the variance. Since DUST is fast, we can process 12 hours ($\sim 400$k samples) of time series in seconds without much preprocessing. The statistical model is piecewise Gaussian with a fixed mean equal to 0 and piecewise constant variance. 

From a qualitative standpoint, Figure~\ref{fig:mouse-activity} shows parts of the time series and the segmentation into active/resting stages.
Activity is defined by a high variance.

From Table~\ref{tab:mouse-activity}, which summarises with simple features the level of activity of each mouse, we draw two conclusions.
First, mice are more active during the night by a margin.
They have longer stretches of activity and spend more time in an active state. 
We recover here a well-known fact in mouse behaviour analysis.
Second, Mouse A7 is more active than Mouse ColQ during the night, but equally active during the day.
This observation is in line with the results of~\cite{Krejci2024}, albeit with a smaller number of mice.
However, our approach is simpler than that of~\cite{Krejci2024}, thanks to its ability to detect variance changes in long time series within seconds.

\begin{figure}
    \centering
    \includegraphics[width=0.7\textwidth]{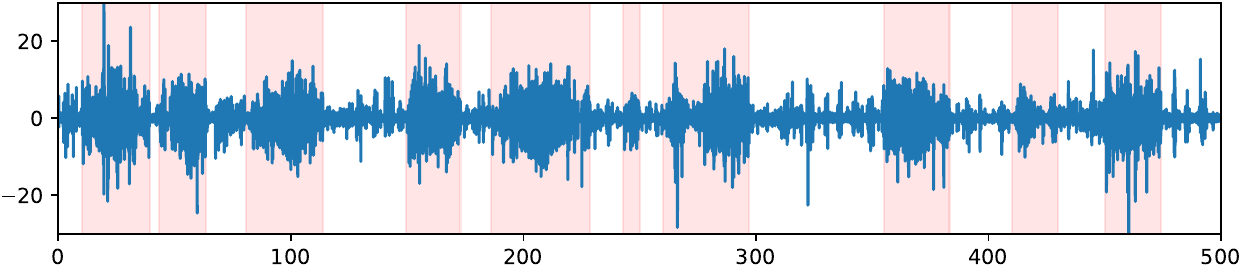}
    \centerline{(a) First night}
    \vskip1em
    \includegraphics[width=0.7\textwidth]{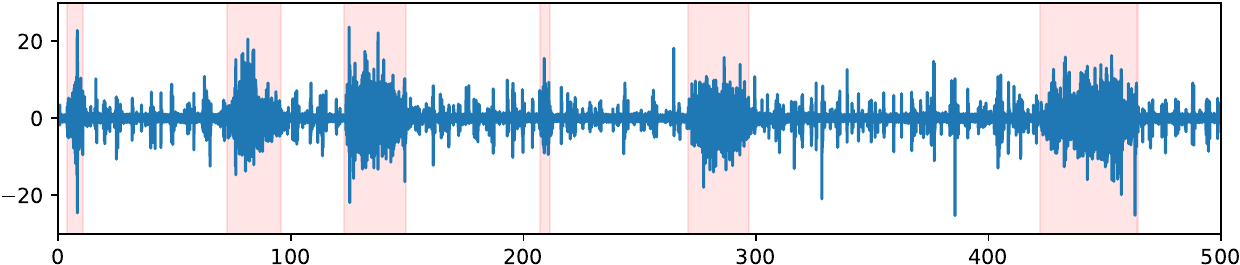}
    \centerline{(b) First day}
    \vskip1em
    \includegraphics[width=0.7\textwidth]{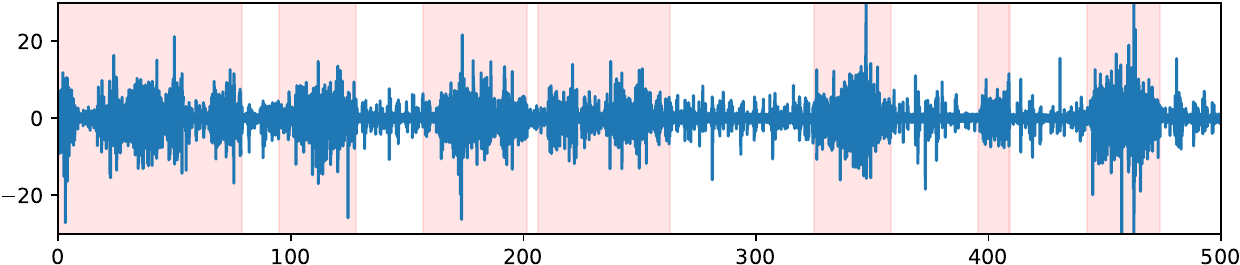}
    \centerline{(c) Second night}
    \vskip1em
    \caption{For Mouse ColQ, the activity measured by a force platform. The x-axis is the time (in minutes), and the y-axis is the ground reaction forces generated by a mouse in its cage (in arbitrary units).
    Red areas indicate activity periods found by our change-point approach; others are rest periods.}
    \label{fig:mouse-activity}
\end{figure}

\begin{table}[t!]
    \centering
    \footnotesize
    \caption{Summary of activity segments for each period.}
    \vskip1em
    \begin{tabular}{ll|ccc}
        \textbf{Mouse} & \textbf{Period} & \# of active segments & Avg. duration (min) & Total active time (h) \\
        \hline
        \multirow{3}{*}{ColQ} 
            & First night     & 18 & 21.22 & 6.37 \\
            & First day       & 21 & 11.03 & 3.86 \\
            & Second night    & 16 & 24.84 & 6.62 \\
        \hline
        \multirow{3}{*}{A7} 
            & First night     & 20 & 25.82 & 8.61 \\
            & First day       & 21 & 9.03  & 3.16 \\
            & Second night    & 15 & 35.73 & 8.93 \\
    \end{tabular}
    \label{tab:mouse-activity}
\end{table}

\newpage 

\section{Conclusion}
\label{sec:conclusion}

A large class of change-point problems can be solved exactly using dynamic programming algorithms with functional pruning rules. Here, pruning refers to identifying the functions $q_t^i(\cdot)$ that are dominated by the minimum operator $\min_j \{ q_t^j(\cdot) \}$ in problems with a functional structure that preserves pointwise ordering over time. Specifically, if there exist indices $i, j$ and a parameter $\ttheta$ such that $q_t^i(\ttheta) < q_t^j(\ttheta)$, this inequality remains valid for all $u > t$.\\
In the DUST framework, pruning is reformulated as a (generally non-convex) optimisation problem, along with its relaxations obtained by removing constraints, which we analyse via their dual and decision representations. Pruning occurs when the computed value exceeds a specified threshold. We prove that there is no duality gap when the number of constraints does not exceed the dimension of the parameter space of the cost function derived from the exponential family.\\

We focused on the one-parameter case, essentially corresponding to univariate data, for which an inequality-based pruning test as simple as PELT can be designed. Unlike PELT, however, this test remains highly efficient across all change-point regimes, whether or not there are many or few changes. This new approach can be viewed as a natural extension of PELT, where PELT corresponds to the dual evaluation at zero. In this simple case, DUST evaluates a decision function with a closed formula, equivalent to the maximum evaluation of the dual. FPOP-like algorithms can efficiently tackle one-parameter problems. DUST has been proven to be very efficient in our simulation study, saving a slightly greater number of indices than FPOP. FPOP’s reduced runtime can be attributed to the iterative root-finding algorithm used in handling the non-Gaussian cost during functional cost updates, where the piecewise structure is maintained through a sequence of intervals.\\

Extensions of DUST to higher dimensions are straightforward and we still get an efficient pruning in low-dimensional settings with a single constraint. Two-parameter cases involving two constraints are particularly attractive -- for instance, in Gaussian change-point detection for the mean and variance, the maximum of the decision function can be computed exactly. However, designing an optimal approach for scenarios with multiple constraints remains an open challenge. This would require new theoretical developments or an extensive simulation study. Current results are promising; we leave this direction for future work.\\

Using a DUST pruning rule is a natural step that could be introduced in many change-point detection algorithms to efficiently prune indices in all change-point regimes. Even if the explicit maximum of the dual or decision function cannot be explicitly derived, evaluating at one point only (e.g., randomly) is a simple rule, fast to test, with high potential for pruning. The exact evaluation in the one-parameter case provides a very efficient inequality-based test; this test should be further studied with the hope of proving time complexity bounds for very generic underlying data processes.

\begin{appendices}

\section{Useful relations between means}
\label{app:means_relations}

\begin{proposition}
 \label{lemma1}
For all $i\in \{1,\ldots,t-1\}$ we have
\begin{align}
 t V(y_{0t}) - (t-i)&V(y_{it}) - iV(y_{0i})\notag \\
\label{lemmaroots1}
   &=  \frac{(t-i)i}{t} \left(\overline{y}_{it}-\overline{y}_{0i}\right)^2\\
\label{lemmaroots2}
 &=\frac{ti}{t-i} \left(\overline{y}_{0t}-\overline{y}_{0i}\right)^2\\
\label{lemmaroots3}
  &=\frac{t(t-i)}{i} \left(\overline{y}_{0t}-\overline{y}_{it}\right)^2\\
\label{lemmaroots4}
   &=i\left(\overline{y}_{0t}-\overline{y}_{0i}\right)^2+(t-i) \left(\overline{y}_{0t}-\overline{y}_{it}\right)^2
\end{align} 
 \end{proposition}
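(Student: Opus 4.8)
The plan is to recognise the left-hand side as the \emph{between-group sum of squares} of a one-way ANOVA decomposition of the segment $y_{0t}$ split at index $i$, and then to obtain the four right-hand forms from one another by exploiting that the global mean $\overline{y}_{0t}$ is a convex combination of the two sub-segment means.

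First I would rewrite each variance as a centred sum of squares. Using $V(y_{rs}) = \overline{y^2_{rs}} - (\overline{y}_{rs})^2$ and multiplying by the segment length gives $t\,V(y_{0t}) = \sum_{j=1}^t (y_j - \overline{y}_{0t})^2$, and likewise $i\,V(y_{0i}) = \sum_{j=1}^i (y_j-\overline{y}_{0i})^2$ and $(t-i)\,V(y_{it}) = \sum_{j=i+1}^t (y_j-\overline{y}_{it})^2$. Splitting the total sum $j=1,\dots,t$ at the index $i$, the left-hand side of the claim becomes
\[
\Big[\sum_{j=1}^i (y_j-\overline{y}_{0t})^2 - \sum_{j=1}^i (y_j-\overline{y}_{0i})^2\Big] + \Big[\sum_{j=i+1}^t (y_j-\overline{y}_{0t})^2 - \sum_{j=i+1}^t (y_j-\overline{y}_{it})^2\Big].
\]
Then I would apply the parallel-axis identity $\sum_{j}(y_j-c)^2 = \sum_j (y_j-\overline{y})^2 + n(\overline{y}-c)^2$ to each bracket, taking $\overline{y}$ to be the relevant segment mean, $n$ its length, and $c=\overline{y}_{0t}$. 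The centred sums cancel, leaving exactly $i(\overline{y}_{0i}-\overline{y}_{0t})^2 + (t-i)(\overline{y}_{it}-\overline{y}_{0t})^2$, which is \eqref{lemmaroots4}. This is the conceptual core of the argument; the remaining identities are purely algebraic restatements.

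To reach \eqref{lemmaroots1}--\eqref{lemmaroots3}, I would substitute the weighted-mean relation $\overline{y}_{0t} = \tfrac{1}{t}\big(i\,\overline{y}_{0i}+(t-i)\,\overline{y}_{it}\big)$, which produces the two collinearity identities $\overline{y}_{0t}-\overline{y}_{0i} = \tfrac{t-i}{t}(\overline{y}_{it}-\overline{y}_{0i})$ and $\overline{y}_{0t}-\overline{y}_{it} = \tfrac{i}{t}(\overline{y}_{0i}-\overline{y}_{it})$. Plugging these into \eqref{lemmaroots4} collapses the two squared terms into a single multiple of $(\overline{y}_{it}-\overline{y}_{0i})^2$, yielding \eqref{lemmaroots1}; then eliminating $(\overline{y}_{it}-\overline{y}_{0i})$ in favour of $(\overline{y}_{0t}-\overline{y}_{0i})$ or of $(\overline{y}_{0t}-\overline{y}_{it})$ via the same two relations gives \eqref{lemmaroots2} and \eqref{lemmaroots3} respectively.

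There is no genuine obstacle here, since every step is elementary; the only thing demanding care is the bookkeeping of the length weights $i$, $t-i$ and $t$ when converting between the three squared differences. For that reason I would fix \eqref{lemmaroots1} as a master form and derive \eqref{lemmaroots2} and \eqref{lemmaroots3} from it by the collinearity identities, rather than re-expanding each case independently from \eqref{lemmaroots4}, which minimises the risk of a misplaced factor.
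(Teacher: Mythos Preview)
Your proof is correct, and the ingredients are the same as the paper's, but the logical order is reversed. The paper proves \eqref{lemmaroots1} first, by expanding each variance as $\overline{y^2}-(\overline{y})^2$ so that the second-moment sums cancel, leaving $-t(\overline{y}_{0t})^2+(t-i)(\overline{y}_{it})^2+i(\overline{y}_{0i})^2$, which it then factors via differences of squares and simplifies with the same collinearity relations you use; it then derives \eqref{lemmaroots2} and \eqref{lemmaroots3} from \eqref{lemmaroots1}, and finally obtains \eqref{lemmaroots4} as the convex combination $\tfrac{t-i}{t}\eqref{lemmaroots2}+\tfrac{i}{t}\eqref{lemmaroots3}$. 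You instead enter through the parallel-axis (K\"onig--Huygens) identity to obtain \eqref{lemmaroots4} directly as the between-group sum of squares, and then collapse to \eqref{lemmaroots1}--\eqref{lemmaroots3}. Your route is slightly more conceptual (it makes the ANOVA interpretation explicit and avoids the difference-of-squares manipulation), while the paper's is a touch more mechanical; neither has any real advantage over the other for this elementary identity.
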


 \begin{proof}
For any $(a,b) \in (\mathbb{N}^*)^2$ with $a<b$ we easily get the following expressions:

\begin{align}
\label{relationMu1}
   &\overline{y}_{0b}-\overline{y}_{0a} = \frac{b-a}{b}\left(\overline{y}_{ab}- \overline{y}_{0a}\right) \,,\\
\label{relationMu2}
 &\overline{y}_{0b}-\overline{y}_{ab} = \frac{a}{b}\left(\overline{y}_{0a}- \overline{y}_{ab}\right) \,.
\end{align} 
With $a=i$ and $b=t$, we derive from \eqref{lemmaroots1}  the relation \eqref{lemmaroots2} using \eqref{relationMu1} and also derive relation \eqref{lemmaroots3} using \eqref{relationMu2}. Expression \eqref{lemmaroots4} is the result of $\frac{t-i}{t}\eqref{lemmaroots2} + \frac{i}{t}\eqref{lemmaroots3}$. Therefore, we only need to prove relation \eqref{lemmaroots1}. Expanding the squares in variance expressions we get:
\begin{align*}
E &=  t V(y_{0t}) - (t-i)V(y_{it}) - iV(y_{0i})\\
   &= -t (\overline{y}_{0t})^2  + (t-i)(\overline{y}_{it})^2+ i(\overline{y}_{0i})^2\\
  & = i((\overline{y}_{0i})^2-(\overline{y}_{0t})^2) + (t-i) ((\overline{y}_{it})^2-(\overline{y}_{0t})^2)\\
  &=i(\overline{y}_{0i}-\overline{y}_{0t})(\overline{y}_{0i}+\overline{y}_{0t}) + (t-i)(\overline{y}_{it}-\overline{y}_{0t})(\overline{y}_{it}+\overline{y}_{0t})\,.
\end{align*} 
Using again relations \eqref{relationMu1} and \eqref{relationMu2}, we have
  $$E =\frac{(t-i)i}{t}\Big(\overline{y}_{0i}-\overline{y}_{it})(\overline{y}_{0i}+\overline{y}_{0t}) +(\overline{y}_{it}-\overline{y}_{0i})(\overline{y}_{it}+\overline{y}_{0t})\Big)\,,$$
  and we get the expression \eqref{lemmaroots1} by simplifying this last expression. 
 \end{proof}

\section{Proofs of Section \ref{sec:functional}}
\label{app:functional}

\subsection{Proof of Proposition \ref{prop:max_pruning}}
\label{app:max_pruning}

If we would remove index $s_0$ from $\Tau_{t}$ while still having a point $\ttheta_0 \in \Theta$ such that:
\begin{equation}
\label{eq:conditionsNonOptimal}
\left\{\begin{aligned}
&Q_t(\ttheta_0) = q_t^{s_0}(\ttheta_0)< q_t^{s}(\ttheta_0)\quad  \hbox{for} \quad  s \ne s_0\,,\\
& Q_t + \beta > q_t^{s_0}(\ttheta_0)\,,\end{aligned}
\right.
\end{equation}
we show that removing $s_0$ could lead to an under-optimal solution for \eqref{eq:fpop} and then for \eqref{eq:global_cost} at some data time $t_0 > t$. We will choose data points after time $t$ to create such a solution, i.e., with $\argmin_{\ttheta} Q_{t_0}(\ttheta)$ attained by a value on function $q_{t_0}^{s_0}$. To that end, for all further iteration $t' > t$ we choose data points $y_{t'}$ with unitary cost:
$$c(y_{t'}; \theta) = A(\ttheta) - \ttheta\cdot\mathbf{T}(y_{t'})= A(\ttheta) - \ttheta\cdot\nabla A(\ttheta_0) = c(\ttheta)\,.$$

{\it Case 1: non-optimal indices $s \ne s_0 \in \{0,\ldots, t-1\}$.} By continuity of function $Q_t(\cdot)$ there exists a ball centred on $\ttheta_0$ with small radius~$\epsilon$, $B(\ttheta_0, \epsilon)$, such that  $Q_t(\ttheta) = q_t^{s_0}(\ttheta) < q_t^{s}(\ttheta)$ for points $\ttheta$ in this ball. We consider $\ttheta$ outside $B(\ttheta_0, \epsilon)$. By strong convexity of $A$, there exists a constant $a>0$ such that: 
$$c(\ttheta) > c(\ttheta_0) + a \|\ttheta - \ttheta_0\|^2\ge c(\ttheta_0) + a \epsilon^2\,,$$
for all $\ttheta$ outside $B(\ttheta_0, \epsilon)$. We also write
$\Delta = q_t^{s_0}(\ttheta_0) - Q_t \ge 0$.
We consider $T = \lceil\frac{\Delta}{a\epsilon^2}\rceil$, then:
$$Tc(\ttheta) > Tc(\ttheta_0) + aT \epsilon^2 \ge Tc(\ttheta_0) + \Delta\,.$$
Thus,
$$Q_t + Tc(\ttheta)    >  Q_t +  Tc(\ttheta_0) + \Delta   =  Tc(\ttheta_0) + q_t^{s_0}(\ttheta_0)\,,$$
for all $\ttheta$ outside $B(\ttheta_0, \epsilon)$. Thus, the minimum of $Q_{t_0}(\cdot) = Q_{t+T}(\cdot)$ can only be in the ball. All points outside the ball corresponding to indices $s \ne s_0$ in $\{0,\ldots, t-1\}$ are not optimal, which proves case 1.\\

{\it Case 2: non-optimal indices $s \in \{t,\ldots, t+T\}$.} We first consider $s=t$. The second assumption in Equation \eqref{eq:conditionsNonOptimal} ($Q_t + \beta > q_t^{s_0}(\ttheta_0)$) ensures that values visible for function $q_t^{s_0}$ are not all too large and pruned by function $q_{t+1}^t$. We have:
\begin{align*} 
q_{t_0}^t(\ttheta) = Q_t + Tc(\ttheta) + \beta    &> q_t^{s_0}(\ttheta_0) + Tc(\ttheta)  \,,\\
&  \ge q_t^{s_0}(\ttheta_0) + Tc(\ttheta_0) \,,\\
& = q_{t+T}^{s_0}(\ttheta_0) = q_{t_0}^{s_0}(\ttheta_0)\,.
\end{align*}
All points on the function $q_{t_0}^t$ are higher than $q_{t_0}^{s_0}(\ttheta_0)$, consequently index $t$ can not be optimal. 
For $s \in \{t+1,\ldots, t+T\}$, we have for index $\overline{s} < t$:
$$\min_{\ttheta}c(y_{\overline{s}t}; \ttheta) + (s-t)c(\ttheta_0) \le \min_{\ttheta}c(y_{\overline{s}s}; \ttheta)\,,$$
or, 
$$\min_{\ttheta}c(y_{\overline{s}t}; \ttheta) +Tc(\ttheta_0) \le  \min_{\ttheta}c(y_{\overline{s}s}; \ttheta) + (t_0-s)c(\ttheta_0)\,.$$
Using for $\overline{s}$ the best index for last change point in $Q_s$ (written $\overline{s}_0$) we have 
$Q_{\overline{s}_0} + \min_{\ttheta}c(y_{\overline{s}_0s}; \ttheta)+ \beta = Q_s$ and we get:
$$ Q_{\overline{s}_0} + \min_{\ttheta}c(y_{\overline{s}_0t}; \ttheta) +\beta + Tc(\ttheta_0)  \le  Q_s + (t_0-s)c(\ttheta_0)\,.$$
By definition of $Q_t$ (see \eqref{eq:fpop}) we have 
$Q_t \le Q_{\overline{s}_0} + \min_{\ttheta}c(y_{\overline{s}_0t}; \ttheta) + \beta $ and eventually:
\begin{align*} 
q_{t_0}^t(\ttheta_0) = Q_t+T c(\ttheta_0) + \beta   & \le Q_s + (t_0-s)c(\ttheta) + \beta   \,,\\
& \le Q_s + (t_0-s)c(\ttheta) + \beta = q_{t_0}^s(\ttheta) \,.
\end{align*}

The global minimum of the function $Q_{t_0}(\cdot)$ cannot be attained in index $s$, as we proved that it cannot either in index $t$. By exclusion of all indices, except $s_0$, we have shown that the global minimum is attained at a value of the function $q_{t_0}^{s_0}$; consequently, removing $s_0$ from $\Tau_t$ would lead to an under-optimal solution.

\subsection{Proof of Proposition \ref{prop:worstcaseGauss}}
\label{app:worstcaseGauss}

The minimum of the function $q_t^s$ is defined as $m_t^s$. A sufficient condition for no pruning is to meet the condition $m_n^0 = m_n^1 = \cdots = m_n^{n-1}$ obtained at distinct values. This is the case for a strictly increasing time series. We assume that the global minimum is always attained for index $0$ and that the time series is increasing; thus, using the variance notation in Appendix \ref{app:means_relations}, we solve:
$$m_n^0 = nV(y_{0n}) = (n-t)V(y_{tn}) + tV(y_{0t}) + \beta\,,\quad \hbox{for all} \,\, t = 1,\dots, n-1\,.$$
By relations in Appendix~\ref{app:means_relations}, this leads to equations
$$ \frac{nt}{n-t} \left(\overline{y}_{0n}-\overline{y}_{0t}\right)^2 = \beta \,,\quad \hbox{for all} \,\, t = 1,\dots, n-1\,.$$
With $t=1$, we obtain $ \overline{y}_{0n} = \sqrt{\beta\frac{n-1}{n}}$ having chosen first data value $y_1 = 0$, and therefore, with other indices, we get:
$$y_1+\cdots +y_{t} = t\sqrt{\beta\frac{n-1}{n}}- \sqrt{\beta \frac{t(n-t)}{n}}\,.$$
We then find the proposed expression easily. We need to verify that the time series is strictly increasing and that the global minimum is associated with the first data point at each time step. Indeed,
$$y_{t+1} - y_t = \sqrt{\frac{\beta}{n}}\Big(-\sqrt{(t+1)(n-t-1)}+2\sqrt{t(n-t)} -\sqrt{(t-1)(n-t+1)} \Big)$$
is positive as $t \mapsto -\sqrt{t(n-t)}$ is strictly convex on interval $[0,n]$. The last change point is always associated with the first data point, as we have an increasing sequence of data with equality of the minima at the ending instant $n$ ($m_n^0 = m_n^1 = \cdots = m_n^{n-1}$).

\subsection{No pruning in the exponential family}
\label{app:worstCase2}

We say that a time series increases if it is increasing coordinate by coordinate. Under some mild assumptions, we can build a no-pruning example of increasing time series in a more general setting.

\begin{proposition} 
\label{prop:worstcase}
We consider costs of type "$a A(\ttheta) + b \cdot \ttheta + c$" derived from the natural exponential family with continuous density. We consider the following equations in variable~$x$:
\begin{equation}
\label{eq:worstEquation}
E(t): \quad g_t(x) = \frac{\beta}{n} + \mathcal{D^*}(Y)\,,\quad t = 1,\ldots,n-1\,,
\end{equation}
with
$$
\left\{
    \begin{aligned}
    &\mathcal{D^*}(x) = x\cdot (\nabla A)^{-1}(x) - A((\nabla A)^{-1}(x)),,\quad x \in \Omega\,,\\
     &g_t(x) = \frac{t}{n}\mathcal{D^*}(x)  + \frac{n-t}{n}\mathcal{D^*}\Big(\frac{nY - tx}{n-t}\Big)\,,\quad x \in \Omega_t \subset \Omega\,.\\
    \end{aligned}
    \right.
$$
If $Y \in \Omega_0 = \cap_{t=1}^{n-1}\Omega_t$ and $\beta$ is chosen such that:
$$\sup_{x \in \Omega_0\,,\,x \le Y} \{g_1(x)\} > \frac{\beta}{n} + \mathcal{D^*}(Y)\,,$$
there exists an increasing sequence of data $y_1,\ldots,y_n$ with zero pruning solving \eqref{eq:global_cost} which verifies:
$$
\left\{
    \begin{aligned}
    &Y = \overline{y}_{0n} = \frac{1}{n}\sum_{t=1}^{n}y_t\,,\\
    &x = s_t \quad \hbox{solution of}\quad E(t)\,,\\
    &y_t = t s_t - (t-1)s_{t-1}\quad t=1,\ldots,n\,,\quad \hbox{with}\quad s_0 = 0\,.
    \end{aligned}
    \right.
$$
\end{proposition}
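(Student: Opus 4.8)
The plan is to transpose the Gaussian construction of Proposition~\ref{prop:worstcaseGauss} to the natural exponential family, replacing the variance identities of Appendix~\ref{app:means_relations} by the strict convexity of $\mathcal{D^*}$. For such a family $T(y)=y$, so $\overline{\mathbf{S}}_{st}=\overline{y}_{st}$, and since $\min_{\ttheta}c(y_{st};\ttheta)=-(t-s)\mathcal{D^*}(\overline{y}_{st})$ is attained at $\ttheta=(\nabla A)^{-1}(\overline{y}_{st})$, the minimum of an inner function is $m_t^s:=\min_{\ttheta}q_t^s(\ttheta)=Q_s-(t-s)\mathcal{D^*}(\overline{y}_{st})+\beta$. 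First I would record the sufficient no-pruning condition: \emph{the values $m_n^s$, $s=0,\dots,n-1$, are all equal and their minimisers pairwise distinct}. Indeed, by additivity~\eqref{eq:additive} the sign of $q_t^r(\ttheta)-q_t^s(\ttheta)$ is independent of $t$; hence if at the minimiser $\ttheta_s^\star$ of $q_n^s$ one has $q_n^s(\ttheta_s^\star)\le q_n^r(\ttheta_s^\star)$ for every $r$ (which holds because $m_n^s$ is the common minimal value), the same inequality holds at all earlier times, so $s$ stays visible and is never pruned, giving $\Tau_n=\{0,\dots,n-1\}$. Strict convexity of each $q_n^r$ makes these inequalities strict once the minimisers differ.

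Second, under the working hypothesis that the empty segmentation is optimal at each step, i.e. $Q_s=\beta-s\,\mathcal{D^*}(\overline{y}_{0s})$ with $Q_0=0$, I would write the equality $m_n^s=m_n^0$ explicitly. Substituting $Q_s$, dividing by $n$, and using $\overline{y}_{tn}=\frac{nY-t\,x}{\,n-t\,}$ with $Y=\overline{y}_{0n}$ and the unknown running mean $x=s_t=\overline{y}_{0t}$, the equality becomes exactly $E(t)$, that is $g_t(s_t)=\tfrac{\beta}{n}+\mathcal{D^*}(Y)$. This identifies the displayed equations as the no-pruning condition re-expressed in the running means.

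Third, I would solve $E(t)$. Each $g_t$ is a convex combination of $\mathcal{D^*}(x)$ and $\mathcal{D^*}$ of an affine image of $x$, hence strictly convex, and $g_t'(x)=\tfrac{t}{n}\big[(\mathcal{D^*})'(x)-(\mathcal{D^*})'(\tfrac{nY-tx}{n-t})\big]$ vanishes only at $x=Y$, where $g_t(Y)=\mathcal{D^*}(Y)$; thus $g_t$ is strictly decreasing in $x$ on $\{x\le Y\}$ down to its minimum $\mathcal{D^*}(Y)$, which lies strictly below the target since $\beta>0$. The key structural fact is that $t\mapsto g_t(x)$ is nondecreasing for $x\le Y$: the two points $x$ and $\tfrac{nY-tx}{n-t}$ carry masses $\tfrac{t}{n},\tfrac{n-t}{n}$ with fixed barycentre $Y$, and increasing $t$ spreads this configuration, so by convexity of $\mathcal{D^*}$ the weighted sum grows. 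Consequently $g_1$ is the pointwise smallest and the single hypothesis $\sup_{x\in\Omega_0,\,x\le Y}g_1(x)>\tfrac{\beta}{n}+\mathcal{D^*}(Y)$ on the common domain $\Omega_0$ forces every $g_t$ to exceed the target at a common point; the intermediate value theorem then produces a unique root $s_t\le Y$ for each $t$, and the same monotonicity gives the ordering $s_1\le\cdots\le s_n=Y$. Telescoping the running means recovers the data through $y_t=t\,s_t-(t-1)s_{t-1}$.

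Finally comes the verification step, which I expect to be the main obstacle. I must show (a) the constructed sequence is strictly increasing, and (b) the empty segmentation is indeed optimal at every intermediate time, so that the formula $Q_s=\beta-s\,\mathcal{D^*}(\overline{y}_{0s})$ used above is legitimate. Introducing the nonnegative excess splitting cost $e(s,t)=s\,\mathcal{D^*}(\overline{y}_{0s})+(t-s)\mathcal{D^*}(\overline{y}_{st})-t\,\mathcal{D^*}(\overline{y}_{0t})\ge0$ (Jensen), optimality of index $0$ at time $t$ is equivalent to $e(s,t)\le\beta$, whereas $E(s)$ forces $e(s,n)=\beta$; hence it suffices that $t\mapsto e(s,t)$ be nondecreasing up to $t=n$. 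Similarly, ``increasing data'' is discrete convexity of the partial sums $S_{0t}=t\,s_t$, namely $y_{t+1}-y_t=S_{0,t+1}-2S_{0,t}+S_{0,t-1}>0$, which is strictly stronger than the monotonicity of the running means $s_t$ already obtained. Unlike the Gaussian case, neither the endpoint-monotonicity of the Bregman-type gap $e(s,t)$ nor the convexity of the partial sums can be read off a closed form; both must be derived from the convexity and smoothness of $\mathcal{D^*}$ together with the ordering of the roots $s_t$, and this is where the generality beyond the Gaussian model is genuinely tested. Once these two monotonicity facts are established, the minimisers $(\nabla A)^{-1}(\overline{y}_{sn})$ are pairwise distinct, the sufficient condition of the first step applies, and $\Tau_n=\{0,\dots,n-1\}$ follows.
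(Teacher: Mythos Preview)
Your outline tracks the paper's proof closely: reduce no-pruning to the equal-minima condition $m_n^0=\cdots=m_n^{n-1}$, translate it (under the single-segment hypothesis $Q_s=\beta-s\,\mathcal{D^*}(\overline{y}_{0s})$) into the equations $E(t)$, then study the $g_t$ (strict convexity, minimum $\mathcal{D^*}(Y)$ at $Y$, strict monotonicity $g_{t+1}>g_t$ away from $Y$) to extract an increasing sequence of roots $s_t$. Your majorisation reading of the monotonicity in $t$ is a genuine simplification: since $z_t=\tfrac{nY-tx}{n-t}$ satisfies $(n-t)z_t=x+(n-t-1)z_{t+1}$, the two-point measure $(\tfrac{t}{n},x;\tfrac{n-t}{n},z_t)$ is obtained from $(\tfrac{t+1}{n},x;\tfrac{n-t-1}{n},z_{t+1})$ by averaging, so Jensen gives $g_t(x)\le g_{t+1}(x)$ directly. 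The paper instead treats $t$ as a continuous variable, computes $\partial_t g_t(x)$, and invokes the first-order strict-convexity inequality for $\mathcal{D^*}$; both routes land in the same place, but yours is shorter.

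Where you part ways with the paper is in your final paragraph, and here you are more scrupulous than the paper itself. You correctly observe that increasing running means $s_t$ do not automatically yield increasing data $y_t=t\,s_t-(t-1)s_{t-1}$ (this is discrete convexity of $t\mapsto t\,s_t$, strictly stronger than monotonicity of $s_t$), and that the single-segment formula for $Q_s$ must be justified a~posteriori via the Bregman-gap inequality $e(s,t)\le\beta$ for all $t\le n$. The paper's proof does not supply either argument: it simply asserts that ``it is easy to extract the increasing time series $(y_t)$'' and takes the form of $Q_t$ for granted. So the two obstacles you flag are real and remain unresolved in the paper as well; your reformulation of them (monotonicity of $t\mapsto e(s,t)$ and convexity of $t\mapsto t\,s_t$) is the right framing, and you should not expect to find their resolution in the published proof.
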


\begin{proof}

The proof is an extension of the evidence of Proposition~\ref{prop:worstcaseGauss} and is based on a similar strategy. We look for data configurations for which the minimum of functions $q_t^s$, denoted $m_t^s$, are all the same: $m_n^0 = m_n^1 = \cdots = m_n^{n-1}$. There is no possible pruning if we find an increasing time series realising such conditions. For density from the natural exponential family, we get the relations:
$$ m_n^{t} = Q_t + (n-t)(A(\rho_t) - \overline{y}_{tn}\cdot\rho_t)  + \beta = Q_0 + n(A(\rho_0) - \overline{y}_{0n}\cdot\rho_0) + \beta  =  m_n^{0}\,,$$
with $\rho_i = (\nabla A)^{-1}(\overline{y}_{in})$. We have also $Q_0 = 0$ and $Q_t = t(A((\nabla A)^{-1}(\overline{y}_{0t})) - \overline{y}_{0t}\cdot(\nabla A)^{-1}(\overline{y}_{0t})) + \beta$ as data is not spit into segments (we have only one large segment and the minimum for $Q_t(\cdot)$ is attained on $q_t^0$). This leads to:
$$\frac{t}{n}\mathcal{D^*}(\overline{y}_{0t})  + \frac{n-t}{n}\mathcal{D^*}\Big(\overline{y}_{tn}\Big) = \frac{\beta}{n} + \mathcal{D^*}(\overline{y}_{0n})\,,$$
with $\mathcal{D^*}(x) = x\cdot (\nabla A)^{-1}(x) - A((\nabla A)^{-1}(x))$. We get Equations \eqref{eq:worstEquation} denoted $E(t)$ with $x = \overline{y}_{0t}$, $Y = \overline{y}_{0n}$ and relation $\frac{nY - tx}{n-t} = \overline{y}_{tn}$. It remains to find conditions for the existence of a solution. We define $g_t(x) = \frac{t}{n}\mathcal{D^*}(x)  + \frac{n-t}{n}\mathcal{D^*}\Big(\frac{nY - tx}{n-t}\Big)$ and study this function. We have:
$$
\left\{\begin{aligned}
&g_t(Y) = \mathcal{D^*}(Y)\,,\\
&\nabla g_t(Y) = 0\,,\\
&g_t \quad \hbox{strictly convex}\,,\\
&g_{t+1} > g_{t} \quad \hbox{except in point }\, Y\,.\\
\end{aligned}
\right.
$$
Evaluation of $g_t$ in $Y$ is obvious, for the gradient we get: 
$$ \nabla g_t(x) = \frac{t}{n}\Big( \nabla\mathcal{D^*}(x) - \nabla\mathcal{D^*}\Big(\frac{nY - tx}{n-t}\Big)\Big)\,,$$
and then $\nabla g_t(Y) = 0$. Moreover, function $g_t$ is strictly convex: $\nabla^2 g_t(x) =  \frac{t}{n} \nabla^2\mathcal{D^*}(x) + \frac{t^2}{n(n-t)}\nabla^2\mathcal{D^*}\Big(\frac{nZ - tx}{n-t}\Big)$; its Hessian is definite positive since the Hessian of $ \mathcal{D^*}$ is. We prove $g_{t+1} >g_t$ considering $t$ as a continuous variable. We define $h_x(t) = g_{t}(x)$ and we have:

$$\nabla h_x(t) = \frac{1}{n}\mathcal{D^*}(x) - \frac{1}{n}\mathcal{D^*}\Big(\frac{nY - tx}{n-t}\Big) + \frac{Y-x}{n-t}\cdot \nabla \mathcal{D^*}\Big(\frac{nY - tx}{n-t}\Big)\,.$$
However, by a characterisation of the strict convexity of $\mathcal{D^*}$ we have for $x \ne Y$:
$$ \mathcal{D^*}(x) > \mathcal{D^*}\Big(\frac{nY - tx}{n-t}\Big) + n\frac{Y-x}{n-t}\cdot \nabla \mathcal{D^*}\Big(\frac{nY - tx}{n-t}\Big)\,,$$
as $n\frac{Y-x}{n-t} = x -   \frac{nY-tx}{n-t}$. This leads obviously to $\nabla h_x(t) > 0$ and consequently to $h_x(t+1) - h_x(t) > 0$ or with the $g$ notation: $g_{t+1}(x) >g_t(x)$ for all $t$ and all $x \ne Y$. 

Now, with $\Omega_t$ the convex domain of $g_t$, if $\Omega_0 = \cap_{t=1}^{n-1}\Omega_t$ is not empty, we can choose $Y$ in $\Omega_0$. We also choose $\beta$ such that 
$\sup_{x \in \Omega_0\,,\,x \le Y} \{g_1(x)\} > \frac{\beta}{n} + \mathcal{D^*}(Y)$. By continuity of $g_t$ we can find $x = s_1$ solving $E(1)$. Using the fact that $g_{2}>g_1$ and $\min_{x\in \Omega_0} g_2(x) = \D^*(Y)$ we can now find $x = s_2 > s_1$ solving $E(2)$. Step by step, we build the sequence $(s_t)$ corresponding to an increasing sequence of means $\overline{y}_{0t}$. From this point, it is easy to extract the increasing time series $(y_t)$, which concludes the proof.
\end{proof}

Notice that the multiple independent Poisson model leads to the expression \eqref{eq:worstEquation} written as:
$$\frac{t}{n}x \log(x)  - Y \log (Y) + \Big(Y-\frac{t}{n}x\Big)\log\Big(\frac{Y - \frac{t}{n}x}{1-\frac{t}{n}}\Big) = \frac{\beta}{n}\,,$$
where $x \in \Omega_0 = \Big(0, \frac{n}{n-1}Y\Big) = \Big(0, \frac{n}{n-1}Y_1\Big)\times \cdots \times\Big(0, \frac{n}{n-1}Y_d\Big) \subset \R^d$. The set $\Omega_0$ is not empty and the limit of  $g_1$ in zero leads to the condition for beta: $\beta \le n\log\Big(\frac{n}{n-1}\Big)\Big(\sum_{i=1}^dY_i\Big)$. This bound is much lower than usual values chosen for penalty, bounded by a $\log(n)$ term \cite{cleynen2017model}. As the Poisson model needs integer data points, the solution we get has to be approximated by $\tilde y_t = \lceil y_t \rceil$ and the resulting data and functional cost $Q_n(\cdot)$ have some of their indices pruned as illustrated in Figure~\ref{fig:worstcase_poisson}. If $Y$ is chosen very large, the approximation improves and the obtained time series converges to the Gaussian data with no pruning.\\

\begin{figure}[!t]%
\centering\includegraphics[width=0.8\linewidth]{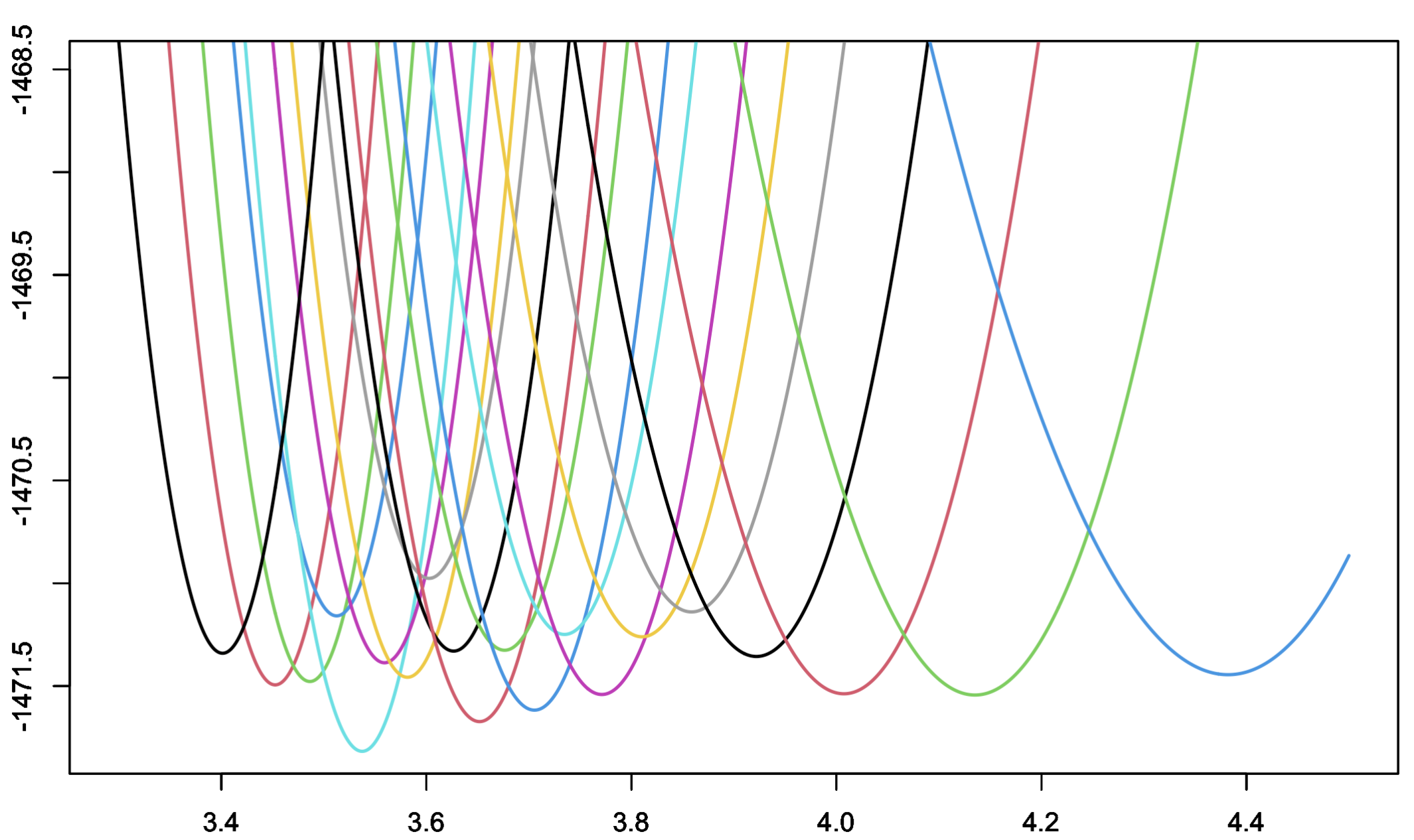}
\caption{Poisson functional cost $Q_{20}(\cdot)$ obtained from $20$ data points. We identify easily the $20$ different inner functions from which the functional cost is built. We choose $Y = 30$ and  $\beta = 0.995 Y \log(n/(n-1))$. Due to the integer approximation for $\tilde y_t$, only $14$ inner functions are visible in the functional cost.}\label{fig:worstcase_poisson}
\end{figure}

In Table \ref{dual_A} we provide the expression for $A$, $(\nabla A)^{-1}$, and its domain $\Omega$ for a few distributions of the natural exponential family. For instance, the exponential model leads to the expression \eqref{eq:worstEquation} written as:
$$\log (Y)-\frac{t}{n}\log(x)   - \Big(1 - \frac{t}{n}\Big)\log\Big(\frac{Y - \frac{t}{n}x}{1-\frac{t}{n}}\Big) = \frac{\beta}{n}\,,$$
where $x \in \Omega_0 = \Big(0, \frac{n}{n-1}Y\Big)$. For a multiple independent exponential model, the dual $\mathcal{D^*}$ can be decomposed as the sum of dual functions, one for each dimension: $\mathcal{D^*}(x) = \sum_{i=1}^{d}\D^*_i(x_i)$, with $x = (x_1,\ldots, x_d)^T$, justifying the following rectangular form for $\Omega_0 = \Big(0, \frac{n}{n-1}Y_1\Big)\times \cdots \times\Big(0, \frac{n}{n-1}Y_d\Big) \subset \R^d$. The set $\Omega_0$ is not empty, and the limit of  $g_1$ is infinite; thus, there is no restriction on the $\beta$ value. A more general result for the exponential family seems possible. Still, it requires solving Equation \eqref{eq:worstEquation} for $x = \frac{1}{t}\sum_{i=1}^t \mathbf{T}(y_i)$ and inverting $\mathbf{T}$ to explicit the time series $(y_t)$. This step depends on the chosen distribution and requires additional effort, which is left for further work.

\begin{figure}[!t]%
\centering\includegraphics[width=0.8\linewidth]{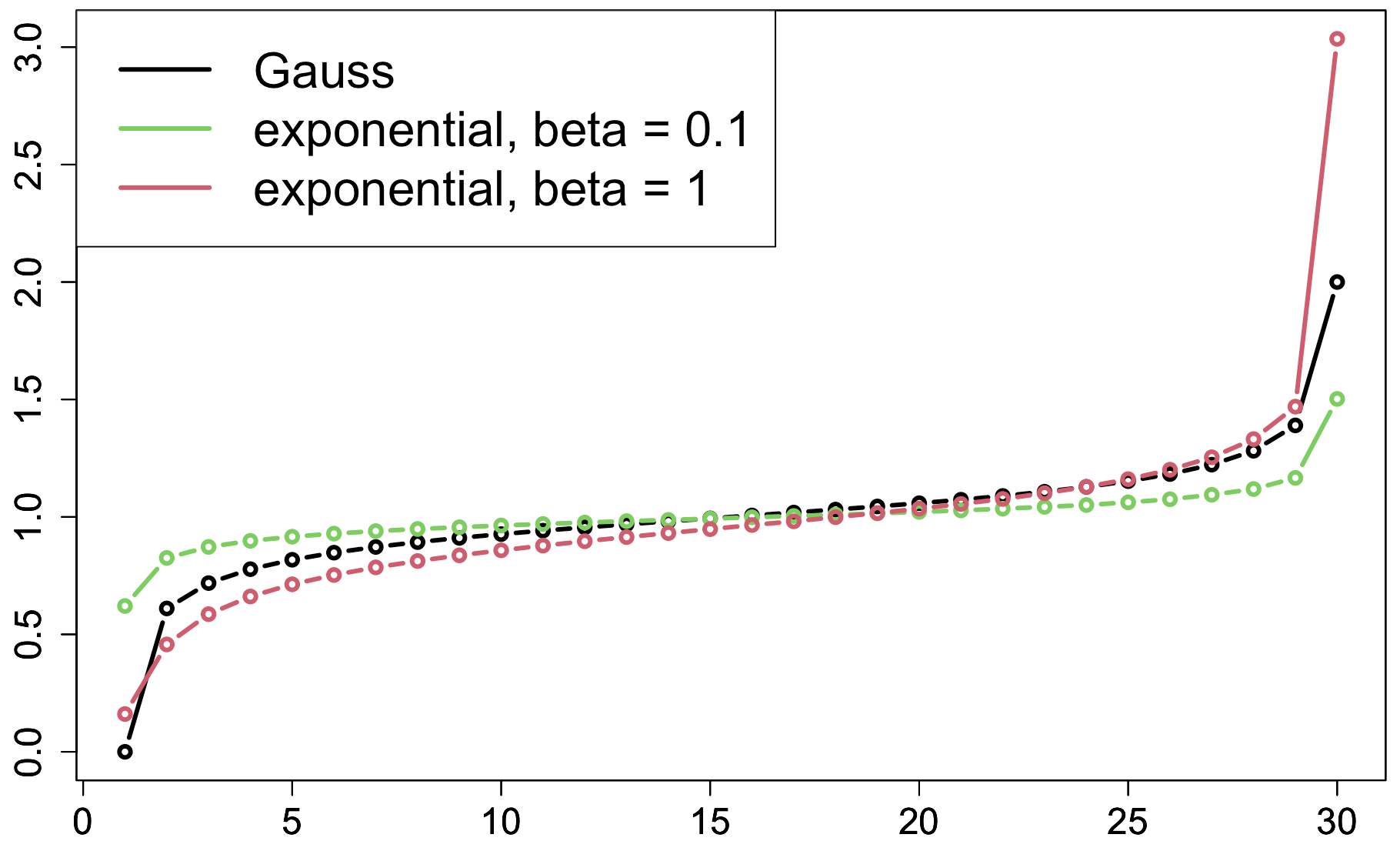}
\caption{Examples with $30$ data points in worst case complexity (no pruning) with Gaussian model (middle curve in black) and two examples with exponential model with two different penalty values (beta). We chose $Y = 1$.}\label{fig:worstcase}
\end{figure}

\newpage

\section{Proofs of Section \ref{sec:simple}}
\label{app:simple}

\subsection{Proof of Proposition \ref{prop:dual1D}}
\label{app:dual1D}

The primal Lagrangian function is given for a non-negative multiplier $\mu$ by the relation:
\begin{align*}
\mathcal{L}(\ttheta, \mu) &= q_t^s(\ttheta)+ \mu( q_t^s(\ttheta)- q_t^{r}(\ttheta) )\,,\\
&=  (t-s)A(\ttheta) - \ttheta\cdot \mathbf{S}_{st} + Q_s + \beta  - \mu \Big((s-r)A(\ttheta) - \ttheta\cdot \mathbf{S}_{rs}  + Q_{r} - Q_s\Big)\,, \\
&= \Big((t-s)-\mu(s-r)\Big)A(\ttheta) - \ttheta\cdot \Big(\mathbf{S}_{st}-\mu \mathbf{S}_{rs}\Big) + \beta + Q_s  - \mu  (Q_{r} - Q_s)\,.
\end{align*}
We get the critical point solving $\nabla_{\ttheta} \mathcal{L} (\ttheta, \mu) = 0$:
$$\ttheta^*(\mu) = (\nabla A)^{-1}\Big(\frac{ \mathbf{S}_{st} - \mu \mathbf{S}_{rs}}{(t-s) - \mu (s-r)}\Big) = (\nabla A)^{-1}\Big(\mathbf{m}(\mu)\Big)\,,$$
which leads to the proposed expression:
$$
\mathcal{L}(\ttheta^*(\mu), \mu) = \D(\mu) =  -\Big((t-s)-\mu(s-r)\Big) \mathcal{D^*}(\mathbf{m}(\mu))  + \beta + (1+\mu)Q_s  - \mu  Q_{r}\,,
$$
where $\mathcal{D^*}(x) = x \cdot (\nabla A)^{-1}(x) - A((\nabla A)^{-1}(x))$. We have the constraint:
$$\nabla A(\ttheta) = \frac{\mathbf{S}_{st} - \mu \mathbf{S}_{rs}}{(t-s) - \mu (s-r)} \in \mathcal{M}^o\,,$$
where $\mathcal{M}^o$ is an open convex set (\cite{brown1986fundamentals,wainwright2008graphical}) including the value $\mathbf{S}_{st}$ and thus $\mu = 0$ is in the definition domain and $\mu = \frac{t-s}{s-r}$ is the maximum feasible value (but it can be a smaller value, depending on the shape of $\mathcal{M}^o$). The result of proposition \ref{prop:dual1D} is obtained after the simple change of variable $\mu \to \mu \frac{t-s}{s-r}$.

\subsection{Proof of Theorem \ref{prop:dual1D_max_exact}}
\label{app:dual1D_max_exact}

We consider the decision function~\eqref{eq:decisionDual1D} defined on segment $[0, x_{max})$ with univariate data:
$$\mathbb{D}(x) = -\D^*\Big(\overline{S}_{st} + x\Delta \overline{S}_{rst}\Big) - \Big(\overline{Q}_{st} + x\Delta \overline{Q}_{rst}\Big) \,,$$
where we assume $\Delta \overline{S}_{rst} \ne 0$ in order to rule out the trivial linear case. Differentiating $\mathbb{D}$ with respect to $x$, we obtain:
$$ - \Delta \overline{S}_{rst}(\nabla A)^{-1}(\overline{S}_{st} + x\Delta \overline{S}_{rst}) - \Delta \overline{Q}_{rst} = 0\,.$$ 
The critical point corresponds to a maximum, since the decision function is concave. We thus obtain:
$$ x^\star = \frac{1}{\Delta \overline{S}_{rst}}\Big(\nabla A \Big(-\frac{\Delta \overline{Q}_{rst}}{\Delta \overline{S}_{rst}}\Big) - \overline{S}_{st}\Big)\,,$$ 
and 
\begin{align*}
\mathbb{D}(x^\star) &=  -\D^*\Big(\nabla A \Big(-\frac{\Delta \overline{Q}_{rst}}{\Delta \overline{S}_{rst}}\Big)\Big) - \Big(\overline{Q}_{st} + x^*\Delta \overline{Q}_{rst}\Big)\,,\\
&= -\nabla A \Big(-\frac{\Delta \overline{Q}_{rst}}{\Delta \overline{S}_{rst}}\Big) \Big(-\frac{\Delta \overline{Q}_{rst}}{\Delta \overline{S}_{rst}}\Big) + A\Big(-\frac{\Delta \overline{Q}_{rst}}{\Delta \overline{S}_{rst}}\Big) - \Big(\overline{Q}_{st} +  \frac{\Delta \overline{Q}_{rst}}{\Delta \overline{S}_{rst}}\Big(\nabla A \Big(-\frac{\Delta \overline{Q}_{rst}}{\Delta \overline{S}_{rst}}\Big) - \overline{S}_{st}\Big)\Big)\,,\\
&= A\Big(-\frac{\Delta \overline{Q}_{rst}}{\Delta \overline{S}_{rst}}\Big) - \Big(-\frac{\Delta \overline{Q}_{rst}}{\Delta \overline{S}_{rst}}\Big) \overline{S}_{st} - \overline{Q}_{st}\,.
\end{align*}
We notice that:

$$(t-s)\mathbb{D}(x^\star)= (t-s)A\Big(-\frac{\Delta \overline{Q}_{rst}}{\Delta \overline{S}_{rst}}\Big) - \Big(-\frac{\Delta \overline{Q}_{rst}}{\Delta \overline{S}_{rst}}\Big) S_{st} - (Q_t - Q_s)$$
and the pruning occurs when $\mathbb{D}(x^\star)> 0$ that is when:
$$q_t^s\Big(-\frac{\Delta \overline{Q}_{rst}}{\Delta \overline{S}_{rst}}\Big) = (t-s)A\Big(-\frac{\Delta \overline{Q}_{rst}}{\Delta \overline{S}_{rst}}\Big) - \Big(-\frac{\Delta \overline{Q}_{rst}}{\Delta \overline{S}_{rst}}\Big) S_{st} + Q_s + \beta  > Q_t + \beta\,.$$
If $x^{\star} <0$ or  $x^{\star} > x_{max}$, the decision function is evaluated in $0$ or $x_{max}$, respectively. In case $x^{\star} <0$, the test is exactly the PELT inequality pruning test: $q_t^s\Big((\nabla A)^{-1}(\overline{S}_{st})\Big)  > Q_t + \beta$.

\subsection{Dual for the change-in-mean-and-variance problem}
\label{app:meanVar}

We use the canonical form of the exponential family, for $(\theta_1, \theta_2) \in \R \times \R^-$:
\begin{equation}
c(y_{st}; [\theta_1; \theta_2]) =  (t-s)A(\theta_1,\theta_2) -  \theta_1 \Big(\sum_{i=s+1}^{t}y_i\Big) - \theta_2 \Big(\sum_{i=s+1}^{t}y_i^2\Big)\,,
\end{equation}
where 
$$A(\theta_1,\theta_2) =-\frac{\theta_1^2}{4\theta_2} + \frac{1}{2}\log\Big(-\frac{1}{2\theta_2} \Big)\,.$$
Its gradient is:
$$\nabla A (\theta) = \Big(-\frac{\theta_1}{2\theta_2}, \frac{\theta_1^2}{4\theta_2^2} - \frac{1}{2\theta_2}\Big)\,,$$
and$$\nabla A (\theta) = (u,v) \iff (\theta_1, \theta_2) = \Big(-\frac{u}{u^2-v},\frac{1}{2(u^2-v)}\Big)\,.$$
The minimum cost is then
\begin{equation}
c(y_{st}) =  \frac{t-s}{2}\Big(1 +  \log(v-u^2)\Big)\quad \hbox{with}\quad u = \overline{y}_{st}\quad v = \overline{y_{st}^2}\,.
\label{mincostMeanVar}
\end{equation}
We obtain the dual, considering the case $r<s$ (the case producing an efficient pruning) with constraint $q_t^{s} -q_t^{r}\le 0$:
$$
\D(\mu) =  \frac{1}{2}\Big((t-s)-\mu(s-r)\Big)\Bigg[1+ \log\Big( \frac{{S^2}_{st}-\mu{S^2}_{rs} }{(t-s)-\mu(s-r)}-\Big(\frac{{S}_{st}-\mu{S}_{rs}}{(t-s)-\mu(s-r)}\Big)^2\Big)\Bigg]$$
$$+ \beta + (1+\mu)Q_s  - \mu  Q_{r}\,.$$

The dual is transformed into its decision function:
$$
\mathbb{D}(x) =  \frac{1}{2}\Bigg[1+ \log\Big( \overline{S^2_{st}}+x\Delta\overline{S_{rst}^2}-(\overline{S}_{st}+ x\Delta\overline{S}_{rst})^2\Big)\Bigg]- \Big(\overline{Q}_{st} + x\Delta \overline{Q}_{rst}\Big)\,,$$
its domain is given by the non-negative $x$ satisfying the condition $\overline{S^2_{st}}+x\Delta\overline{S_{rst}^2}-(\overline{S}_{st}+ x\Delta\overline{S}_{rst})^2 > 0$, which we write after simplification as:

$$-(\Delta\overline{S}_{rst})^2x^2 + (V(y_{st}) - V(y_{rs}) -(\Delta\overline{S}_{rst})^2))x + V(y_{st}) > 0\,.$$

We have the following roots, between which the polynomial takes positive values:

\begin{align*}
x_{\pm} &= \frac{V(y_{st}) - V(y_{rs}) -(\Delta\overline{S}_{rst})^2}{2(\Delta\overline{S}_{rst})^2} \pm \sqrt{\frac{\Big(V(y_{st}) - V(y_{rs}) -(\Delta\overline{S}_{rst})^2\Big)^2 + 4(\Delta\overline{S}_{rst})^2V(y_{st})}{4(\Delta\overline{S}_{rst})^4}}\\
&= x_0 + \sqrt{x_0^2 + \frac{V(y_{st})}{(\Delta\overline{S}_{rst})^2}}\,,
\end{align*}
where 
$$x_0 = \frac{1}{2} \Big( \frac{V(y_{st}) - V(y_{rs})}{(\Delta\overline{S}_{rst})^2} - 1 \Big)\,.$$
To get the argument of the maximum, we differentiate $\mathbb{D}$ to search for the critical point, which yields:
$$ \frac{1}{2}\frac{\Delta\overline{S_{rst}^2}-2(\overline{S}_{st}+ x\Delta\overline{S}_{rst})}{\overline{S^2_{st}}+x\Delta\overline{S_{rst}^2}-(\overline{S}_{st}+ x\Delta\overline{S}_{rst})^2}- \Delta \overline{Q}_{rst} = 0$$
or 
$$ \Delta\overline{S_{rst}^2}-2(\overline{S}_{st}+ x\Delta\overline{S}_{rst})- 2\Delta \overline{Q}_{rst}(\overline{S^2_{st}}+x\Delta\overline{S_{rst}^2}-(\overline{S}_{st}+ x\Delta\overline{S}_{rst})^2) = 0\,.$$
After simple computations we get the desired result when $\Delta \overline{Q}_{rst} \ne 0$.

\section{Proofs of Section \ref{sec:duality}}
\label{app:duality}

\subsection{Proof of Theorem \ref{th:dualMultiple}}
\label{app:dualMultiple}

The proof is similar to the proof with a single constraint. We differentiate in $\ttheta$ the Lagrangian function $\mathcal{L}(\ttheta, \overline{\mu}) = q_t^s(\ttheta)+ \sum_{r \ne s}\overline{\mu}_{r}( q_t^s(\ttheta)- q_t^{r}(\ttheta) )$ with $q_t^r(\ttheta) = Q_r + (t-r)A(\ttheta) - \ttheta\cdot \mathbf{S}_{rt} + \beta$ and inject its solution $\ttheta^*(\overline{\mu}) = (\nabla A)^{-1}(\mathbf{m}(\overline{\mu}))$ to get $\D_{st}(\overline{\mu}) = \mathcal{L}(\ttheta(\overline{\mu}), \overline{\mu})$. The final result comes after the change of variable $\overline{\mu}_{r} = \mu_{r} \frac{t-s}{|s-r|}$. The decision function is obtained by the simple change of variable $x_r = \mu_r / l(\mu)$.

\subsection{Proof of Theorem \ref{th:noDualityGap}}
\label{app:noDualityGap}

For simplicity, we rename the indices as follows: $s \to 0$, $r_i \to i$ for $i=1,\ldots,d$, skip the $t$ variable, and move the upper indices to a lower position. Therefore we solve: $\min_{\ttheta\in \Theta} q_0(\ttheta)$ under constraints $q_0(\ttheta)- q_i(\ttheta) \le 0$ for all indices $i \in \{1,\ldots,d\}$. We also use the vector function notation $q = (q_1,\ldots, q_d)$.

The proof for strong duality relies on the following three lemmas.

\begin{lemma}
\label{lemma:convexityDerivative}
We consider a function $f:\R^d \to \R$ for which the evaluation at point $c \in \R^p$ can only be done through an auxiliary bijection $g: \R^p \to \R^p$ such that there exists $\ttheta_c \in \R^p$ with  $g(\ttheta_c) =c$. Function $f$ is convex if for any $a,b$ in the convex domain $\Omega$ of $f$ there exists a parametric curve $g_{ab}$ joining in a straight line the points $a$ to $b$ such that $g_{ab}(\ttheta_{(1-\alpha)a + \alpha b}) = (1-\alpha)a + \alpha b$ and with notation $h(\alpha) = \ttheta_{(1-\alpha)a + \alpha b}$ ($h: (0,1) \to \R^p$) we have:
$$\nabla (fg_{ab}h(0)) \cdot (h'(0)) \le fg_{ab}h(1)- fg_{ab}h(0)\,.$$
\end{lemma}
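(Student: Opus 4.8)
The plan is to see through the rather heavy notation and recognise that, once the composition $f\circ g_{ab}\circ h$ is unwound, the stated hypothesis is nothing more than the classical first-order (gradient) characterisation of convexity. The whole difficulty of this lemma is interpretative: after that identification, convexity follows from the textbook argument. So my first task is to translate the hypothesis, and my second is to run the standard derivation.

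First I would fix an arbitrary ordered pair $u,v\in\Omega$ and set $\phi_{uv}(\alpha)=(f\circ g_{uv}\circ h)(\alpha)$. Since by construction $g_{uv}(h(\alpha))=(1-\alpha)u+\alpha v$, this curve is just the restriction of $f$ to the segment joining $u$ and $v$, namely $\phi_{uv}(\alpha)=f((1-\alpha)u+\alpha v)$. The chain rule then gives $\nabla(f\circ g_{uv})(h(0))\cdot h'(0)=\phi_{uv}'(0)=\nabla f(u)\cdot(v-u)$, while $(f g_{uv} h)(1)-(f g_{uv} h)(0)=f(v)-f(u)$. Hence the lemma's inequality is exactly the subgradient inequality $\nabla f(u)\cdot(v-u)\le f(v)-f(u)$, holding for every $u,v\in\Omega$.

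Next I would carry out the standard step deducing convexity from this inequality. Take $a,b\in\Omega$ and $\lambda\in[0,1]$, and set $c=(1-\lambda)a+\lambda b$, which lies in $\Omega$ because the domain is convex. Instantiating the inequality with base point $c$ toward each endpoint yields $f(a)\ge f(c)+\nabla f(c)\cdot(a-c)$ and $f(b)\ge f(c)+\nabla f(c)\cdot(b-c)$. Forming $(1-\lambda)$ times the first plus $\lambda$ times the second, the gradient contributions cancel since $(1-\lambda)(a-c)+\lambda(b-c)=c-c=0$, leaving $(1-\lambda)f(a)+\lambda f(b)\ge f(c)$, which is precisely the definition of convexity.

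The one point that deserves care, and which I regard as the main (mild) obstacle, is that the hypothesis supplies the inequality only at the left endpoint $\alpha=0$, i.e.\ with the base point being the first argument of the pair, whereas the argument above needs it based at the interior point $c$. This is resolved by the universal quantifier ``for any $a,b$'': I simply apply the hypothesis to the pairs $(c,a)$ and $(c,b)$, for which it furnishes curves $g_{ca},g_{cb}$ and the corresponding inequalities. The bijection-and-curve apparatus itself plays no role in proving the lemma; it is present only so that, in the subsequent application to the object $\mathcal{O}$, the directional derivative $\phi_{uv}'(0)$ can be evaluated through the $\ttheta$-parametrisation even though $f$ (the objective value as a function of the constraint values) is defined only implicitly. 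Differentiability of $f$ along the relevant segments, tacit in the appearance of $\nabla$, holds there because the underlying cost functions are smooth.
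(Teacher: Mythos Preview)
Your proof is correct and follows the same idea as the paper: both recognise that, once one unwinds the composition $f\circ g_{ab}\circ h$, the hypothesis is exactly the classical first-order (subgradient) characterisation of convexity along every segment. Your write-up is in fact tidier than the paper's, since you carry out the implication in the direction the lemma actually requires (gradient inequality at every pair $\Rightarrow$ convexity, by applying it at the interior point $c$ toward both endpoints), whereas the paper's short proof writes the reverse manipulation and leaves the standard equivalence implicit.
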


It's a generalisation of a well-known result. Function $f: \R^p \to \R$ is convex if for all $a,b$ in its domain:
$\nabla f(a) \cdot (b - a) \le f(b)- f(a)$. When $g_{ab}$ is the identity, we return to this result.

\begin{proof}(of Lemma \ref{lemma:convexityDerivative}) The convexity of $f$ is the same as the convexity of $fgh$ by definition of $fgh$. Using the definition of convexity for $f g h$, we have for all $\alpha \in (0,1)$:
$$fgh(\alpha) \le (1-\alpha) fgh(0) + \alpha fgh(1)\quad \hbox{or} \quad fgh(\alpha) - fgh(0) \le \alpha (fgh(1) - fgh(0))\,.$$
Dividing by $\alpha$ and taking the limit $\alpha \to 0$: $\nabla (fgh(0)) \cdot (h'(0)) \le fgh(1)- fgh(0)$.
\end{proof}

\begin{lemma}
\label{lemma:intersection}
Consider the $d$ constraints.
Let $\ttheta_0$ be in $\R^d$ such that $(q_0-q)(\ttheta_0) = c$  with $c \in \R^p$, then $\ttheta_0 \in \{x^+ u + w, x^- u + w\}$ with $x^+, x^- \in \R$ and $u,w \in \R^p$. They are the two intersection points between a straight line and a level curve $(q_0-q_1) = c_1$.
\end{lemma}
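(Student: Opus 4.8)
The plan is to exploit the single structural feature that makes all $d$ constraints comparable: each difference $q_0 - q_i$ contains the nonlinear term $A$ only through a scalar multiple. This lets me turn $d-1$ of the defining equations into affine ones by cancelling $A$, collapse the level set to a straight line, and then read off at most two points by intersecting that line with the one remaining strictly convex level curve.

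First I would write the constraint differences explicitly. Using the additivity property~\eqref{eq:additive}, each difference takes the form
$$(q_0 - q_i)(\ttheta) = \lambda_i\, A(\ttheta) + \mathbf{b}_i\cdot\ttheta + \gamma_i\,, \qquad i = 1,\ldots,d\,,$$
where $\lambda_i \neq 0$ depends only on the segment lengths, $\mathbf{b}_i \in \R^d$ is (up to the scalar $\lambda_i$) the segment mean $\overline{\mathbf{S}}_{r_i s}$, and $\gamma_i \in \R$. The decisive point is that the only nonlinear contribution, $A(\ttheta)$, appears with scalar coefficient $\lambda_i$. Normalising the $i$-th equation by $\lambda_i$ and subtracting the first therefore cancels $A$:
$$\frac{(q_0 - q_i)(\ttheta)}{\lambda_i} - \frac{(q_0 - q_1)(\ttheta)}{\lambda_1} = \Big(\frac{\mathbf{b}_i}{\lambda_i} - \frac{\mathbf{b}_1}{\lambda_1}\Big)\cdot\ttheta + \mathrm{const}\,,$$
which is affine in $\ttheta$.

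Next, setting the full constraint map equal to $c$ and applying the same linear combinations to the right-hand side, these cancelled equations become $d-1$ linear equations in $\ttheta \in \R^d$. Assuming the $d-1$ coefficient vectors $\mathbf{b}_i/\lambda_i - \mathbf{b}_1/\lambda_1$ (equivalently, the differences of the segment means $\overline{\mathbf{S}}_{r_i s}$) are linearly independent, their common solution set is an affine line, which I parametrise as $\ttheta = x u + w$ with $u \in \R^d \setminus \{0\}$ spanning the orthogonal complement of those vectors and $w$ a particular solution. Substituting into the single remaining equation $(q_0 - q_1)(\ttheta) = c_1$ yields a scalar equation in $x$; since $A$ is strictly convex and $u \neq 0$, the map $x \mapsto A(xu + w)$ is strictly convex, so the equation has the form ``strictly convex in $x$ equals affine in $x$'' and admits at most two roots $x^+, x^-$. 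Equivalently, the line meets the level curve $(q_0 - q_1) = c_1$---the boundary of a convex sublevel set---in at most two points, which gives $\ttheta_0 \in \{x^+ u + w,\, x^- u + w\}$.

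The main obstacle is the nondegeneracy needed to reduce the level set to a genuine line: the argument requires the $d-1$ mean-difference vectors to be linearly independent. I would justify this in the configurations of interest (candidate segments with distinct, non-collinear empirical means) and observe that when independence fails the offending constraint is redundant and may be discarded, which lowers the effective dimension of the system while preserving the at-most-two-points conclusion that the final strictly convex intersection provides.
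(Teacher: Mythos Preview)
Your proposal is correct and follows essentially the same route as the paper: normalise each equation $(q_0-q_i)=c_i$ by its scalar coefficient on $A$, subtract to eliminate the nonlinear term, obtain $d-1$ affine equations whose solution set (under a general-position assumption on the segment means) is a line $xu+w$, and finally intersect that line with the strictly convex level curve $(q_0-q_1)=c_1$ to get at most two points. The only cosmetic difference is that the paper subtracts consecutive normalised equations ($i$ and $i{+}1$) rather than subtracting all from the first; the resulting linear systems are equivalent.
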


\begin{proof}(of Lemma \ref{lemma:intersection})
We have for $i=1,\ldots,p$:
$$(q_0-q_i)(\ttheta_0) =(r_i-s)A(\ttheta_0) - \ttheta_0\cdot \mathbf{S}_{r_is} + Q_s- Q_{r_i} = c_i\,,$$
or
$$A(\ttheta_0) - \ttheta_0\cdot \overline{\mathbf{S}}_{r_is} = \frac{c_i}{r_i-s} + \frac{Q_{r_i}- Q_s}{r_i - s}\,.$$
Combining equations with indices $i$ and $i+1$ for $i=1,\ldots,p-1$ to remove $A(\ttheta_0)$ we get:
$$\ttheta_0 \cdot( \overline{\mathbf{S}}_{r_{i+1}s} -  \overline{\mathbf{S}}_{r_is}) =  \frac{c_i}{r_i - s}-\frac{c_{i+1}}{r_{i+1}-s}+ \frac{Q_{r_i}-Q_s}{r_i - s}   -  \frac{Q_{r_{i+1}} - Q_s}{r_{i+1} - s}\,.$$
With the assumption that the points $\overline{\mathbf{S}}_{r_{i}s}$ are in general position (no redundant equation or unsolvable system), we get $\ttheta_0(x) = x u + w$ for potential solutions where $x \in \R$ is a parameter and $u,w \in \R^p$ are fixed values. As we considered that a solution exists and as $A$ is convex, this straight line intersects in two points the level curve $(q_0 - q_1)=c_1$ (counted with their multiplicity).
\end{proof}

\begin{remark}
\label{rem:sign_intersection_u}
Using relation $( \overline{\mathbf{S}}_{r_{i+1}s} -  \overline{\mathbf{S}}_{r_is})\cdot u = 0$, we easily get that for all indices $i$ and a fixed intersection value $\ttheta_0$ we have expressions $(\nabla A(\ttheta_0) -  \overline{\mathbf{S}}_{r_is} )\cdot u$ is a constant.
\end{remark}

\begin{lemma}
\label{lemma:OprojectionConvex}
The orthogonal projection of $\mathcal{O}$ on its last $d$ variables ($\mathcal{O} \cdot (\emptyset, \R^d)$) is a convex object.
\end{lemma}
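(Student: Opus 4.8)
The plan is to realise the projection $P:=\mathcal{O}\cdot(\emptyset,\R^{d})$ as the linear image of the convex \emph{epigraph} of $A$, from which convexity is immediate. First I would record the shape of the constraint coordinates. With the renamed indices, each component of the projection reads $(q_0-q_i)(\ttheta)=\lambda_i A(\ttheta)-\mathbf{v}_i\cdot\ttheta+\gamma_i$, where $\lambda_i=r_i-s$, $\mathbf{v}_i=\mathbf{S}_{r_is}$ and $\gamma_i=Q_s-Q_{r_i}$, exactly as in Lemma~\ref{lemma:intersection}. The decisive structural feature — the one behind the remark that the functions $q_t^j$ are ``similar'' — is that all $d$ coordinates share the \emph{same} convex function $A$, differing only through the scalar $\lambda_i$. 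Hence the map $\Phi(\ttheta)=\big((q_0-q_i)(\ttheta)\big)_{i=1}^d$ factors as $\Phi(\ttheta)=L\big(A(\ttheta),\ttheta\big)+\mathbf{c}_0$, where $\mathbf{c}_0=(\gamma_i)_i$ and $L:\R^{1+d}\to\R^{d}$ is the \emph{fixed linear} map $L(w,\ttheta)=w\,\Lambda-V^{T}\ttheta$, with $\Lambda=(\lambda_1,\dots,\lambda_d)^{T}$ and $V=[\mathbf{v}_1|\cdots|\mathbf{v}_d]$. Consequently $P-\mathbf{c}_0=L(\mathrm{graph}\,A)$.

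The core of the argument is to upgrade the graph to the epigraph, i.e.\ to show $L(\mathrm{graph}\,A)=L(\mathrm{epi}\,A)$. Writing $\mathrm{epi}\,A=\mathrm{graph}\,A+\R_{\ge0}\,(1,\mathbf{0})$ and using linearity of $L$ together with $L(1,\mathbf{0})=\Lambda$ gives $L(\mathrm{epi}\,A)=L(\mathrm{graph}\,A)+\R_{\ge0}\,\Lambda$. Thus the two images coincide exactly when $L(\mathrm{graph}\,A)$ is closed under adding nonnegative multiples of $\Lambda$. To establish this upward-closedness I would fix a graph point $c=A(\ttheta_0)\Lambda-V^{T}\ttheta_0$ and a scalar $s\ge0$, and move $\ttheta$ along the line $\ttheta_0+\mu d_0$ with $d_0=(V^{T})^{-1}\Lambda$, which exists because the general-position hypothesis of Lemma~\ref{lemma:intersection} makes $V$ invertible. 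Since $V^{T}d_0=\Lambda$, the displacement of the graph point collapses to the scalar equation $\xi(\mu)=s$, where $\xi(\mu)=A(\ttheta_0+\mu d_0)-A(\ttheta_0)-\mu$. One checks $\xi(0)=0$ and that $\xi$ is strictly convex (restriction of the strictly convex $A$ to a line, minus a linear term); coercivity at the ends of its feasible interval then forces $\xi$ to attain every value in $[\min\xi,\infty)\supseteq\{s\ge0\}$, producing a preimage of $c+s\Lambda$ in $\mathrm{graph}\,A$ and hence upward-closedness.

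Finishing is then immediate: $P-\mathbf{c}_0=L(\mathrm{graph}\,A)=L(\mathrm{epi}\,A)$ is the image of the convex set $\mathrm{epi}\,A$ under the linear map $L$, hence convex, so $P$ is convex. I would also point out that the sign of the $\lambda_i$ is never used, so mixed convex/concave constraints — in particular the concave case $r_i<s$ — are handled uniformly, matching the remark following Theorem~\ref{th:noDualityGap}.

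I expect the upward-closedness step, and specifically the coercivity of $\xi$, to be the main obstacle. When $\Theta=\R^{d}$ (e.g.\ Gaussian mean or Poisson) $A$ is coercive and the claim is routine; but for models with a restricted domain and logarithmic terms — such as the change-in-mean-and-variance cost, where $\Theta=\R\times\R^{-}$ — the line $\ttheta_0+\mu d_0$ may exit $\Theta$, and I must instead invoke the essential smoothness (steepness) of minimal exponential families so that $A\to+\infty$ at $\partial\Theta$, forcing $\xi\to+\infty$ at the endpoints of the admissible $\mu$-interval and thereby preserving $\min\xi\le\xi(0)=0$. A secondary point to make rigorous is the genericity assumption guaranteeing invertibility of $V$; degenerate configurations would require a separate limiting or lower-dimensional argument.
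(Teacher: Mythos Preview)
Your proposal is correct and takes a genuinely different route from the paper. The paper verifies the convexity definition directly: given $a,b\in P$, it invokes Lemma~\ref{lemma:intersection} to place potential preimages of $(1-\alpha)a+\alpha b$ on a line $x(\alpha)u+\alpha v+w$ in $\Theta$, then argues via an auxiliary function $f(\ttheta,\alpha)=(q_0-q_1)(\ttheta)-\alpha(b-a)-a$ and a short level-set argument that the line meets the required level curve for every $\alpha\in(0,1)$. Your approach is more structural: write $P-\mathbf{c}_0=L(\mathrm{graph}\,A)$, upgrade to $L(\mathrm{epi}\,A)$ by proving upward-closedness in the direction $\Lambda$, and conclude because a linear image of a convex set is convex. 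Both arguments ultimately reduce to solving a scalar equation obtained by restricting the strictly convex $A$ to a one-dimensional affine subspace --- your direction $d_0=(V^T)^{-1}\Lambda$ plays the same role as the paper's vector $u$ from Lemma~\ref{lemma:intersection}, and both rely on the same general-position hypothesis on the $\overline{\mathbf{S}}_{r_is}$. What your packaging buys is transparency: the convexity of $A$ enters exactly once, through $\mathrm{epi}\,A$, whereas the paper's version mixes the geometry of level sets with an extra variable $\alpha$.

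One refinement: full coercivity of $\xi$ (divergence at \emph{both} ends) is stronger than you need and can fail --- e.g.\ in the Poisson model with $d_0<0$ one has $\xi(\mu)\to-\infty$ as $\mu\to+\infty$. What suffices, and what always holds away from a measure-zero degeneracy, is that $\xi$ is \emph{unbounded above}: since $\xi$ is strictly convex with $\xi(0)=0$, it tends to $+\infty$ at at least one end of its domain (steepness handles finite boundary exits; on an unbounded side, the limiting slope of $g(\mu)=A(\ttheta_0+\mu d_0)$ cannot equal $1$ from both sides by strict monotonicity of $g'$). The intermediate-value argument then gives every $s\ge0$. You may want to state the needed property this way rather than as coercivity.
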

\begin{proof}(of Lemma \ref{lemma:OprojectionConvex})
Let $\ttheta_a$ and $\ttheta_b$ be in $\R^d$ such that $(q_0-q)(\ttheta_a) = a$ and $(q_0-q)(\ttheta_b) = b$ with $a,b \in \R^p$. We need to prove that for all $\alpha \in (0,1)$, there exists $\ttheta_{\alpha}$ such that $(q_0-q)(\ttheta_{\alpha}) = (1-\alpha)a + \alpha b$. Using Lemma \ref{lemma:intersection} there are only two points for each $\alpha$ (counting multiplicity) in $\mathcal{O} \cdot (\emptyset, \R^d)$ solving the equations: $x^-(\alpha) u + \alpha v +  w$ and $x^+(\alpha) u + \alpha v + w$. They intersect in $\alpha = 0$ and $\alpha = 1$ the level curve  $(q_0-q_1) = a$ and  $(q_0-q_1) = b$, respectively. We must show that the straight line also intersects $(q_0-q_1) = (1-\alpha)a + \alpha b$. We increase by one the size of the problem, considering $\alpha$ to be one of the variables. the plane $xu+\alpha v + w$ intersect $f(\ttheta,\alpha)=(q_0-q_1)(\ttheta) - \alpha(b-a) -a$ in $\alpha = 0$ and $\alpha = 1$. $f(\ttheta,\alpha) \le 0$ or $f(\ttheta,\alpha)  \ge 0$ is convex and bounded. Thus, the intersection also occurs for all $\alpha$ in $(0,1)$.
\end{proof}

{\bf Proof of the Theorem}: We need to prove the convexity of the epigraph, that is, for the lowest values of $q_0$ for points on the projection $\mathcal{O} \cdot (\emptyset, \R^d)$. This is not equivalent to the convexity of $q_0$ as the evaluation is done here on the initial parameter's functions (the constraint values) instead of $\ttheta$. Let $\ttheta_a$ and $\ttheta_b$ be in $\R^d$ and $a,b$ be in $\mathcal{O} \cdot (\emptyset, \R^d)$ such that $(q_0-q)(\ttheta_a) = a$ and $(q_0-q)(\ttheta_b) = b$.

We have:
$$h^-: \R \to \R^p, \alpha \mapsto x^-(\alpha)u + \alpha v + w\,,\quad h^+: \R \to \R^p, \alpha \mapsto x^+(\alpha)u + \alpha v + w\,.$$ 
such that $\ttheta_a \in \{h^-(0), h^+(0)\}$,  $\ttheta_b \in \{h^-(1), h^+(1)\}$ and $g(h^-(\alpha)) = g(h^+(\alpha)) = (1-\alpha)a+ \alpha b$. Function $h^-$ corresponds to the solution for $q_0$ returning the smallest value:
$$q_0(g_{ab}(h^-(\alpha)) \le q_0(g_{ab}(h^+(\alpha))\,.$$
$h^-$ and $h^+$ can be discontinuous, however $h^-$ is differentiable in $0$ (no jump in minimum value at $0$). Applying Lemma \ref{lemma:convexityDerivative} with $h = h^-$ (and $x = x^-$) we need to prove:
$$\nabla (fg_{ab}h(0)) \cdot (h'(0)) \le fg_{ab}h(1)- fg_{ab}h(0)\,.$$
That is:
\begin{equation}
\label{equationToSolve}
\Big(\nabla A(x(0)u + w) -  \overline{\mathbf{S}}_{st} \Big)\cdot \Big(\frac{dx}{d \alpha}(0) u + v \Big) \le \frac{q_0(\ttheta_b) - q_0(\ttheta_a)}{t-s}
\end{equation}

Notice that we decide to choose $u$ such that $\Big(\overline{\mathbf{S}}_{s_1s}  -  \overline{\mathbf{S}}_{st} \Big)\cdot u \ge 0$ (as $u$ can be replaced by $-u$ if we have the wrong sign).
We can also derive two interesting equalities. We consider the intersection between the line and the level curve (see Lemma \ref{lemma:intersection}), which also intersects the curves $q_0 = q_0(\ttheta_a)$ and  $q_0 = q_0(\ttheta_b)$. For $\alpha = 0$ and $\alpha = 1$ we get four relations:

$$\left\{
\begin{aligned}
&A(\ttheta_a) - \ttheta_a\cdot \overline{\mathbf{S}}_{st} + \frac{Q_{s}+\beta}{t- s}= \frac{q_0(\ttheta_a)}{t-s} \,,\\
&A(\ttheta_b) - \ttheta_b\cdot \overline{\mathbf{S}}_{st} + \frac{Q_{s}+\beta}{t- s}= \frac{q_0(\ttheta_b)}{t-s}  \,,\\
& A(\ttheta_a) - \ttheta_a\cdot \overline{\mathbf{S}}_{s_1s} - \frac{Q_{s_1} - Q_{s}}{s_1- s}= \frac{(q_0-q_1)(\ttheta_a)}{s_1-s}\,,\\
&A(\ttheta_b) - \ttheta_b\cdot \overline{\mathbf{S}}_{s_1s} - \frac{Q_{s_1} - Q_{s}}{s_1- s}= \frac{(q_0-q_1)(\ttheta_b)}{s_1-s}\,,
\end{aligned}
\right.$$
leading to relation:
\begin{equation}
\label{equation1ab}
(\overline{\mathbf{S}}_{s_1s}  - \overline{\mathbf{S}}_{st} )\cdot (\ttheta_b - \ttheta_a)= \frac{q_0(\ttheta_b) - q_0(\ttheta_a)}{s_1-s} - \frac{(q_0-q_1)(\ttheta_b) - (q_0-q_1)(\ttheta_a)}{s_1-s}\,.
\end{equation}
The second relation is the dynamic equation for points $x^{\pm}(\alpha)u + \alpha v + w$. We use relations:
$$(q_0-q_1)(x(0)u + w) = (q_0-q_1)(\ttheta_a)$$
and 
$$(q_0-q_1)(x(\epsilon)u + \epsilon v + w) = (1-\epsilon)(q_0-q_1)(\ttheta_a) + \epsilon (q_0-q_1)(\ttheta_b)$$
We derive, taking the difference with $\epsilon \to 0$:
\begin{equation}
\label{equation2ab}
\Big(\nabla A(x(0)u + w) -  \overline{\mathbf{S}}_{s_1s} \Big)\cdot(\frac{dx}{d\alpha}(0) u + v)  = \frac{(q_0-q_1)(\ttheta_b) - (q_0-q_1)(\ttheta_a)}{s_1-s}\,.
\end{equation}
Using \eqref{equation1ab} and \eqref{equation2ab} we can now reformulate \eqref{equationToSolve} as:
$$\Big(\overline{\mathbf{S}}_{s_1s}  -  \overline{\mathbf{S}}_{st} \Big)\cdot(\frac{dx}{d\alpha}(0) u + v)  \le (\overline{\mathbf{S}}_{s_1s}  - \overline{\mathbf{S}}_{st} )\cdot (\ttheta_b - \ttheta_a)\,,$$
or 
$$\frac{dx}{d\alpha}(0) \Big(\overline{\mathbf{S}}_{s_1s}  -  \overline{\mathbf{S}}_{st} \Big)\cdot u  \le (x(1)-x(0))(\overline{\mathbf{S}}_{s_1s}  - \overline{\mathbf{S}}_{st} )\cdot u\,.$$
As we set $\Big(\overline{\mathbf{S}}_{s_1s}  -  \overline{\mathbf{S}}_{st} \Big)\cdot u \ge 0$, this gives us the relation:
$$\nabla x(0) \cdot (1 - 0) \le x(1)- x(0)\,,$$
to be proven. This is true if $\alpha \mapsto x(\alpha)$ is convex. Generalising Equation \eqref{equation2ab} for all $\alpha \in (0,1)$ we get:
$$\Big(\nabla A(x(\alpha)u + \alpha v + w) -  \overline{\mathbf{S}}_{s_1s} \Big)\cdot(\frac{dx}{d\alpha}(\alpha) u + v)  = \frac{(q_0-q_1)(\ttheta_b) - (q_0-q_1)(\ttheta_a)}{s_1-s}\,,$$
and differentiating in $\alpha$:
$$h'(\alpha) \cdot \Big(\nabla A^2(h(\alpha))\Big)\cdot h'(\alpha)  = - \frac{d^2x}{d\alpha^2}(\alpha) \Big(\nabla A(h(\alpha)) -  \overline{\mathbf{S}}_{s_1s} \Big)\cdot u \,.$$
However, $\Big(\nabla A(h(\alpha)) -  \overline{\mathbf{S}}_{s_1s} \Big)\cdot u < 0$, knowing the choice of $u$ we made and Remark~\ref{rem:sign_intersection_u}. We exclude the nullity case as it is possible only for a point on the boundary of $\mathcal{O} \cdot (\emptyset, \R^d)$. We can restrict points $a$ and $b$ to be in the interior and still prove the same result. Using the convexity of $A$, we obtain $\frac{d^2x}{d\alpha^2}(\alpha) > 0$ and we find that $\alpha \mapsto x(\alpha)$ is convex.

\subsection{Proof of Proposition~\ref{prop:decisionMD}}

The gradient of $\D^*$ is equal to $(\nabla A)^{-1}$ which leads to relation:
$$\nabla \D^*\Big(\overline{\mathbf{S}}_{st} + \sum_{r \ne s} x_{r} \Delta \overline{\mathbf{S}}_{rst}\Big) = (\Delta \overline{\mathbf{S}}_{\bullet st})^{T}\Big( (\nabla A)^{-1}\Big(\overline{\mathbf{S}}_{st} + \sum_{r \ne s} x_{r} \Delta \overline{\mathbf{S}}_{rst}\Big)\Big)\,.$$
We introduce notation $\mathbf{y} =  (\nabla A)^{-1}\Big(\overline{\mathbf{S}}_{st} + \sum_{r \ne s} x_{r} \Delta \overline{\mathbf{S}}_{rst}\Big)$ which leads to the first system of equations: $(\Delta \overline{\mathbf{S}}_{\bullet st}) \, \mathbf{x}  = \nabla A(\mathbf{y}) - \overline{\mathbf{S}}_{st}$. With such notation, the critical point (the maximum, function $\mathbb{D}$ being concave), is the solution of $\nabla \mathbb{D}(\mathbf{x}) = 0$ and gives:
$$-(\Delta \overline{\mathbf{S}}_{\bullet st})^{T}\mathbf{y} - \Delta \overline{Q}_{\bullet st} = 0\,.$$

\section{Quadratic cost function}
\label{app:quadratic}

\subsection{The General case}

We consider the following functions of the type:
$$q_t^{s}(\theta_1,\theta_2) = A_{st}\theta_1^2 + 2B_{st}\theta_1  \theta_2 + C_{st} \theta_2^2 + 2D_{st} \theta_1 + 2E_{st} \theta_2 +  F_{st}\,.$$
A natural restriction is to study the case when all the cost functions have a unique finite minimum value and a unique argument for the minimum. Thus, we can define the cost of a segment and associate with it a non-ambiguous parameter value. For any function $q_t^{k}$ it is equivalent to condition $A_{kt} C_{kt} -  B_{kt}^2> 0$ with  $A_{kt} > 0$ (such functions are then strictly convex).
With the Lagrangian with one constraint given by inequality $q_t^{s}-q_t^{r} \le 0$ (always considering that $r<s$, unless otherwise specified) we obtain the coefficients:
$$A(\mu) = A_{st} + \mu(A_{st} - A_{rt})\,,\,B( \mu) = B_{st} +\mu(B_{st} - B_{rt})\,\ldots$$
After computation, we have the following dual function given by the expression
\begin{equation}
\label{dual_quadratic}
\D(\mu) = \frac{2B( \mu)D(\mu)E(\mu) - A(\mu)E^2(\mu)-C(\mu)D^2(\mu)}{A( \mu)C(\mu)-B^2(\mu)} + F(\mu)\,.
\end{equation}

Suppose there is no possible reduction of the rational function defining the dual. When the underlying process generating the time-series and therefore the functions $q_t^k$ is continuous, the possibility of a reduction is certainly an event of measure zero. Thus, we are looking for the first positive value $\mu_{max}$ such that $A(\mu_{max})C(\mu_{max})-B^2(\mu_{max})=0$. This leads to the following result.

\begin{proposition}
The maximal value $\mu_{max}$ of the dual function given by Equation \eqref{dual_quadratic}, if no possible reduction of the fraction, is the smallest positive root of:
$$A(\mu)C(\mu)-B^2(\mu) = (\omega_1^2 - 2\Delta + \omega_2^2)\mu^2 - 2(\Delta-\omega_1^2)\mu + \omega_1^2= 0\,,$$
with
$$\omega_1^2 = A_{st}C_{st}-B_{st}^2\,,\,\, 2\Delta = A_{st}C_{rt}+A_{rt}C_{st} - 2B_{st}B_{rt}\,\, \hbox{and}\,\, \omega_2^2 = A_{rt}C_{rt}-B_{rt}^2\,.$$
We have $\omega_1^2 >0$ and $\omega_2^2 > 0$ and we consider that $\omega_1^2 \ne \omega_2^2$. The discriminant is also positive: $4(\Delta^2 - \omega_1^2 \omega_2^2) > 0$.
The maximal value is then given by:
\begin{equation}
\label{eq:quadratic_mu_max}
\mu_{max} = \left\{\begin{aligned}
& \frac{\Delta - \omega_1^2 - \sqrt{\Delta^2 - \omega_1^2 \omega_2^2}}{\omega_1^2 - 2\Delta + \omega_2^2}\,,\quad &\hbox{if}\quad   2\Delta > \omega_1^2 + \omega_2^2 & \,,\\
& &\hbox{or}\quad  2\Delta < \omega_1^2 + \omega_2^2\,, \,&\omega_1^2 < \Delta  \,,\\
&\frac{\omega_1^2}{\omega_2^2 -\omega_1^2}\,,\quad & \hbox{if}\quad  2\Delta = \omega_1^2 + \omega_2^2\,,\,& \omega_1^2 < \Delta  \,,\\
&+\infty\,, \quad &\hbox{if}\quad  2\Delta \le \omega_1^2 + \omega_2^2\,,& \, \omega_1^2 > \Delta\,.\\
\end{aligned}
\right.
\end{equation}

\end{proposition}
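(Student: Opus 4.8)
The plan is to recognise that, since no cancellation occurs in the rational function~\eqref{dual_quadratic}, the dual ceases to be finite exactly where its denominator $A(\mu)C(\mu)-B^2(\mu)$ vanishes (equivalently, where the Lagrangian $\mathcal{L}(\cdot,\mu)$ loses strict convexity in $\ttheta$ and its minimum is no longer attained). At $\mu=0$ this denominator equals $A_{st}C_{st}-B_{st}^2=\omega_1^2>0$, by strict convexity of $q_t^s$, so the dual is well defined on a right-neighbourhood of $0$ and $\mu_{max}$ is precisely the smallest positive zero of $\mu\mapsto A(\mu)C(\mu)-B^2(\mu)$. Thus the statement splits into two tasks: identifying this map as the displayed quadratic, and locating its smallest positive root.

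First I would carry out the (routine) expansion. Substituting $A(\mu)=A_{st}+\mu(A_{st}-A_{rt})$, $B(\mu)=B_{st}+\mu(B_{st}-B_{rt})$, $C(\mu)=C_{st}+\mu(C_{st}-C_{rt})$ and collecting powers of $\mu$, the constant term is $A_{st}C_{st}-B_{st}^2=\omega_1^2$, the $\mu^2$-coefficient is $(A_{st}-A_{rt})(C_{st}-C_{rt})-(B_{st}-B_{rt})^2=\omega_1^2-2\Delta+\omega_2^2$, and the $\mu$-coefficient collapses, using $2\Delta=A_{st}C_{rt}+A_{rt}C_{st}-2B_{st}B_{rt}$, to $2\omega_1^2-2\Delta=-2(\Delta-\omega_1^2)$. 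This yields the quoted polynomial, and a direct computation gives the discriminant $4(\Delta-\omega_1^2)^2-4(\omega_1^2-2\Delta+\omega_2^2)\omega_1^2=4(\Delta^2-\omega_1^2\omega_2^2)$, which is positive by hypothesis, so there are two distinct real roots. With $\omega_1^2>0$ and $\omega_2^2>0$ holding by strict convexity of $q_t^s$ and $q_t^r$, the roots are
$$\mu_{\pm}=\frac{(\Delta-\omega_1^2)\pm\sqrt{\Delta^2-\omega_1^2\omega_2^2}}{\omega_1^2-2\Delta+\omega_2^2}\,.$$

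Next I would split on the sign of the leading coefficient $a:=\omega_1^2-2\Delta+\omega_2^2=(\omega_1^2+\omega_2^2)-2\Delta$, using that the product of the roots is $\omega_1^2/a$ and their sum is $2(\Delta-\omega_1^2)/a$. When $a<0$ (that is $2\Delta>\omega_1^2+\omega_2^2$) the product is negative, so exactly one root is positive, and it is the smallest positive root trivially. When $a>0$ (part of $2\Delta<\omega_1^2+\omega_2^2$) the product is positive, so both roots share the sign of $\Delta-\omega_1^2$: if $\Delta>\omega_1^2$ both are positive and the smallest is $\mu_-$, whereas if $\Delta<\omega_1^2$ both are negative and no positive root exists, giving $\mu_{max}=+\infty$. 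When $a=0$ (that is $2\Delta=\omega_1^2+\omega_2^2$) the quadratic degenerates to the linear equation $-2(\Delta-\omega_1^2)\mu+\omega_1^2=0$, whose root $\omega_1^2/\big(2(\Delta-\omega_1^2)\big)=\omega_1^2/(\omega_2^2-\omega_1^2)$ is positive iff $\omega_2^2>\omega_1^2$, i.e.\ $\Delta>\omega_1^2$, and otherwise $\mu_{max}=+\infty$; the excluded degenerate subcase $\Delta=\omega_1^2$ is ruled out by the hypothesis $\omega_1^2\neq\omega_2^2$. Collecting these outcomes reproduces exactly the three branches of~\eqref{eq:quadratic_mu_max}.

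The main obstacle is the sign bookkeeping in the previous step: verifying in each regime which root is the smallest positive one, and checking that no sign pattern forbidden by the positive-discriminant hypothesis slips through. The clean way to handle this, which I would use throughout, is the identity
$$(\Delta-\omega_1^2)^2-(\Delta^2-\omega_1^2\omega_2^2)=\omega_1^2\,(\omega_1^2+\omega_2^2-2\Delta)=\omega_1^2\,a\,.$$
This ties the comparison of $\lvert\Delta-\omega_1^2\rvert$ with $\sqrt{\Delta^2-\omega_1^2\omega_2^2}$ directly to the sign of $a$: when $a<0$ it forces $\lvert\Delta-\omega_1^2\rvert<\sqrt{\Delta^2-\omega_1^2\omega_2^2}$, so that the numerator of $\mu_-$ is negative while that of $\mu_+$ is positive, and dividing by $a<0$ confirms $\mu_-$ is the unique positive root; the same identity shows $a>0$ is incompatible with $\Delta=\omega_1^2$ under a positive discriminant, closing the case analysis.
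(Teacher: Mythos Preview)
Your root-location argument is essentially the paper's: both split on the sign of $a:=\omega_1^2-2\Delta+\omega_2^2$ and read off the signs of the roots from the Vieta product $\omega_1^2/a$ and (half-)sum $(\Delta-\omega_1^2)/a$. Your identity $(\Delta-\omega_1^2)^2-(\Delta^2-\omega_1^2\omega_2^2)=\omega_1^2\,a$ is a more explicit way to pin down which of $\mu_\pm$ is positive in the $a<0$ regime, but it is not a different strategy.

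There is one genuine gap. The positivity of the discriminant $\Delta^2-\omega_1^2\omega_2^2$ is \emph{not} a hypothesis of the proposition; it is one of its claims (on a par with ``We have $\omega_1^2>0$ and $\omega_2^2>0$''), and the paper's proof opens by establishing it. Your phrase ``which is positive by hypothesis'' therefore skips a required step, and your subsequent case analysis (two distinct real roots, the sign dichotomies, your closing identity) all rest on it. The paper derives it from the factorisation
\[
\Delta^2-\omega_1^2\omega_2^2=A_{st}C_{st}\Bigl(A_{st}C_{st}(d_C-d_A)^2+4B_{st}^2\,d_Ad_C\Bigr),\qquad d_A=\frac{B_{st}}{B_{rt}}-\frac{A_{st}}{A_{rt}},\quad d_C=\frac{B_{st}}{B_{rt}}-\frac{C_{st}}{C_{rt}},
\]
and then combines $A_{st}C_{st}>B_{st}^2$ with $(d_C-d_A)^2\ge-4\,d_Ad_C$ (equivalently $(d_C+d_A)^2\ge0$) to conclude that the bracket is positive. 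You need to supply this, or an equivalent argument, before invoking real distinct roots.
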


\begin{proof}
First, we propose that the discriminant is positive. We can write:
$$\Delta^2 - \omega_1^2 \omega_2^2 = A_{st}C_{st}\Big( A_{st}C_{st}(d_C-d_A)^2 + 4B^2_{st}d_Ad_C\Big)\,,$$
with $$d_A = \frac{ B_{st}}{B_{rt}} - \frac{A_{st}}{A_{rt}} \quad \hbox{and} \quad d_C = \frac{B_{st}}{B_{rt}} - \frac{C_{st}}{C_{rt}}\,.$$
As we have $A_{st}C_{st}>B^2_{st}$ and $(d_C-d_A)^2 > -4d_Ad_C$, multiplying these two inequalities give us a positive determinant. Second, we determine the smallest positive root by studying the sign of the product of the roots, which is equal to $\omega_1^2 (\omega_1^2 - 2\Delta + \omega_2^2)^{-1}$. With $2\Delta > \omega_1^2 + \omega_2^2$, the product is negative and the biggest root is the only one positive. If on the contrary $2\Delta < \omega_1^2 + \omega_2^2$, the roots have the same sign and the sign of $\Delta - \omega_1^2$ determines whether both roots are positive (in that case the smallest root is the solution) or both negative (in that case $\mu_{max}$ is infinite). The degenerated case with no quadratic term is obvious.
\end{proof}

It is also possible to consider the  quadratic form in dimension $p$ more than $2$. However, determining the boundary for $\mu$ is challenging to determine explicitly: this is the first mu value such that $\det(A-\mu B)=0$ with $A$ and $B$ two $p \times p$ matrices. 

\subsection{Changes in simple regression}

A direct application of the previous result is the problem of detecting a change point in simple regression. In this setting, we gather two-dimensional $(x_t,y_t)$ points over time. At a change location, the linear bound linking $x_t$ and $y_t$ changes. To be specific, this corresponds to a model with data points $(x_t,y_t)$ generated by a succession of simple linear models (choosing $x_t$ and then getting the response $y_t$ through the regression)
$$y_t = a_i x_t + b_i + \epsilon_t\,,\quad t=\tau_{i}+1,\dots,\tau_{i+1}\,,\quad i = 0,\dots,K\,,$$
with $t \mapsto a_t$ and  $t \mapsto b_t$ piecewise constant time series and $\epsilon_t \sim \mathcal{N}(0,\sigma^2)$ identically and independently distributed. The vector $(\tau_1,\ldots,\tau_K)$ of strictly increasing integers subdivides the time-series into $K+1$ consecutive segments with natural notations $\tau_{0}=0$ and $\tau_{K+1}=n$. Using the maximum likelihood approach, we see that this leads to considering the cost function $q_t^s(\theta_1, \theta_2) = Q_s + c(y_{st}; [\theta_1; \theta_2]) + \beta$ with
\begin{equation*}
\begin{split}
c(y_{st}; [\theta_1; \theta_2]) &= \sum_{j = s+1}^{t} (y_j - (\theta_1 x_{j} +\theta_2))^2 \,,\\
& = (t - s) \Big( \overline{x_{st}^2} \theta_1^2+ 2 \overline{x}_{st}\theta_1 \theta_2  + \theta_2^2 - 2\overline{(xy)_{st}}\theta_1 - 2 \overline{y}_{st} \theta_2  + \overline{y_{st}^2} \Big)\,.
\end{split}
\end{equation*}
Identifying the coefficients term by term, we introduce the notations:
$$q_t^s(\theta_1,\theta_2) = A_{st}\theta_1^2 +2B_{st}\theta_1\theta_2 + C_{st} \theta_2^2 + 2 D_{st} \theta_1 + 2E_{st}\theta_2 + F_{st}\,.$$
Writing as previously $A(\mu) = A_{st} + \mu(A_{st} - A_{rt}),\ldots$ we obtain the dual function in expression \eqref{dual_quadratic}.

\section{Examples of decision functions}
\label{app:exampleDual}

We consider the function $q_t^s$ constrained by the function $q_t^{r}$. We recall the shape of the decision function:
$$\mathbb{D}_{st}(\mathbf{x}) 
= -\D^*\Big(\overline{\mathbf{S}}_{st} + \sum_{r \ne s} x_{r} \Delta \overline{\mathbf{S}}_{rst}\Big) - \Big(\overline{Q}_{st} + \sum_{r \ne s} x_r \Delta \overline{Q}_{rst}\Big)\,.$$
Convex functions $\mathcal{D^*}$ as well as their associated domain $\Omega_{\mathbf{x}}$ are distribution-dependent. The domain is also data-dependent. We give their form on many examples from the exponential family in Table~\ref{dual_table}. Function $\mathcal{D^*}$ can be expressed through the log-partition function $A$ as $\mathcal{D^*}(x) = x (\nabla A)^{-1}(x) - A((\nabla A)^{-1}(x))$. We consider the single-constraint case with support $[0,x_{max})$ and notation $\D^*(\overline{\mathbf{S}}_{st} + x \Delta \overline{\mathbf{S}}_{rst}) = \D^*(\sigma_1 + x (\sigma_1-\sigma_2))$ in Table~\ref{dual_table}. Notice that we address here the case $r<s$. When $r>s$, the bound is always $x_{max} = 1$. Details for computing $\mathcal{D^*}$ are given in Table~\ref{dual_A}. More details on the properties of $\mathcal{D^*}$ can be found in Chapter 3 of \cite{wainwright2008graphical}.

\begin{table}[!ht]
\caption{Univariate case. Distribution, their corresponding $\mathcal{D}^{\star}$ function and maximum $x$ value in case $r < s$.\label{dual_table}}
\begin{tabular*}{\columnwidth}{@{\extracolsep\fill}llll@{\extracolsep\fill}}
Distribution &  Function $x \mapsto \mathcal{D^*}(x)$ & $x_{max}$ ($\sigma_1 < \sigma_2$) & $x_{max}$ ($\sigma_1 > \sigma_2$)  \\
 \midrule
Gauss & $\frac{1}{2}x^2$ & $+ \infty$  & $+ \infty$   \\
Exponential & $-\log x - 1$ & $-\frac{\sigma_1}{\sigma_1 - \sigma_2}$ & $+\infty$ \\
Poisson & $x(\log x - 1)$& $-\frac{\sigma_1}{\sigma_1 - \sigma_2}$  & $+\infty$ \\
Geometric & $(x-1)\log (x-1) - x\log x$ &  $-\frac{\sigma_1-1}{\sigma_1 - \sigma_2}$ &  $+\infty$ \\
Bernoulli/Binomial & $ x\log x + (1-x)\log(1-x)$ & $-\frac{\sigma_1}{\sigma_1 - \sigma_2}$ & $\frac{1-\sigma_1}{\sigma_1 - \sigma_2}$  \\
Negative Binomial & $ x\log x - (1+x)\log(1+x)$  &  $-\frac{\sigma_1}{\sigma_1 - \sigma_2}$ & $+\infty$ \\
Variance & $-\frac{1}{2}(\log x + 1)$& $-\frac{\sigma_1}{\sigma_1 - \sigma_2}$ & $+\infty$ \\
 \midrule
\end{tabular*}
%\begin{tablenotes}%
In the Gaussian case, data is standardised: divided by the estimated standard deviation. In binomial and negative binomial cases, data is divided by the estimated number of trials and successes (respectively) so that the logarithmic values can be computed.
%\end{tablenotes}
\end{table}

\begin{remark}
From Table \ref{dual_table} we can easily write down the decision functions in an independent multivariate setting. In that case, we sum over all dimensions the  function $\mathcal{D}^{\star}$ obtained for each univariate time series.
\end{remark}

\begin{table}[H]
\caption{Functions $A$ and $(\nabla A)^{-1}$ used to compute $\mathcal{D^*}(x) = x (\nabla A)^{-1}(x) - A((\nabla A)^{-1}(x))$ for some standard distributions in exponential family.\label{dual_A}}
\begin{tabular*}{\columnwidth}{@{\extracolsep\fill}lllll@{\extracolsep\fill}}
Distribution &  $y \mapsto A(y)$ & $x \mapsto (\nabla A)^{-1}(x)$ & $x \in \Omega$  \\
 \midrule
Gauss & $\frac{1}{2}y^2$ & $x$  & $\Omega=\R$ \\
Exponential & $-\log(-y)$ & $-\frac{1}{x}$   & $\Omega= (0, +\infty)$ \\
Poisson & $\exp(y)$ & $\log(x)$   & $\Omega= (0, +\infty)$ \\
Geometric & $-\log(e^{-y}-1)$ & $\log\Big( \frac{x-1}{x}\Big)$   & $\Omega=(1, +\infty)$  \\
Bernoulli/Binomial & $\log(1+e^y)$ &  $\log\Big( \frac{x}{1-x}\Big)$   & $\Omega=(0, 1)$\\
Negative Binomial & $-\log(1-e^y)$  & $\log\Big( \frac{x}{1+x}\Big)$    & $\Omega=(0, +\infty)$ \\
Variance & $-\frac{1}{2}\log(-2y)$  & $-\frac{1}{2x}$    & $\Omega=(0, +\infty)$ \\
 \midrule
\end{tabular*}
\end{table}

For the sake of completeness, we provide details about the limiting cases, defined by the limit values for $\overline{\mathbf{S}}_{st}$ or $\overline{\mathbf{S}}_{rs}$, with $\overline{\mathbf{S}}_{st} \ne \overline{\mathbf{S}}_{rs}$, on the boundary of $\Omega$. These limit values are denoted $\partial \Omega$ (equal to $0$ or $1$ with the given distributions in Table~\ref{dual_A}). If $\overline{\mathbf{S}}_{st}=\partial \Omega$, then $x_{max} = 0$ for all distributions (except for Gauss). If $\overline{\mathbf{S}}_{rs}=\partial \Omega$, then $x_{max} = +\infty$ except in the case Bernoulli/Binomial.

\begin{remark}
If $\overline{\mathbf{S}}_{st} = \overline{\mathbf{S}}_{rs}= \mathbf{S} \in \Omega \cup \partial \Omega$ for all $r$ indices used, the decision function is linear with maximum in $x = 0$ or $x = +\infty$, depending on the sign of the slope.
\end{remark}

\section{Additional simulations}
\label{app:add_simus}

\textcolor{violet}{To be determined. This section will be updated post-publication.}

\end{appendices}

\section*{Competing interests}
No competing interest is declared.

%\section*{Author contributions statement}

%V.R. conceived the project, developed the methodology for pruning in duality and wrote the manuscript (Sections 1,2,3,4 and Appendices). S.Q. designed and implemented the software package, ran the simulations and wrote Section 6 (Simulation study) of the manuscript. C.T. wrote application section 6 and proofread the paper.

\bibliographystyle{plain}
\bibliography{reference}

\end{document}